\theoremstyle{plain}
\newtheorem{thm}{Theorem}[section]
\numberwithin{equation}{section}
\numberwithin{figure}{section}
\newcommand{\indep}{\perp \!\!\! \perp}
\newcommand{\lyxdot}{.}
\begin{document}
	
\title{Nonparametric Instrumental Variables Estimation Under Misspecification}
\author{Ben Deaner, University College London\thanks{bendeaner@gmail.com.
We thank Whitney Newey, Anna Mikusheva and Jerry Hausman
for indispensable advice. We thank attendees of this paper's sessions in the 2019 North American and European Summer
Meetings of the Econometric Society. We thank anonymous referees for helpful comments. We gratefully acknowledge the support of the MIT economics department, the Cowles Foundation, and the UCL department of Economics.}}

\maketitle
\begin{abstract}
	
Nonparametric Instrumental Variables (NPIV) analysis is based on a conditional moment restriction. We show that if this moment condition is even slightly misspecified, say because instruments are not quite valid, then NPIV estimates can be subject to substantial asymptotic error and the identified set under a relaxed moment condition may be large. Imposing strong a priori smoothness restrictions mitigates the problem but induces bias if the restrictions are too strong. In order to manage this trade-off we develop a methods for empirical sensitivity analysis and apply them to the consumer demand data previously analyzed in \citet{Blundell2007} and \citet{Horowitz2011}.
\end{abstract}


Instrumental validity can be difficult to defend. The assumption rules out not only confounding between outcomes and instruments, but also any direct causal effect of the instruments on outcomes. Thus instruments may not be valid even if they are assigned by an ideal randomized experiment. To justify the use of instrumental variables (IV) methods, applied researchers usually argue not that instrumental validity holds exactly, but that any deviation from validity is small.\footnote{See \citet{Conley2008} for further discussion.} `Small' does not imply `inconsequential', and so it is important to assess the sensitivity of IV methods to a modest deviation from full instrumental validity.

In this work we consider the sensitivity of Nonparametric Instrumental Variables (NPIV) analysis  (\citet{Newey2003},
\citet{Ai2003}) to misspecification, particularly that which results from invalid instruments. NPIV generalizes the linear IV model to allow for a flexible, non-linear `structural function'. The structural function is the key object of interest and typically has a causal interpretation. Identification and estimation is based on a conditional moment restriction. If instruments are invalid then the moment condition generally does not hold.


We show that identification in NPIV models is fragile. If only weak a priori conditions are placed on the structural function, then the identified set can be very large under only a slight relaxation of the NPIV moment condition. Estimation in NPIV models can be highly non-robust to misspecification. For estimators that impose only weak restrictions on the structural function,
an arbitrarily small deviation from instrumental validity can impart
a large asymptotic bias. This sensitivity is a consequence of the `ill-posedness' of the NPIV  moment restriction.

If researchers place sufficiently strong restrictions (typically smoothness conditions) on the structural function, then robust identification is possible. Estimation that is robust to misspecification can be achieved by constraining the estimates so that they satisfy such smoothness conditions. In both cases, the sensitivity to misspecification is still greater in a certain sense than for standard nonparametric regression or parametric IV. Imposing strong smoothness restrictions in estimation carries a cost. If the true structural function does not obey these restrictions then an estimator that imposes them is necessarily inconsistent.

Two NPIV estimation methods which impose sufficient smoothness conditions are the procedures of \citet{Newey2003} and
\citet{Blundell2007}. A number of prominent NPIV procedures do not constrain the estimates to be smooth. For example, the methods described in \citet{Chen2018}, \citet{Darolles},
\citet{Hall2005}, and \citet{Horowitz2011}.

Robust estimation is also possible if the researcher is interested in a continuous
functional of the structural function rather than the function itself. We provide necessary and sufficient conditions under
which such a functional can be estimated robustly without the need to impose smoothness.

Our results suggest that researchers face a trade-off: impose too little smoothness and NPIV estimates are non-robust to invalid instruments, impose too much smoothness and the estimates are inconsistent even if instruments are valid. In response, we supplement NPIV estimation with a method for empirical sensitivity analysis. Methods for sensitivity analysis are increasingly popular in applied economics and include local methods (e.g., \citet{Andrews2017a}, \citet{Armstrong}, \citet{Weidner}), and global methods (e.g., \citet{Altonji2005a}, \citet{Masten2018}, \citet{Oster2019})). We develop a technique for global sensitivity analysis in which the researcher estimates the identified set under relaxations of the NPIV moment condition and various smoothness assumptions. Thus researchers can assess which of their findings are robust to some misspecification of the NPIV moment condition and examine the trade-off between robustness and smoothness. Estimation of the identified set under weakened assumptions is used elsewhere for sensitivity analysis, for example in \citet{Masten2018}. The set estimation problem is well-posed, and we derive convergence rates under primitive conditions.

We apply our methods to the empirical setting shared by \citet{Blundell2007}
and \citet{Horowitz2011} who use NPIV methods to
estimate shape-invariant Engel curves using data from The British
Family Expenditure Survey. We argue that in this setting instruments are unlikely to be fully exogenous, in which case the NPIV moment restriction will not hold exactly. We use our methodology to assess which
features of the structural Engel curve for food can be inferred robustly.

\subsection*{A Note on Notation}

For a random variable $W$, the corresponding calligraphic letter $\mathcal{W}$ denotes its support and $\mu_W$ its probability distribution. If $W$ and $V$ are random variables then statements of the form $W=V$ or $W\leq V$ should be understood to hold with probability $1$. If $W$ is a random variable and $h$ a function on $\mathcal{W}$ then $|h|_\infty$ is the essential supremum of $|h(W)|$, we refer to this as the `sup norm'. If $\mathcal{S}$ is a subset of a topological space then $\bar{\mathcal{S}}$ is its closure and $int(\mathcal{S})$ its interior. For an operator $\mathbb{T}:\,\mathcal{B}_1\to\mathcal{B}_2$, if $\mathcal{S}\subseteq\mathcal{B}_1$ then $\mathbb{T}[\mathcal{S}]$ is the image of $\mathcal{S}$ under $\mathbb{T}$ and if $\mathcal{S}\subseteq\mathcal{B}_2$ then $\mathbb{T}^{-1}[\mathcal{S}]$ is the pre-image of $\mathcal{S}$. If $\mathcal{B}$ is a normed space then we denote its norm by $\|\cdot\|_\mathcal{B}$. If $\mathcal{B}$ is a set of functions with domain $\mathcal{W}$ then $\mathcal{B}^{d}$ contains all length-$d$ vectors of functions $r$ such that $r(w)=(h_1(w),h_2(w),...,h_d(w))$ for all $w\in\mathcal{W}$, where $h_1,h_2,...,h_d\in\mathcal{B}$. For a real vector  $v$ we let $\|v\|_p$ denote the $\ell_p$ norm of $v$.

\section{Analytical Sensitivity Results}

\subsection{Background and Assumptions}

NPIV estimation is a flexible, nonparametric alternative to linear IV. It has been studied extensively, for example in \citet{Newey2003}, \citet{Ai2003}, \citet{Cheng},
\citet{Darolles}, \citet{Hall2005}, \citet{Horowitz2011}, and \citet{Chen2018}. NPIV methods have been applied in empirical work by \cite{Blundell2007}, \cite{Chen2018}, and \cite{Horowitz2011} among others.

NPIV models relax the linearity assumption of linear IV but retain additive separability. Let $Y$ be an outcome of interest, $X$ an endogenous variable, and $Z$ an instrument. NPIV analysis assumes $Y=h_0(X)+\varepsilon$ where $h_0$ is a non-random `structural function' and the structural residual $\varepsilon$ satisfies $E[\varepsilon|Z]=0$. NPIV is `nonparametric' because the structural function is not assumed to take any particular parametric form.

The structural function $h_0$ is the key object of interest and typically has a causal interpretation. The use of instrumental variables is motivated by the policy-relevance of $h_0$. If  $\varepsilon$ is correlated with $X$ then standard nonparametric regression will not recover $h_0$ but rather a reduced-form parameter which may not be useful for analyzing the effects of a policy.

We can obtain an NPIV model by imposing an additive structure for potential outcomes and assuming instruments are valid in the sense of \cite{Angrist}. In particular, instruments are valid if they have no direct causal effect on outcomes and they are randomly assigned. Let $Y_{x,z}$ be the potential outcome from a counterfactual treatment level $x$ and counterfactual value of the instrument $z$. An additive model for  $Y_{x,z}$ is given below.
\begin{equation}Y_{x,z}=h_0(x)+\varepsilon \label{potent}
	\end{equation} 

The random variable $\varepsilon$ captures all heterogeneity in potential outcomes and without loss of generality $E[\varepsilon]=0$. The instrument has no direct effect on outcomes and so $Y_{x,z}$ does not depend on $z$, that is, $Y_{x,z}=Y_{x}$. If the instrument is randomly assigned then $E[\varepsilon|Z]=0$. In this case the NPIV model holds with $h_0$ the average potential outcome from treatment level $x$, that is $h_0(x)=E[Y_{x}]$.\footnote{In fact, we can allow for non-additive heterogeneity in potential outcomes so long as this heterogeneity is fully exogenous. Formally, suppose $Y_{x,z}=l_0(x,V)+\varepsilon$ where $l_0$ is a non-random function, $V\indep(X,Z)$, and as before $E[\varepsilon|Z]=0$. Then the NPIV model holds with $h_0(x)=E[Y_{x}]$.}

Assumption 1.1 below formally states the key NPIV identifying moment restriction. It is equivalent to the model $Y=h_0(X)+\varepsilon$ with $E[\varepsilon|Z]=0$.

\theoremstyle{definition}
\newtheorem*{A1.1}{Assumption 1.1 (Correct Specification)}
\begin{A1.1}
$E[Y-h_{0}(X)|Z]=0$
\end{A1.1}

We sometimes wish to place a priori restrictions on the structural function. To achieve this we can assume $h_0$ belongs to a restrictive subset $\mathcal{H}$ of the underlying space of functions $\mathcal{B}_X$.

\theoremstyle{definition}
\newtheorem*{A1.2}{Assumption 1.2 (Parameter Space)}
\begin{A1.2}
$h_0\in \mathcal{H} \subseteq \mathcal{B}_X$.
\end{A1.2}

Assumptions 1.1 and 1.2 are structural assumptions in the sense that one cannot confirm that they hold by looking at the joint distribution of the observables. The non-structural Assumptions 1.3 and 1.4 below are imposed throughout the NPIV literature. They restrict the joint distribution of $X$ and $Z$.

\theoremstyle{definition}
\newtheorem*{A1.3}{Assumption 1.3  (Completeness)}
\begin{A1.3}
For any $h\in\mathcal{B}_X$, $E[h(X)|Z]=0$ if and only if $h(X)=0$.
\end{A1.3}
\newtheorem*{A1.4}{Assumption 1.4  (Compactness)}
\begin{A1.4}
Define the linear operator $\mathbb{A}$ by the equation $\mathbb{A}[h](Z)=E[h(X)|Z]$. $\mathbb{A}$ is a compact infinite-dimensional linear operator from the Banach space $\mathcal{B}_{X}$
into a Banach space $\mathcal{B}_{Z}$.
\end{A1.4}

Assumption 1.3 imposes a type of statistical completeness. Completeness plays a role analogous to the rank condition for identification in linear IV (\citet{Newey2003}).\footnote{For some
work on statistical completeness see \citet{Andrews2017b}, \citet{Canay2013}, \citet{Chena},
\citet{DHaultfoeuille2011}, \citet{Freyberger2017a}, and \citet{Hu2018}.} If Assumption 1.3 holds then Assumptions 1.1 point identifies $h_0$.

Assumption 1.4 is a technical condition that allows us to apply powerful results from functional analysis to the NPIV estimation problem. For common choices of ${\mathcal{B}_X}$ and ${\mathcal{B}_Z}$, Assumption 1.4 holds under weak primitive conditions on the joint distribution of $X$ and $Z$ (\citet{Florens2011}, \citet{Horowitz2011}). However, the assumption rules out the case of $X=Z$ which corresponds to standard non-parametric regression.\footnote{When $X=Z$ and $\mathcal{B}_X=\mathcal{B}_Z$, $\mathbb{A}$ is the identity and cannot be compact unless $\mathcal{B}_X$ is finite-dimensional. See Theorem 2.20 in \cite{Kress2014}.}

As a leading example we consider $\mathcal{B}_X$ and $\mathcal{B}_Z$ to be spaces of continuous functions equipped with the sup norm $|\cdot|_\infty$. In this case a sufficient condition for Assumption 1.4 is that $X$ and $Z$ have compact supports and admit a strictly positive continuous joint probability density.\footnote{This implies the conditional density of $X$ given $Z$ is continuous and so we can apply Theorem 2.21 in \cite{Kress2014}.}

While our results apply for quite general choices of the underlying function spaces $\mathcal{B}_X$ and $\mathcal{B}_Z$, Assumption 1.5 below places some mild restrictions on these spaces. The condition ensures that means of functions in these spaces are finite. We also assume that the outcome has a finite mean.

\newtheorem*{A1.5}{Assumption 1.5  (First Moments)}
\begin{A1.5}
	$E[|Y|]<\infty$, $E[|h(X)|]<\infty,\forall h\in\mathcal{B}_X$, and $E[|g(Z)|]<\infty,\forall g\in\mathcal{B}_Z$. The mapping from an element $h\in\mathcal{B}_X$ to its mean $E[h(X)]$ is continuous.
\end{A1.5}

Let us define a function $g_0\in\mathcal{B}_Z$ by $g_0(Z)=E[Y|Z]$. Using $\mathbb{A}$ defined in Assumption 1.4, we can rewrite the moment condition in Assumption 1.1 as an equation $\mathbb{A}[h_{0}]=g_{0}$. If Assumption 1.3 holds then this equation has a unique solution $h_{0}=\mathbb{A}^{-1}[g_{0}]$. The quantity on the right-hand side depends only on the distribution of the observables, and so $h_0$ is identified. If the NPIV moment condition is misspecified then  $h_0$ need not equal $\mathbb{A}^{-1}[g_{0}]$, and so we refer to the latter as the `pseudo-solution'.

NPIV estimation is often said to be `ill-posed'. This refers to the fact that under Assumptions 1.3 and 1.4, the pseudo-solution varies discontinuously with $g_0$. In fact, if we perturb $g_{0}$ by some arbitrarily small amount, this can induce an arbitrarily large change in the pseudo-solution. This is problematic for estimation because $g_0$ and $\mathbb{A}$ must be empirically estimated, and thus are subject to error. If we substitute these estimates into the pseudo-solution then ill-posedness suggests the resulting estimate of $h_0$ can be highly inaccurate, even if $g_0$ and $\mathbb{A}$ are precisely estimated.

To tackle this problem researchers have proposed numerous `regularization' methods. Regularization involves replacing the discontinuous operator $\mathbb{A}^{-1}$ in the pseudo-solution with a continuous approximation. This generally imparts bias and so the degree of regularization is reduced as the sample size grows.\footnote{Available regularization methods including Tikhonov regularization, projection onto a finite-dimensional sieve space, and many others. See \citet{Darolles} for discussion.} Reducing the strength of regularization increases sensitivity to error in the estimates of $g_0$ and $\mathbb{A}$, but this error typically decreases with the sample size. If the regularization is relaxed sufficiently slowly then the reduction in error more than offsets the increased sensitivity.

A key insight in this work is that misspecification, due to say invalid instruments, perturbs $g_0$ away from the value it would take under correct specification. The pseudo-solution is highly sensitive to perturbations due to misspecification just as it is to estimation error. NPIV estimators are typically designed to converge in probability to the pseudo-solution (\cite{Ai2007}),  estimates that converge to the pseudo-solution under weak conditions are thus highly sensitive to misspecification, at least asymptotically. A similar sensitivity result applies for the identified set under a slight relaxation of the NPIV moment condition. To the best of our knowledge this point is entirely absent from the rest of the NPIV literature. The precise consequences for identification and estimation depend crucially on whether or not strong a priori conditions are imposed on the structural function. Strong restrictions can shrink the identified set and limit the conditions under which an NPIV estimator converges to the pseudo-solution.

\subsection{Introducing Misspecification}

The NPIV model is misspecified if Assumption 1.1 does not hold. That is, ${u}_0(Z)$ defined below is non-zero with positive probability:
\begin{equation}
E[Y-h_{0}(X)|Z]={u}_0(Z) \label{udef}
\end{equation}

\theoremstyle{definition}
\newtheorem*{E1}{Example 1: Direct Effect of the Instrument}
\begin{E1}

Suppose we relax the model for potential outcomes (\ref{potent}) so that the instrument
may have a direct and additively separable effect on outcomes:

\[
Y_{x,z}=h_{0}(x)+{u}_{0}(z)+\varepsilon
\]

Without loss of generality we suppose $E[\varepsilon]=0$ and $E[u_{0}(Z)]=0$ so that $h_{0}(x)=E[Y_{x}]$. We assume as before that the instrument
is randomly assigned so that $E[\varepsilon|Z]=0$. This model implies (\ref{udef}).
\end{E1}

\theoremstyle{definition}
\newtheorem*{E2}{Example 2: Confounded Instrument}
\begin{E2}
Suppose that (\ref{potent}) holds with $E[\varepsilon]=0$, but the instrument is not randomly assigned. Instead, $Z$ and $\epsilon$ are statistically associated due to their shared dependence on a latent factor $V$ and assume $Z\indep \epsilon|V$.  Let $f_{\varepsilon|V}$ and $f_{V|Z}$ be the conditional probability densities of $\varepsilon$ given $V$, and of $V$ given $Z$ respectively. In this case 
(\ref{udef}) holds with $h_{0}(x)=E[Y_{x}]$ and ${u}_{0}$ given below:
\[
{u}_{0}(z)=\int\int e f_{\varepsilon|V}(e|v)def_{V|Z}(v|z)dv
\]
\end{E2}

\theoremstyle{definition}
\newtheorem*{E3}{Example 3: Nonseparable IV Model}
\begin{E3}
Consider the fully nonseparable potential outcomes model $Y_{xz}=l_{0}(x,\varepsilon)$. As before, instruments
have no direct effect on outcomes. Let $X_{z}=r_{0}(z,V)$ be the potential treatment
from a counterfactual level of the instrument $z$. In this nonseparable
model random assignment of the instrument can be formalized as $Z\indep(\varepsilon,V)$.
 Suppose $\varepsilon$ and $V$ admit a joint density $f_{\varepsilon V}$ and
marginal densities $f_{\varepsilon}$ and $f_{V}$. Let $h_{0}(x)=E[Y_{x}]$ then (\ref{udef}) holds with ${u}_{0}$ given below:
\[
{u}_{0}(z)=\int\int l_{0}(r_{0}(z,v),e)\big(f_{\varepsilon V}(e,v)-f_{\varepsilon}(e)f_{V}(v)\big)dedv
\]
\end{E3}

\theoremstyle{definition}
\newtheorem*{A1.1alt}{Assumption $1.1^*$ (Nearly Correct Specification)}
\begin{A1.1alt}
$E[Y-h_{0}(X)|Z]={u}_0(Z)$ where ${u}_0\in\mathcal{U}\subseteq \mathcal{B}_Z$ and $\|{u}_0\|_{\mathcal{B}_Z}\leq b$.
\end{A1.1alt}

The scalar $b$ in Assumption $1.1^*$ controls the degree of misspecification. If $b=0$ then  Assumption 1.1 holds and so the NPIV moment condition is correctly specified. If $b$ is small but non-zero then the two sides of the moment condition in Assumption 1.1 need not be equal, but they must be close to equal, with the distance between them measured by $\|\cdot\|_{\mathcal{B}_Z}$. If  $\mathcal{B}_Z$ is equipped with the sup norm then $\|{u}_0\|_{\mathcal{B}_Z}\leq b$ is equivalent to $|E[Y-h_{0}(X)|Z]|\leq b$.

Other choices of norms correspond to stronger or weaker notions of misspecification. For example if we take $\|\cdot\|_{\mathcal{B}_Z}$ to be the $L_2(\mu_Z)$ norm then we restrict the second moment of ${u}_0(Z)$ to be less than $b$, which is weaker than the corresponding restriction on the sup norm.

Bounds on the norm of $u_0$ can be derived from more primitive conditions. Consider Example 1 above and let $\mathcal{B}_Z$ be a set of continuous functions equipped the sup norm. The direct causal effect of changing the instrument from $z_1$ to $z_2$ is $u_0(z_2)-u_0(z_1)$. Suppose that for all $z_1,z_2$ the magnitude of this effect is at most $b$, given $E[u_0(Z)]=0$ it follows that $\|u_0\|_{\mathcal{B}_Z}\leq b$. Conversely, $u_0(z_2)-u_0(z_1)$ must be bounded above by $2\|u_0\|_{\mathcal{B}_Z}$. Thus the norm of $u_0$ is small if and only if the instrument has at most a small direct impact on outcomes.

In Example 2 one can show the sup norm of $u_0$ must be less than the essential supremum of   $2|E[\varepsilon|V]-E[\varepsilon]|$. This quantity measures the strength of the dependence between the structural residual $\varepsilon$ and the unobserved confounder $V$. In Example 3 one can derive the following bound:
\[\sup_{z\in\mathcal{Z}}|u_0(z)|\leq 2\inf_{h,q\in\mathcal{B}_{X}}\sup_{x\in\mathcal{X},e\in\mathcal{E}}|l_{0}(x,e)-h(x)-q(e)|\]
The expression on the right-hand side measures how closely $l_0$ can be approximated by an additively separable function. Thus a  bound the approximation error implies a bound on the norm of $u_0$.

The restriction that $u_0\in \mathcal{U}$ allows us to incorporate additional conditions on $u_0$. In Examples 1 and 2, but not necessarily Example 3,  $E[u_0(Z)]=0$ by construction, and so we may take $\mathcal{U}$ to contain only functions with zero-mean. $\mathcal{U}$ can also incorporate smoothness restrictions. In Example 1 $u_0$ is smooth if, loosely speaking, small changes in $Z$ have small direct causal effects on the outcome.

In order to accommodate cases in which $\mathcal{U}$ contains only zero-mean functions it is helpful to define the subspaces $\tilde{\mathcal{B}}_X$ and $\tilde{\mathcal{B}}_Z$ so that $\tilde{\mathcal{B}}_X$ contains all $h\in{\mathcal{B}}_X$ with $E[h(X)]=0$ and $\tilde{\mathcal{B}}_Z$ contains all $g\in{\mathcal{B}}_Z$ with $E[g(Z)]=0$.

Our first result concerns the size of the identified set under the a priori assumptions $1.1^*$ and $1.2$. The identified set $\Theta_b$ is the set of functions that are consistent with these two assumptions:
\begin{equation}
\Theta_b=\big[h\in\mathcal{H}:\, g_0-\mathbb{A}[h]\in\mathcal{U} \text{ and } \|g_0-\mathbb{A}[h]\|_{\mathcal{B}_Z}\leq b\big] \label{idset}
\end{equation}
Theorem 1.1 concerns the diameter of the identified set. The diameter, denoted $diam(\Theta_b)$ is defined as follows:
\[
diam(\Theta_b)=\sup_{h_1, h_2 \in \Theta_b}\|h_1-h_2\|_{\mathcal{B}_X}
\]

\theoremstyle{plain}
\begin{thm}
	Suppose Assumptions 1.3-1.5 hold and ${\mathbb{A}^{-1}[\mathcal{U}]}$ contains an open $\tilde{\mathcal{B}}_X$-ball centered at zero of radius $r_u$.\footnote{An open $\tilde{\mathcal{B}}_X$-ball of radius $r$ is the intersection of $\tilde{\mathcal{B}}_X$ and an open ball in ${\mathcal{B}}_X$ of radius $r$.} 
	
	\textbf{a.}
	If $\mathbb{A}^{-1}[g_0]\in int(\bar{\mathcal{H}})$ then $\lim_{b\to0} diam(\Theta_b)> 0$. 	More precisely, for all $b>0$, $diam(\Theta_b)\geq 2\min\{r_h,r_u\}$, where $r_h$ is the radius of the largest open ball in $\bar{\mathcal{H}}$ centered at $\mathbb{A}^{-1}[g_0]$.
	
	\textbf{b.}
	If $\mathcal{H}$ is compact and $\mathbb{A}^{-1}[g_0]\in \mathcal{H}$, then  $\lim_{b\to0} diam(\Theta_b)= 0$. If in addition $\mathcal{H}$ is absolutely convex and infinite-dimensional, and  $\frac{1}{\alpha}\mathbb{A}^{-1}[g_0]\in\mathcal{H}$ for some $\alpha\in(0,1)$, then $\lim_{b\to0} \frac{diam(\Theta_b)}{b}=\infty$.\footnote{A subset $\mathcal{S}$ of a real vector space is `absolutely convex' if for any $h_1,h_2\in\mathcal{S}$ and $a,b\in\mathbb{R}$,  $|a|+|b|\leq 1$ implies $ah_1+bh_2\in\mathcal{S}$.}
	
\end{thm}

Theorem 1.1 examines the diameter of the identified under different degrees of misspecification. The behavior of the set for small values of $b>0$ depends crucially on the strength of the a priori restrictions on the structural function, as captured in $\mathcal{H}$. Part a. of the theorem requires that the pseudo-solution lies in the interior of the closure of $\mathcal{H}$. Some spaces of smooth functions are sufficiently restrictive that they are compact, in which case the interior is empty and part a. cannot apply.\footnote{A compact subset of an infinite-dimensional Banach space must be closed and have an empty interior.} Part b. provides results for these more restrictive parameter spaces.

Under the conditions of part a., the diameter of the identified set does not shrink to zero with $b$. Instead it remains bounded below by a constant that depends on $\mathcal{H}$ and $\mathcal{U}$. Consider the extreme case in which $\mathcal{H}$ is dense in $\mathcal{B}_X$ and $\mathcal{U}$ contains an open ball centered at zero. Then diameter of the identified set is infinite, regardless of how small we make the bound $b>0$.\footnote{If $\mathcal{U}$ contains an open $\tilde{\mathcal{B}}_Z$-ball at zero then for sufficiently small $b>0$ we can replace $\mathcal{U}$ with $\tilde{\mathcal{B}}_Z$ (so $r_u=\infty$) and the resulting identified set contains the original as a subset.} Under the conditions of part b., the identified set shrinks to a point as the degree of misspecification is reduced. However, the diameter of the identified set shrinks to zero strictly more slowly than the bound $b$.

Let us compare to standard non-parametric regression. This case corresponds to the NPIV moment condition with $X=Z$. Let $\mathcal{B}_X=\mathcal{B}_Z$. Recall that Assumption 1.4 cannot hold in this setting and so Theorem 1.1 does not apply. In this case the identified set has diameter at most $2b$, regardless of the choice of $\mathcal{H}$ and $\mathcal{U}$. The identified set shrinks to zero at exactly the same rate as $b$. This is strictly faster than the rate at which the identified set shrinks in the NPIV case, even under the strong restrictions on $\mathcal{H}$ imposed by part b. of Theorem 1.1.

We now specify spaces $\mathcal{H}$ and $\mathcal{U}$ compatible with parts a. and b. of the theorem. For concreteness let $\mathcal{B}_X$ be the set of continuous functions with the sup norm and assume $X$ has compact support.

First take $\mathcal{H}$ to be the set of Lipschitz continuous functions whose magnitude is bounded by a known constant $\bar{c}<\infty$:
\begin{equation}
\mathcal{H}=\big[h\in\mathcal{B}_X:\, \sup_{ x_1\neq x_2\in \mathcal{X}}\frac{|h(x_1)-h(x_2)|}{\|x_1-x_2\|_2}<\infty,\, \sup_{x\in\mathcal{X}}|h(x)|<\bar{c} \big]\label{lipz2}
\end{equation}
By the Stone-Weirstrauss theorem the set of Lipschitz functions is dense $\mathcal{B}_X$. Thus this set satisfies the conditions in part a. of Theorem 1.1 with $r_h=\bar{c}$.

Now suppose we further restrict $\mathcal{H}$ so that it contains only Lipschitz continuous functions with Lipschitz constant less than $C<\infty$:
\begin{equation}
\mathcal{H}=\big[h\in\mathcal{B}_X:\, \sup_{ x_1\neq x_2\in \mathcal{X}}\frac{|h(x_1)-h(x_2)|}{\|x_1-x_2\|_2}\leq  C,\, \sup_{x\in\mathcal{X}}|h(x)|\leq \bar{c}\big]\label{lipz1}
\end{equation}
The closure of this set does not contain an open ball so we cannot apply part a. of Theorem 1.1. In fact, the set is compact (see \cite{Freyberger2017}) and it is absolutely convex and infinite-dimensional. Thus this choice of $\mathcal{H}$ satisfies the restriction in part b. of Theorem 1.1. Part b. also requires that the pseudo-solution $\mathbb{A}^{-1}[g_0]$ be in $\mathcal{H}$ and that $\frac{1}{\alpha}\mathbb{A}^{-1}[g_0]\in\mathcal{H}$ for some $\alpha\in(0,1)$. Given the choice of $\mathcal{H}$ above, this holds if and only if $\mathbb{A}^{-1}[g_0]$ is Lipschitz with constant strictly less than $C$ and bounded in magnitude by a constant strictly less than $\bar{c}$.

Compact parameter spaces like (\ref{lipz1}) are employed in many NPIV papers. For example they appear in \citet{Newey2003}, \citet{Ai2003}, \citet{Blundell2007},
\citet{Freyberger2017a}, and \citet{Santos2012}.

The distinction between the sets (\ref{lipz2}) and (\ref{lipz1}) may seem subtle. However, (\ref{lipz1}) encodes much stronger knowledge about the structural function $h_0$. Not only do we know $h_0$ is Lipschitz continuous, we know it is Lipschitz continuous with a Lipschitz constant smaller than a known scalar $C$. A similar distinction holds for other smoothness conditions. For example, in Section 2 we consider a space of functions whose second derivatives are bounded. In order to achieve an identified set that shrinks to zero with $b$ we must be willing to assume a \textbf{specific} bound on the second derivatives of $h_0$. It is not enough to simply assume that there exists some unknown finite bound on the second derivatives.\footnote{For more examples of infinite-dimensional sets of compact functions see \cite{Freyberger2017}. All examples in that paper also satisfy the absolute convexity requirement.}

The condition on $\mathcal{U}$ is the same for both parts of the Theorem, it states that  ${\mathbb{A}^{-1}[\mathcal{U}]}$ contains an open $\tilde{\mathcal{B}}_X$-ball centered at zero of radius $r_u$. This condition is relatively weak in that it can hold even if $\mathcal{U}$ contains only very smooth functions.  Suppose we impose the same strong Lipschitz condition on $u_0$ as in (\ref{lipz1}):
\begin{equation}
	\mathcal{U}=\big[u\in\tilde{\mathcal{B}}_Z:\, \sup_{ z_1\neq z_2\in \mathcal{Z}}\frac{|u(z_1)-u(z_2)|}{\|z_1-z_2\|_2}\leq C,\,\sup_{z\in\mathcal{Z}}|u(z)|\leq\bar{c} \big] \label{Udef}
\end{equation}
Although this set is restrictive, ${\mathbb{A}^{-1}[\mathcal{U}]}$ can contain an open ball at zero. The following is a sufficient (but not necessary) condition for such a ball to exist given the choice of $\mathcal{U}$ above. Let $f_{X|Z}$ be the conditional probability density of $X$ given $Z$. Suppose $f_{X|Z}$ is well-defined and Lipschitz continuous in its $Z$ argument:
\[\sup_{x\in\mathcal{X}}\sup_{ z_1\neq z_2\in \mathcal{Z}}\frac{|f(x|z_1)-f(x|z_2)|}{\|z_1-z_2\|_2}<\infty \]
Denote the quantity on the right-hand side above by $c$. Then $\mathbb{A}^{-1}[\mathcal{U}]$ contains an open $\tilde{\mathcal{B}}_X$-ball of radius $\min\{\bar{c},C/c\}$.

\subsubsection*{The Sensitivity of Estimators of the Structural Function}

The sensitivity of a given NPIV estimator to misspecification may be substantially worse than is suggested by Theorem 1.1. The reason for this is that under misspecification the probability limit of a particular NPIV estimate may be outside of the identified set. We show below that the sensitivity of an NPIV estimator to misspecification depends crucially on the conditions under which it is consistent under correct specification. In particular, an NPIV estimator that achieves consistency under only weak a priori conditions on the structural function is necessarily very sensitive to misspecification. Thus in order to obtain estimates that are robust to misspecification, one must forgo consistent estimation whenever the true structural function lies outside a restrictive parameter space.

In order to capture this formally we introduce a notion of asymptotic bias. In particular, 
we consider the largest possible (un-scaled) asymptotic bias under Assumptions $1.1^*$, 1.2, 1.3, and 1.4. We fix parameters
that are not directly related to misspecification. That is, we fix the structural function $h_{0}$ and $\mu_{XZ\eta}$ the joint probability
distribution of $X$, $Z$, and $\eta\equiv Y-E[Y|Z]$. The worst-case asymptotic bias of an estimator $\hat{h}_{n}$ for a given $b$ is then:
\[
bias_{\hat{h}_{n}}(b)=\sup_{u_{0}\in \mathcal{U}:\,\|u_{0}\|_{\mathcal{B}_{Z}}\leq b}\inf \big[\epsilon : P(\|\hat{h}_{n}-h_{0}\|_{\mathcal{B}_{X}}\leq \epsilon)\to 1 \big]
\] 
The above is a version of the `maximum
bias' discussed in \citet{Huber2011}. The definition above does not require that $\hat{h}_n$ has a probability limit.\footnote{The existence of a probability limit for an NPIV estimator is difficult to establish under misspecification, sufficient conditions for the existence of a probability limit for misspecified sieve minimum distance estimators can be found in \citet{Ai2007}.} If $\hat{h}_n$ converges in probability to a limit $\tilde{h}$ then the infimum in the definition above is equal to  $\|\tilde{h}-h_{0}\|_{\mathcal{B}_{X}}$.

\theoremstyle{plain}
\begin{thm}
	Fix $\mu_{XZ\eta}$ so that Assumptions 1.3-1.5 hold. Let $\mathcal{S}\subseteq\mathcal{B}_X$ be the set of all functions $h$ so that if $h_0=h$ and Assumption 1.1 holds then  $\text{plim}_{n\to\infty}\|\hat{h}_{n}-h_{0}\|_{\mathcal{B}_{X}}=0$. Suppose $\bar{\mathcal{S}}$ contains an open ball centered at $h_0$ of radius $r_s$ and ${\mathbb{A}^{-1}[\mathcal{U}]}$ contains an open $\tilde{\mathcal{B}}_X$-ball centered at zero of radius $r_u$.
	
	 Then $\lim_{b\to 0} bias_{\hat{h}_{n}}(b)>0$. More precisely, for all $b>0$, $bias_{\hat{h}_{n}}(b)\geq \min\{r_s,r_u\}$.
\end{thm}

Theorem 1.2 shows that there is a trade-off between consistency under correct specification and robustness to misspecification. Suppose that under correct specification and some restrictions on the distribution of observables, an NPIV estimator is consistent whenever $h_0\in\mathcal{S}$. Ideally $\mathcal{S}$ would be large so that consistency does not require strong a priori assumptions on the structural function. However, if $\mathcal{S}$ is large (in the sense given in the theorem), then the estimator must be very sensitive to certain kinds of misspecification. This is true even if the structural function $h_0$ in fact lies in a set $\mathcal{H}$ that is much more restrictive than $\mathcal{S}$.

Theorem 1.2 can be used to derive the sensitivity properties of particular NPIV estimators. To apply the theorem we can use existing asymptotic results to establish consistency over a sufficiently large space $\mathcal{S}$. In Appendix A1 we verify the key conditions of the theorem for some highly-cited NPIV estimators.

Some NPIV estimators are constructed in such a way that Theorem 1.2 cannot apply. These estimators are constrained so that $\hat{h}_n$ is always an element of a restrictive parameter space like (\ref{lipz1}) whose closure has an empty interior. For example, let $\hat{Q}_n$ be an empirical objective function that may change with the sample size (for example, to incorporate a penalty function with decreasing strength) and let $\mathcal{S}_n$ be a finite-dimensional subset of $\mathcal{S}$ that can grow with the sample size. Many NPIV estimators take the form below, including those in \cite{Newey2003} and \cite{Blundell2007}.
\[
\hat{h}_n=\arg\min_{h\in\mathcal{S}_n}\hat{Q}_n(h) 
\]

Because $\mathcal{S}_n$ is a subset of  $\mathcal{S}$, the estimate $\hat{h}_n$ above must be an element of $\mathcal{S}$ for all $n$. Such an estimator is necessarily inconsistent if $h_0\notin \mathcal{S}$. If $\mathcal{S}$ is sufficiently restrictive, for example if it takes the form (\ref{lipz2}), then the sensitivity result in Theorem 1.2 cannot apply. Thus such estimators forgo consistency outside of a restrictive parameter space but have the advantage that they potentially avoid the non-robustness to misspecification suggested by the theorem. If the researcher is certain that $h_0$ really does satisfy the restrictions imposed by inclusion in $\mathcal{S}$, then there is no cost to inconsistency outside of $\mathcal{S}$.

Our next result establishes conditions on an estimator that ensure it is robust in the sense that  $\lim_{b\to 0} bias_{\hat{h}_{n}}(b)=0$. However, analogous to Theorem 1.1, the rate at which the bias goes to zero must be strictly slower than the rate at which the degree of misspecification shrinks to zero

The key condition is that $\hat{h}_n$ converges in probability to a function $\mathbb{A}^{-1}P_{Z}[g_{0}]$. $P_Z$ is a projection onto $\mathbb{A}[\mathcal{S}]$. That is, a possibly non-linear operator that maps functions into $\mathbb{A}[\mathcal{S}]$ and leaves those already in this set unchanged. This condition implies the estimator has a probability limit in $\mathcal{S}$. Thus if $h_0$ is not in $\mathcal{S}$ then the estimator is necessarily inconsistent, even under correct specification. Conversely, if Assumption 1.1 holds and $h_0\in\mathcal{S}$, then $h_0=\mathbb{A}^{-1}P_{Z}[g_{0}]$ and so $\hat{h}_n$ is consistent.

\theoremstyle{plain}
\begin{thm}
	Fix $\mu_{XZ\eta}$ so Assumptions 1.3-1.5 hold and suppose ${\mathbb{A}^{-1}[\mathcal{U}]}$ contains an open $\tilde{\mathcal{B}}_X$-ball centered at zero of radius $r_u$.
	
	Let $\mathcal{S}$ be a compact subset of $\mathcal{B}_X$ and let $P_{Z}$ be a continuous projection onto $\mathbb{A}[\mathcal{S}]$. Suppose $\hat{h}_{n}$ satisfies $
	\|\hat{h}_{n}-\mathbb{A}^{-1}P_{Z}[g_{0}]\|_{\mathcal{B}_{X}}\to^{p}0$.
	 
	If $h_{0}\in\mathcal{S}$ then  $\lim_{b\to0}bias_{\hat{h}_{n}}(b)=0$. If in addition $\mathcal{S}$ is absolutely convex and infinite-dimensional, and  $\frac{1}{\alpha}h_0\in\mathcal{S}$ for some $\alpha\in(0,1)$, then  $\lim_{b\to0}bias_{\hat{h}_{n}}(b)/b=\infty$.
\end{thm}

Let us compare again with the case of standard nonparametric regression (where $Z=X$). Suppose that under some conditions on $\mu_{XZ\eta}$ a nonparametric regression estimator converges to the condition mean $g_0(X)=E[Y|X]$. If Assumption 1.1 holds then $g_0(X)=h_0(X)$ and so the estimator is consistent under correct specification. Under misspecification, the difference between the conditional mean and the structural function is simply $u_0(X)$. If we use the same norm to measure the degree of misspecification (i.e, the magnitude of $u_0$) and the worst-case bias, then this estimator has worst-case bias $b$. Thus the estimator is robust and has worst-case bias that shrinks at the same rate as the degree of misspecification. This does not require projection onto some restricted parameter space $\mathcal{S}$.

\subsection{Continuous Linear Functionals}

The sensitivity results in Theorems 1.1-1.3 apply to the identified set for the structural function itself and estimators of the structural function. If the object of interest is instead a smooth linear
functional of the structural function then the identified set may shrink to zero with the degree of misspecification, even without any strong a priori smoothness restrictions. Moreover, it may be possible to construct
estimates that a) are consistent under correct specification of the NPIV moment condition without
any a priori restrictions on the true structural function and also b) are robust to the failure of instrumental validity.

The estimation of functionals of the structural function in NPIV models
is analyzed extensively in the literature (\citet{Ai2003}, \citet{Severini2012}, \citet{Ichimuraa}, and others). Following
\citet{Severini2012}, we let $\mathcal{B}_{X}=L_{2}(\mu_{X})$ and similarly  $\mathcal{B}_{Z}=L_{2}(\mu_{Z})$. These choices of function spaces have the advantage that any continuous linear functional $\mathbb{L}$ takes the form $\mathbb{L}[h]=E[w(X)h(X)]$ where $w\in L_{2}(\mu_{X})$. This characterization follows by the Reisz representation theorem.

The identified set for the linear functional $\mathbb{L}[h_0]$ is simply  $\mathbb{L}[\Theta_b]$.  Let $\hat{l}_{n}$ be an estimator of $\mathbb{L}[h_{0}]$ and fix $h_{0}$ and $\mu_{XZ\eta}$. The worst-case asymptotic bias of $\hat{l}_{n}$ is defined below.
\[
bias_{\hat{l}_{n}}(b)=\sup_{u_{0}\in R(\mathbb{A}):\,\|u_{0}\|_{L_{2}(\mu_{Z})}\leq b}\inf \big[\epsilon : P(|\hat{l}_{n}-\mathbb{L}[h_{0}]|)\leq \epsilon)\to 1 \big] \]

\citet{Severini2012} show that $E[w(X)h(X)]$
is estimable at rate $\sqrt{n}$ only if there exists a function $\alpha\in L_{2}(\mu_{Z})$ so that $
w(X)=E[\alpha(Z)|X]$. Theorem 1.4 shows that this condition is necessary and sufficient for robust estimation of $\mathbb{L}[h_{0}]$ to be possible without restricting the parameter space.
\begin{thm}
Fix $\mu_{XZ\eta}$ so Assumptions 1.3-1.5 hold. Take $\mathcal{H}=L_{2}(\mu_{X})$ and $\mathcal{U}=L_{2}(\mu_{Z})$. Suppose that for any $h_{0}\in L_{2}(\mu_{X})$ if Assumption 1.1 holds then $|\hat{l}_{n}-\mathbb{L}[h_{0}]|\to^{p}0$. 

\textbf{a.} If there exists $\alpha\in L_{2}(\mu_{Z})$ with $w(X)=E[\alpha(Z)|X]$, then $diam(\mathbb{L}[\Theta_b])/2$ and $bias_{\hat{l}_{n}}(b)$ are bounded above by $ b\|\alpha\|_{L_{2}(\mu_{Z})}$ for all $b>0$.

\textbf{b.} If no such $\alpha$ exists, $diam(\mathbb{L}[\Theta_b])=\infty$ and $bias_{\hat{l}_{n}}(b)=\infty$ for all $b>0$.
\end{thm}

Theorem 1.4.a suggests that robust estimation of a continuous linear functional is possible without imposing any strong conditions on the NPIV estimates. However, the condition that $w(X)=E[\alpha(Z)|X]$ for some $\alpha$ may be hard to verify. In fact, we conjecture that for a given function $w$, it is not possible to empirically verify this condition.\footnote{More precisely, we conjecture that given a function $w$ any test of the null hypothesis that there is no $\alpha$ with $w(X)= E[\alpha(Z)|X]$ has power no greater than size against any alternative.} 

\section{Empirical Sensitivity Analysis}

The results in Section 1 show that identification and estimation in NPIV models entails a trade-off. Strong a priori restrictions allow for identification and estimation that is somewhat robust to misspecification. However, if the true structural function violates these restrictions then it will not lie in the corresponding identified set and an estimator that imposes the conditions will be inconsistent, even if the NPIV moment condition holds.

To help researchers assess the robustness of their findings, and to better assess the trade-off between robustness and strong smoothness conditions, we present a method for empirical sensitivity analysis in NPIV models. Our procedure is based on estimation of the identified set under the weakened instrumental validity condition in Assumption $1.1^*$ and Assumption 1.2. We estimate the identified set for a range of bounds $b$ on the degree of misspecification and a range of smoothness restrictions.  The approach is similar in spirit to \citet{Masten2018} who analyze the sensitivity of treatment effect estimates to a small failure of unconfoundedness by weakening the assumptions used for point identification. More precisely, we estimate the identified set for linear functionals of $h_0$. An important special case is the identified set for $h_0(x)$ at some $x$ in the support of $X$.

Note that our approach assesses the robustness of empirical findings to deviations from instrumental validity and weakened smoothness conditions, it does not assess the sensitivity of any particular NPIV estimator. In Appendix A3 we also provide a method to directly assess the finite-sample sensitivity of specific NPIV point-estimates to invalid instruments.

Let $\Theta_b$ be the identified set as defined in (\ref{idset}). We only restrict the degree of misspecification and thus we take $\mathcal{U}$ to be the whole space $\mathcal{B}_Z$. Recall that the identified set for a linear functional $\mathbb{L}$ is $\mathbb{L}[\Theta_b]$. Unlike in the previous section we do not restrict ourselves to only continuous linear functionals. Of particular interest is the identified set for the value of $h_0$ at a point $x$, this is the set $[h(x): h\in\Theta_b]$ and corresponds to $\mathbb{L}:h\mapsto h(x)$. 

For the choices of $\mathcal{B}_Z$ and $\mathcal{H}$ we consider, $\mathbb{L}[\Theta_b]$ is an interval.\footnote{See Proposition
2.1 in the Appendix B for a proof.} Denote the lower and upper endpoints for the interval by $\underline{\theta}_{\mathbb{L}}$ and $\bar{\theta}_{\mathbb{L}}$, for simplicity we suppress the subscript $b$. We wish to estimate these end-points.


Assumption $1.1^*$ depends on the norm of the space $\mathcal{B}_Z$ and Assumption 1.2 on the choice of $\mathcal{H}$. We assume that these spaces are chosen so that the two assumptions can be written in terms of inequality constraints.
\newtheorem*{A2.1}{Assumption 2.1}
\begin{A2.1}
i. $\mathcal{U}=\mathcal{B}_Z$ and $\|\cdot\|_{\mathcal{B}_Z}$ is the sup norm. Thus Assumption $1.1^*$ states that if $h=h_0$, then with probability $1$:
\begin{equation}
|E[Y-h(X)|Z]|\leq b\label{eq:cond1}
\end{equation}
 ii. $\mathcal{H}$ is the set of continuous bounded functions $h$ so that:
\begin{equation}
 \mathbb{T}[h](x)\leq C(x),\forall x\in\mathcal{X}\label{eq:cond2}
\end{equation}
Where  $\mathbb{T}$ is a known linear functional from
$\mathcal{B}_{X}$ to $\mathcal{B}_{X}^{d}$, and $C\in \mathcal{B}_{X}^{d}$
is a known vector of functions.
\end{A2.1}
Under Assumption 2.1 $\underline{\theta}_\mathbb{L}$ and $\bar{\theta}_\mathbb{L}$ are defined as follows:
\begin{align*}
\underline{\theta}_\mathbb{L}= & \inf_{h\in\mathcal{B}_{X}}\mathbb{L}[h] \text{\ \ subject to (\ref{eq:cond1}) and (\ref{eq:cond2}).}\\
\bar{\theta}_\mathbb{L}= & \sup_{h\in\mathcal{B}_{X}}\mathbb{L}[h] \text{\ \ subject to (\ref{eq:cond1}) and (\ref{eq:cond2}).}
 \end{align*}
 
Our statistical analysis applies for more general constraints than (\ref{eq:cond1}) and (\ref{eq:cond2}). We detail the more general framework in Appendix A2.
 
\subsection{Calibrating $b$}
The bound $b$ in Assumption $1.1^*$ determines the size of the identified set. We suggest the researcher estimate
the identified set for a range of choices for $b$ corresponding
to mild, moderate, and severe misspecification. To determine
whether $b$ corresponds to say, `mild' misspecification, it is helpful
to compare $b$ to an estimable reference quantity. This approach is common in the empirical sensitivity literature. For
example, \citet{Masten2018} compare the degree of selection on unobservables to selection
on observables, which is estimable. See also \citet{Imbens2003}, \citet{Altonji2005a}, and \citet{Oster2019}.

We suggest calibrating of $b$ against the residual variation in the outcome. Consider the following decomposition:
\[Y=E[h_0(X)|Z]+u_0(Z)+\eta\]
In the context of Example 1 in Section 1, the first two terms on the right-hand side capture the additive indirect and direct effects of the instrument. The residual $\eta$ absorbs the effect of all other factors that influence $Y$. The variation in $\eta$ is a useful reference quantity against which to calibrate the bound $b$ on the magnitude of $u_0(Z)$. We measure the variation using a difference in quantiles.

Let $q_\eta(\alpha)$ denote the $\alpha$ quantile of $\eta$. We say $b$ is small if it is equal to $q_\eta(0.5+\tau)-q_\eta(0.5-\tau)$ for a small $\tau$. Suppose $b=q_\eta(0.6)-q_\eta(0.4)$ which corresponds to $\tau=0.1$. In Example 1, this restricts the direct causal effect of the instrument to be less than twice the effect of shifting $\eta$ from its $0.4$ to its $0.6$ quantile. In our application we take $\tau=0.01$, $0.05$, and $0.1$ to correspond to mild, moderate, and severe misspecification respectively. Note that $\eta=Y-E[Y|Z]$ is a reduced-form residual and thus we can estimate $\eta$ and its quantiles in a first stage.

\subsection{Set Estimation}
 
To estimate $\underline{\theta}_\mathbb{L}$ and $\bar{\theta}_\mathbb{L}$ we must replace the optimization problems that define these quantities with feasible ones. Instead of optimizing over $\mathcal{B}_X$, we optimize over a finite-dimensional sieve space. We replace the conditional moment in (\ref{eq:cond1}) with a non-parametric estimate. Finally, we enforce the inequalities (\ref{eq:cond1}) and (\ref{eq:cond2}) only on finite grids.
 
Let $\Phi_{n}$ be a length-$K_{n}$ column vector of basis functions
on $\mathcal{X}$. In a first stage, we nonparametrically regress $Y$ and $\Phi_n(X)$ on $Z$ to obtain estimates $\hat{g}_{n}$ and $\hat{\Pi}_{n}$ of $g_{0}(Z)=E[Y|Z]$ and $\text{\ensuremath{\Pi}}_{n}(Z)=E[\Phi_{n}(X)|Z]$. Let $\mathcal{X}_{n}$ and $\mathcal{Z}_{n}$ be finite grids in the supports of $X$ and $Z$. We replace the constraints (\ref{eq:cond1}) and (\ref{eq:cond2}) with:
\begin{align}
&|\hat{g}_{n}(z)-\hat{\Pi}_{n}'(z)\beta|\leq b,\,\forall z\in\mathcal{Z}_{n} \label{eq:cond3}\\
&\mathbb{T}[\Phi_n'](x)\beta\leq C(x),\,\forall x\in\mathcal{X}_{n} \label{eq:cond4}
\end{align}
Where  $\mathbb{T}[\Phi_n'](x)$ be the $d$-by-$K_n$ matrix whose $k^{th}$ column is $\mathbb{T}[\Phi_{n,k}](x)$ where $\Phi_{n,k}$ is the $k^{th}$ component of $\Phi_n$. Let $\mathbb{L}[\Phi_n]$ be the column-vector whose $k^{th}$ entry is $\mathbb{L}[\Phi_{n,k}]$. The estimates of $\bar{\theta}_\mathbb{L}$ and $\underline{\theta}_\mathbb{L}$ are respectively:
\begin{align*}
\hat{\bar{\theta}}_{\mathbb{L},n} & =\max_{\beta\in\mathbb{R}^{K_{n}}}\mathbb{L}[\Phi_{n}]'\beta\text{\ \ subject to (\ref{eq:cond3}) and (\ref{eq:cond4}).}\\
\hat{\underline{\theta}}_{\mathbb{L},n} & =\min_{\beta\in\mathbb{R}^{K_{n}}}\mathbb{L}[\Phi_{n}]'\beta\text{\ \ subject to (\ref{eq:cond3}) and (\ref{eq:cond4}).}
\end{align*}
These are linear programming problems with $K_{n}$ scalar parameters
and $2|\mathcal{Z}_{n}|+d|\mathcal{X}_{n}|$ constraints.

\subsection{Consistency and Convergence Rates}

A vector-valued function $f$ on a domain $\mathcal{V}$ is of H{\"o}lder smoothness class $s\in (0,1]$ with constant $\xi$ if and only if:
\[
\sup_{v_{1},v_{2}\in\mathcal{V}: v_1 \neq v_2}\frac{\|f(v_{1})-f(v_{2})\|_2}{\|v_{1}-v_{2}\|_{2}^s}=\xi
\]
Let $\lfloor s \rfloor$ denote the largest integer less than $s$. $f$ is of H{\"o}lder smoothness class $s>1$ with constant $\xi$ if and only if all the derivatives of each entry of $f$ of order weakly less than $\lfloor s \rfloor$ are uniformly bounded by $\xi$ and all the derivatives of order exactly $\lfloor s \rfloor$ are of H{\"o}lder smoothness class $ s - \lfloor s \rfloor$ with constant $\xi$. A function is Lipschitz continuous with constant $\xi$ if it is of H{\"o}lder smoothness class $1$ with constant $\xi$.

Let $D_{1,n}=\underset{{x_{1}\in\mathcal{X}}}{\sup}\underset{x_{2}\in\mathcal{X}_{n}}{\min}\|x_{1}-x_{2}\|_{2}$ and $D_{2,n}=\underset{{z_{1}\in\mathcal{Z}}}{\sup}\underset{{z_{2}\in\mathcal{Z}_{n}}}{\min}\|z_{1}-z_{2}\|_{2}$. Let $C_{n}=1+\sup_{\beta\in\mathbb{R}^{K_{n}}}\frac{\|\beta\|_{2}}{|\Phi_{n}'\beta|_{\infty}}
$.
\theoremstyle{definition}
\newtheorem*{A2.2}{Assumption 2.2}
\newtheorem*{A2.3}{Assumption 2.3}
\newtheorem*{A2.4}{Assumption 2.4}
\newtheorem*{A2.5}{Assumption 2.5}
\begin{A2.2}
$\mathbb{T}[h](x)\leq C(x)$ implies $|h(x)|\leq\bar{c}$ for some
$0<\bar{c}<\infty$, and for some $\underline{c}>0$, $C(x)\geq \underline{c}$
for all $x\in\mathcal{X}$.
\end{A2.2}
\begin{A2.3}
There is a sequence of positive scalars $a_{n}\to0$ so that:
\[
|\hat{g}_{n}-g_{0}|_{\infty}+\sup_{\beta\in\mathbb{R}^{K_{n}}:\,\Phi_{n}'\beta\in\mathcal{H}}|(\hat{\Pi}_{n}-\Pi_{n})'\beta|_{\infty}=O_{p}(a_{n})
\]
\end{A2.3}
\begin{A2.4}
There is a sequence of positive scalars $\kappa_{n}\to0$ so that for any $h\in\mathcal{H}$
there exists $\beta_{n}\in\mathbb{R}^{K_{n}}$ with $|\Phi_{n}'\beta_{n}-h|_{\infty}\leq\kappa_{n}$ and:
\begin{align*}
\mathbb{T}[\Phi_{n}'](x)\beta_{n} & \leq\mathbb{T}[h](x),\,\forall x\in\mathcal{X}
\end{align*}
\end{A2.4}
\begin{A2.5}
i. $\Phi_{n}$, $C$, and each row of $\mathbb{T}[\Phi_{n}']$ are Lipschitz
continuous with constant at most $\xi_{n}$. ii. With probability approaching
$1$ both $\hat{g}_{n}$ and $\hat{\Pi}_{n}$ are Lipschitz continuous
with constant at most $G_{n}$. iii. $D_{1,n},D_{2,n}\to0$, $C_{n}\xi_{n}D_{1,n}\to0$ and $C_{n}G_{n}D_{2,n}\to0$.
\end{A2.5}

Assumption 2.2 implies that $\mathcal{H}$ is
bounded and convex. Assumption 2.3 quantifies the estimation error in $\hat{g}_n$ and $\hat{\Pi}_n$. 2.4  quantifies the error from the replacing $\mathcal{B}_{X}$ with a sieve space. Theorems 2.2 and 2.3 establish primitive conditions for Assumptions 2.3 and 2.4. Assumption 2.5 controls the error from the use
of grids $\mathcal{X}_{n}$ and $\mathcal{Z}_{n}$.
\theoremstyle{plain}
\begin{thm}
Suppose Assumptions 2.1-2.5
hold and there exists $h\in\mathcal{B}_X$ so that (\ref{eq:cond1}) and (\ref{eq:cond2}) are slack, then uniformly over all linear functionals $\mathbb{L}$ with operator norms less than some fixed constant:
\begin{align*}
|\hat{\underline{\theta}}_\mathbb{L}-\underline{\theta}_\mathbb{L}| & =O_{p}(a_{n}+\kappa_{n}+C_{n}\xi_{n}D_{1,n}+C_{n}G_{n}D_{2,n})=o_{p}(1)\\
|\hat{\bar{\theta}}_{\mathbb{L}}-\bar{\theta}_\mathbb{L}| & =O_{p}(a_{n}+\kappa_{n}+C_{n}\xi_{n}D_{1,n}+C_{n}G_{n}D_{2,n}) =o_{p}(1)
\end{align*}
In particular, the above holds uniformly over the functionals of the form $\mathbb{L}[h]=h(x)$ for each $x$ in the support of $X$.
\end{thm}
Theorem 2.1 demonstrates the well-posedness of the set estimation
problem. The first-stage rate $a_{n}$ is not premultiplied by some
growing factor like a `sieve measure of ill-posedness' (\citet{Blundell2007}).

$D_{1,n}$ and $D_{2,n}$ are small when the grids $\mathcal{X}_{n}$
and $\mathcal{Z}_{n}$ are dense. $a_{n}$ and $\kappa_{n}$ are independent of the grids, so if the grids grow dense quickly enough, the rates in the Theorem reduce to $a_n+\kappa_n$.
This suggests the grids should be made as dense as computationally feasible.

If the dimension of the sieve space $K_{n}$ grows quickly then 
$\kappa_{n}$ converges rapidly to zero. Theorem 2.2 establishes a rate $a_{n}$ that is independent
of $K_{n}$. This suggests that under the conditions of Theorem 2.2, $K_{n}$ should be made as large as is computational constraints allow.

We now provide primitive conditions for Assumption 2.3. Define series estimators $\hat{g}_n$ and $\hat{\Pi}_n$. Let $\Psi_{n}$ be a length-$L_{n}$ column vector of basis
functions on $\mathcal{Z}$ and define $\hat{Q}_{n}=\frac{1}{n}\sum_{i=1}^{n}\Psi_{n}(Z_{i})\Psi_{n}(Z_{i})'$:
\begin{align}
\hat{g}_{n}(z)&=\Psi_{n}(z)'\hat{Q}_{n}^{-1}\frac{1}{n}\sum\Psi_{n}(Z_{i})Y_{i}\label{eq:regg}\\
\hat{\Pi}_{n}(z)&=\Psi_{n}(z)'\hat{Q}_{n}^{-1}\frac{1}{n}\sum\Psi_{n}(Z_{i})\Phi_{n}(X_{i})'\label{eq:regpi}
\end{align}
In the Assumptions below $\mathcal{N}(\mathcal{H},|\cdot|_{\infty},\delta)$ is the smallest
number of $|\cdot|_{\infty}$-balls of radius $\delta$ that can cover
$\mathcal{H}$ and $\dim (Z)$ is the dimension of $Z$.

\theoremstyle{definition}
\newtheorem*{A2.6}{Assumption 2.6}
\newtheorem*{A2.7}{Assumption 2.7}
\newtheorem*{A2.8}{Assumption 2.8}
\begin{A2.6}
i. The eigenvalues of $Q_{n}=E[\Psi_{n}(Z_{i})\Psi_{n}(Z_{i})']$
are bounded uniformly above and away from zero. ii. $\mathcal{Z}$
is bounded and the distribution of $X$ given $Z$ admits a conditional
density $f_{X|Z}$ so that $\forall x\in\mathcal{X}$,  $f_{X|Z}(x,\cdot)$ is of H{\"o}lder smoothness class $s>0$ with constant at most $\bar{\ell}$.
\end{A2.6}
\begin{A2.7}
For any $s>0$ there is a sequence $R_n(s)\to 0$ so that for any $g\in\mathcal{B}_{Z}$ of H{\"o}lder smoothness class $s$ with constant $\xi$:
\[
\sup_{z\in\mathcal{Z}}|g(z)-\Psi_{n}(z)'Q_{n}^{-1}E[\Psi_{n}(Z)g(z)]|\leq \xi R_n(s)
\]
ii. For all $z\in\mathcal{Z}$, $\|\Psi_{n}(z)\|_{2}\leq\bar{\xi}_{n}$.
$\alpha_{n}(z)=\frac{\Psi_{n}(z)}{\|\Psi_{n}(z)\|_2}$
is Lipschitz continuous with constant $\ell_{n}$. iii. $\int_{0}^{1}\sqrt{log\mathcal{N}(\mathcal{H},|\cdot|_{\infty},u)}du<\infty$.
iv. $\frac{\bar{\xi}_{n}^{2}log(L_{n})}{n}\to0$
\end{A2.7}
\begin{A2.8}
i. The function $g_{0}(Z)=E[Y|Z]$ is of H{\"o}lder smoothness class $s>0$. For $m>2$,  $E\big[|Y-E[Y]|^{m}\big|Z\big]<\infty$,
$\bar{\xi}_{n}^{2m/(m-2)}log(L_{n})/n=O(1)$, $L_{n}log(L_{n})/(n^{1-2/m})=O(1)$
and $L_{n}^{2-2/\dim(Z)}/n=O(1)$. ii. $log(\ell_{n})=O\big(log(L_{n})\big)$, $\bar{\xi}_{n}=O(\sqrt{L_{n}})$ and $R_{n}(s)=O(L_{n}^{-s_0 (s)/\dim(Z)})$ for some function $s_0:\mathbb{R}_{++}\to \mathbb{R}_{++}$ and $R_n(s)=O(L_n^{-1/2})$.
\end{A2.8}

Assumption 2.6.i is standard. Smoothness
of the conditional density in 2.6.ii ensures that for any $h\in\mathcal{B}_{X}$,
$\mathbb{A}[h]$ is smooth. Assumptions 2.7.i and 2.7.ii. can be verified for common basis functions. 2.7.iii is a condition
on the metric entropy of $\mathcal{H}$, such conditions are commonplace in the sieve estimation literature (see \citet{Chen2007a}). Spaces of smooth functions like those used in our empirical application typically obey the condition (see \citet{Wainwright2019}
Chapter 5). Assumption 2.7.iv allows us to apply Rudelson's matrix law of large numbers (\citet{Rudelson1999}).

Assumption 2.8.i allows us to apply results from \citet{Belloni2015}
to derive a convergence rate for $\hat{g}_{n}$. 2.8.ii can be
verified for a given choice of basis functions, for example if $s\geq 1/2$, then the Assumption holds
for the one-dimensional B-spline case in Section 3.

\theoremstyle{plain}
\begin{thm}
Suppose Assumptions 2.1.ii, 2.2, and 2.6-2.8 hold.
Define $\hat{g}_{n}$ and $\hat{\Pi}_{n}$ as
in (\ref{eq:regg}) and (\ref{eq:regpi}). Then uniformly over sequences
$\{K_{n}\}_{n=1}^{\infty}$:
\begin{align*}
&|\hat{g}_{n}-g_{0}|_{\infty}+\sup_{\beta\in\mathbb{R}^{K_{n}}:\,\Phi_{n}'\beta\in\mathcal{H}}|(\hat{\Pi}_{n}-\Pi_{n})'\beta|_{\infty} \\
=& O_{p}\bigg(\sqrt{\frac{L_{n}log(L_{n})}{n}}+L_{n}^{-s_0 (s)/\dim(Z)}\bigg) = o_{p}(1)
\end{align*}
\end{thm}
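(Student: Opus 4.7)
The plan is to split the estimation error into its two additive pieces and handle each separately. The first term $|\hat{g}_n - g_0|_\infty$ is a standard sup-norm series regression problem: the target $g_0$ is H\"older of class $s$ by Assumption 3.7.i, the basis function conditions (bounded eigenvalues of $Q_n$, bounded $\|\Psi_n(z)\|_2$, approximation rate $R_n(s)$, Lipschitz continuity of $\alpha_n$) match the hypotheses of Theorem 4.3 in \citet{Belloni2015}, and the moment and rate conditions in Assumption 3.7.i correspond to theirs. Applying that result directly yields the desired rate $\sqrt{L_n \log(L_n)/n}+L_n^{-s_0(s)/\dim(Z)}$ for $|\hat g_n-g_0|_\infty$.

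For the second term, the key observation is that for any $\beta$ with $h:=\Phi_n'\beta\in\mathcal{H}$, linearity gives
\[
\hat\Pi_n(z)'\beta=\Psi_n(z)'\hat Q_n^{-1}\frac{1}{n}\sum_i\Psi_n(Z_i)h(X_i),
\]
so $\hat\Pi_n(z)'\beta$ is the series-regression estimator of $A[h](z)=E[h(X)|Z=z]=\Pi_n(z)'\beta$. The task therefore reduces to bounding the sup-norm series regression error \emph{uniformly} over $h$ in the bounded class $\mathcal{H}$. I would decompose into (i) an approximation error $A[h](z)-\Psi_n(z)'Q_n^{-1}E[\Psi_n(Z)h(X)]$, and (ii) a stochastic error coming from replacing $Q_n$ by $\hat Q_n$ and $E[\Psi_n(Z)h(X)]$ by its sample analogue.

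For (i), Assumption 3.5.ii implies that $A[h]$ is H\"older of class $s$ with constant proportional to $|h|_\infty\le\bar c$ (Assumption 3.1), so Assumption 3.6.i gives a uniform bound of order $R_n(s)=O(L_n^{-s_0(s)/\dim(Z)})$, independent of $h\in\mathcal{H}$ and $K_n$. For (ii), the $\hat Q_n-Q_n$ piece is handled via Rudelson's inequality using Assumption 3.6.iv exactly as in \citet{Belloni2015}. What remains is
\[
\sup_{h\in\mathcal{H}}\sup_{z\in\mathcal{Z}}\left|\Psi_n(z)'Q_n^{-1}\bigg(\frac{1}{n}\sum_i\Psi_n(Z_i)h(X_i)-E[\Psi_n(Z)h(X)]\bigg)\right|,
\]
and this is the step where the work of Lemma 3.3 enters.

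The main obstacle is obtaining uniformity in $h\in\mathcal{H}$ for this stochastic term, since the Belloni--Chernozhukov--Chetverikov--Kato analysis is pointwise in the regression function. This is precisely what Assumption 3.6.iii is designed to deliver: the finite Dudley entropy integral of $\mathcal{H}$ under $|\cdot|_\infty$ permits a chaining argument in which we cover $\mathcal{H}$ by successively finer sup-norm nets, apply the pointwise concentration bounds on each net (the relevant $L^2$-diameter of the increments is controlled because $|h_1(X_i)-h_2(X_i)|\le|h_1-h_2|_\infty$), and sum the contributions. Because the entropy integral is finite, chaining costs only a constant factor rather than an additional power of $\log n$, so the final uniform bound has the same order $\sqrt{L_n\log(L_n)/n}$ as the pointwise one. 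Combining (i) and (ii) yields the stated rate, and the rate does not depend on $K_n$ because only properties of $\mathcal{H}$---not of the sieve $\Phi_n$---enter the bound.
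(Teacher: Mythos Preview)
Your proposal is correct and follows essentially the same route as the paper: invoke \citet{Belloni2015} Theorem 4.3 directly for $|\hat g_n-g_0|_\infty$, recognize $\hat\Pi_n'\beta$ as the series-regression estimator of $A[\Phi_n'\beta]$, and then bound the error uniformly over $h\in\mathcal{H}$ via the H\"older smoothness of $A[h]$ (from the conditional-density assumption), Rudelson's matrix LLN, and a Dudley chaining argument using the entropy integral of $\mathcal{H}$---this is exactly the content of Lemma 3.3. The one place your sketch is looser than the paper is the execution of the chaining: the paper does not ``apply the pointwise Belloni bound on a net and sum'', but instead symmetrizes and runs Dudley's entropy bound jointly over $(z,h)\in\mathcal{Z}\times\mathcal{H}$ for the noise term $\epsilon_i[h]$ (producing the $\sqrt{\log\ell_n}$ factor from covering $\mathcal{Z}$) and jointly over $\mathcal{S}^{L_n-1}\times\mathcal{H}$ for the approximation-residual term $r_i[h]$ (producing the $\sqrt{L_n}R_n(s)$ factor), after which Assumption 3.7.ii collapses both into the stated rate.
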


If the conditions of the theorem hold, then setting $L_{n}$ optimally
Assumption 2.3 holds with $a_{n}=\big( \frac{log(n)}{n} \big) ^{s_0(s)/(2 s_0(s)+\dim (Z))}$.

Finally, we show Assumption 2.4 holds with $\kappa_{n}=O(K_{n}^{-\frac{1}{2}})$ for the setting in our empirical application.

\begin{thm}
Let $\mathcal{H}$ contain functions that map from a closed interval
$[a,b]$ to $[0,1]$ so that any $h\in\mathcal{H}$ is twice-differentiable
with $|\frac{\partial^{2}}{\partial x^{2}}h|_{\infty} \leq c$.
Let $\Phi_{n}$ be a vector of $s_0$-order B-spline basis functions
with $K_{n}$ knot points evenly spaced between $[a,b]$. If $s_0 \geq3$, Assumption 2.4 holds with $\kappa_{n}=O(K_{n}^{-\frac{1}{2}})$.
\end{thm}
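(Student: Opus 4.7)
The plan is to construct $\beta_n$ as a suitable B-spline representation of $h$, possibly after a shape-preserving correction, with the $O(K_n^{-1/2})$ rate emerging from a balance between the classical B-spline approximation rate and the cost of enforcing the pointwise inequality constraints required by Assumption 3.3.

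I would begin by invoking the standard approximation properties of B-splines. For $h\in\mathcal{H}$ twice differentiable with $|h''|_{\infty}\leq c$, a quasi-interpolation operator (say, de Boor's) that reproduces polynomials of degree two -- which exists whenever $s_0\geq 3$ -- produces a coefficient vector $\tilde{\beta}_n$ satisfying $|\Phi_n'\tilde{\beta}_n-h|_{\infty}=O(K_n^{-2})$, with $|(\Phi_n'\tilde{\beta}_n)''|_{\infty}$ bounded in terms of $|h''|_{\infty}$. This initial approximation need not satisfy $\mathbb{T}[\Phi_n'\tilde{\beta}_n](x)\leq \mathbb{T}[h](x)$, since neither the value nor the second derivative of the interpolant is guaranteed to be pointwise dominated by the corresponding object for $h$.

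To enforce the constraint I would apply a shape-preserving correction of the form $\beta_n=\tilde{\beta}_n-\gamma_n v_n$, for a fixed coefficient vector $v_n$ whose associated spline has a known-sign second derivative (for example, a quadratic with $(\Phi_n'v_n)''\equiv 1$), with the scalar $\gamma_n$ chosen as small as possible such that $\mathbb{T}[\Phi_n'\beta_n](x)\leq \mathbb{T}[h](x)$ holds for all $x$. The resulting error obeys $|\Phi_n'\beta_n-h|_{\infty}\leq|\Phi_n'\tilde{\beta}_n-h|_{\infty}+\gamma_n|\Phi_n'v_n|_{\infty}=O(K_n^{-2})+O(\gamma_n)$, so the required rate reduces to showing $\gamma_n=O(K_n^{-1/2})$. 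An analogous correction (with the opposite sign, or a different basis element) handles the reverse second-derivative inequality, and together these also ensure the value-level inequalities to leading order.

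The main obstacle is bounding $\gamma_n$. Since $h''$ is only assumed essentially bounded, the pointwise excess of $(\Phi_n'\tilde{\beta}_n)''$ over $h''$ need not vanish uniformly, so one cannot hope to prove $\gamma_n \to 0$ via a pointwise argument. The rate $K_n^{-1/2}$ arises by controlling this excess in a weaker norm and then converting via a spline stability estimate: a duality argument against $h''$ shows that the excess is $O(K_n^{-1})$ in $L^2$, and the $L^2$-to-$L^\infty$ inverse inequality for splines of fixed order on a uniform knot grid of spacing $(b-a)/K_n$ costs a factor of $O(\sqrt{K_n})$, yielding $\gamma_n=O(K_n^{-1/2})$. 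Verifying that a single pair of correction vectors $v_n$ can simultaneously enforce all four components of $\mathbb{T}$ without mutual conflict is the final bookkeeping step and follows from the explicit quadratic choice together with linearity of $\mathbb{T}$.
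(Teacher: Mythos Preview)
Your approach has a genuine gap at the step where you bound $\gamma_n$. The quantity you need to control in $L^\infty$ is the excess of $(\Phi_n'\tilde\beta_n)''$ over the bound $c$ (or, in your reading, over $h''$ pointwise). But $h''$ is assumed only to be bounded, with no modulus of continuity. Under that assumption the second derivative of a B-spline quasi-interpolant satisfies $\|(\Phi_n'\tilde\beta_n)''\|_\infty\le C'\|h''\|_\infty$ for some $C'>1$, and in general nothing better: there is no rate for $\|(\Phi_n'\tilde\beta_n)''-h''\|_\infty$ or for $\|(\Phi_n'\tilde\beta_n)''-h''\|_{L^2}$ without extra smoothness of $h''$. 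Your ``duality plus inverse inequality'' step fails for a concrete reason: the $L^2$-to-$L^\infty$ inverse (Markov) inequality applies to elements of the spline space, but $(\Phi_n'\tilde\beta_n)''-h''$ is not a spline, so you cannot trade an $L^2$ bound on this difference for an $L^\infty$ bound at the cost of $\sqrt{K_n}$. Consequently your correction size $\gamma_n$ is, in the worst case, bounded below by a positive constant, and the scheme does not produce $\kappa_n\to 0$, let alone $O(K_n^{-1/2})$. (There is also a secondary issue: with the specific $\mathbb{T}$ in this application, enforcing $\mathbb{T}[\Phi_n'\beta_n]\le\mathbb{T}[h]$ componentwise forces $(\Phi_n'\beta_n)''=h''$ exactly, which is impossible; the paper in fact only establishes and only uses the weaker $\mathbb{T}[\Phi_n'\beta_n]\le c(x)$.)

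The paper's proof supplies the missing idea: it first mollifies $h$ by a local average $M_r[h]$ with window $r$, which costs at most $rC$ in sup-norm (since $h$ is Lipschitz with a constant depending on $c$ and $b-a$) but upgrades the second derivative to a Lipschitz function with constant $c/r$. One then approximates $(M_r[h])''$ by the second derivative of a spline; because the target is now Lipschitz, standard spline approximation gives an $L^\infty$ error of order $(c/r)\tilde\kappa_n$ with $\tilde\kappa_n=O(K_n^{-1})$. A simple rescaling then forces $|(\Phi_n'\beta)''|_\infty\le c$ exactly, and adjusting the affine part (the $1$ and $x$ components of the basis) matches the level and slope at the midpoint. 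Balancing the mollification error $O(r)$ against the second-derivative approximation error $O(K_n^{-1}/r)$ gives $r=K_n^{-1/2}$ and $\kappa_n=O(K_n^{-1/2})$. The mollification step is precisely what converts the merely bounded $h''$ into something for which an $L^\infty$ spline rate is available, and it is what your argument lacks.
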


\section{An Empirical Application}

To demonstrate the usefulness of our methods we apply them to setting in Section 5.1 of \citet{Horowitz2011}. \citet{Horowitz2011}
estimates a shape-invariant Engel curve for food using data
from the British Family Expenditure Survey.\footnote{We made use of the data file that accompanies \citet{Horowitz2011}
and adapted the accompanying code in order to evaluate Horowitz's
estimator and B-spline bases for our own methods.} Horowitz's application is in turn based on \citet{Blundell2007}
who also carry out NPIV estimation of shape-invariant Engel curves
and use the same data.

From \citet{Horowitz2011}: ``The data are 1655 household-level observations
from the British Family Expenditure Survey. The households consist
of married couples with an employed head-of-household between the
ages of 25 and 55 years.''

\citet{Blundell2007} and  \citet{Horowitz2011} aim to estimate `structural' Engel curves. Suppose a researcher were to exogenously allocate to a household a particular budget for non-durable goods. A
structural Engel curve $h_0$ measures the share of that budget the household would choose to allocate to a class of goods as a function of the budget size.

In observational settings, the share of household wealth allocated
to nondurables is decided by the
household. Therefore, the budget for nondurables depends on household
preferences. These preferences also determine the allocation of the budget to classes of non-durable goods. Thus expenditure on non-durable goods is endogenous.

To tackle this problem \citet{Blundell2007} and \citet{Horowitz2011} use household income as an instrument
for nondurables expenditure. There are many reasons to think that household income and preferences are correlated. Both tastes and income may depend on household size and socio-economic status, and those with expensive tastes may seek high-paying jobs. However, some shocks to household income may result from exogenous external factors like shocks to production costs in an employed householder's firm. If one controls for a sufficiently rich set of household covariates this may absorb the endogenous variation in income leaving only the exogenous variation. If the data do not contain rich enough household observables, then some endogeneity likely remains and the income instrument is unlikely to be fully valid.

\citet{Blundell2007}
and \citet{Horowitz2011} can only control for some coarse demographic variables.\footnote{Both papers control for demographics by analyzing a
homogeneous sub-sample. \citet{Blundell2007}
incorporate additional dummy variable controls.}
Therefore, in this setting instrumental validity may not hold exactly and it is of interest to assess what empirical findings are robust to some failure of instrumental validity.

In this setting $Y$ is the share of total expenditure on non-durables
that a household spends on food. $X$ is the
logarithm of the household's total expenditure and $Z$ is the logarithm
of household income. We take $\mathcal{H}$ to be the set of functions that map from $\mathcal{X}$
to the unit interval and have second derivative bounded
in magnitude by a constant $c$. We present results for $c=2$ and $c=5$, for comparison, the magnitude of the second derivatives of Horowitz's estimated structural
function never exceed $0.5$.

Following Subsection 2.1 we set $b=\hat{q}_\eta(0.5+\tau)-\hat{q}_\eta(0.5-\tau)$ for different values of $\tau$. $\hat{q}_\eta$ is an estimate of $q_\eta$ and is calculated by taking empirical quantiles of $Y_i -\hat{g}_n(Z_i)$. In particular we consider $\tau=0.01,0.05,0.1$ to correspond to mild, moderate, and severe misspecification. These values of $\tau$ correspond to $b=0.0043,0.024,0.046$.
 
Following \citet{Horowitz2011} we let $\Phi_{n}$ be fourth-order (cubic)
B-spline basis functions with evenly-spaced knot points.\footnote{See \citet{Boor2002} for a practical introduction to B-splines.} We carry out our first-stage estimates using series regression onto
cubic B-splines.  Motivated by the results in Theorem
2.2 we set $K_{n}$ to be large, specifically we let $K_{n}=10$.
The grid $\mathcal{X}_{n}$ consists of 100 evenly spaced points between
the smallest and largest observed values of $X$. The grid $\mathcal{Z}_{n}$
consists of 100 evenly spaced points between the $0.005$ and $0.995$
quantiles of $Z$.

Figure 4.1 contains the results of our procedure.
The figure contains six sub-figures each corresponding to a different
set of values for $\tau$ and $c$. The
lower and upper dotted lines represent $\underline{\hat{\theta}}_{h\mapsto h(x)}$
and $\hat{\bar{\theta}}_{{h\mapsto h(x)}}$, the upper and lower end points of the identified set for $h_0(x)$. The thick black line represents the half-way point between the end points, which is a point estimator with the smallest possible worst-case asymptotic bias under Assumptions $1.1^*$ and 1.2. The thin black
line is the estimator from \citet{Horowitz2011}.

\begin{figure}[ht]
\subfloat[]{
\includegraphics[scale=0.24]{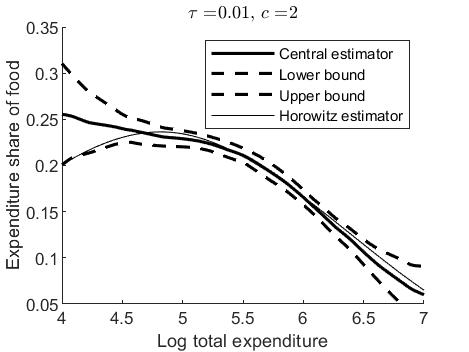}}\subfloat[]{
\includegraphics[scale=0.24]{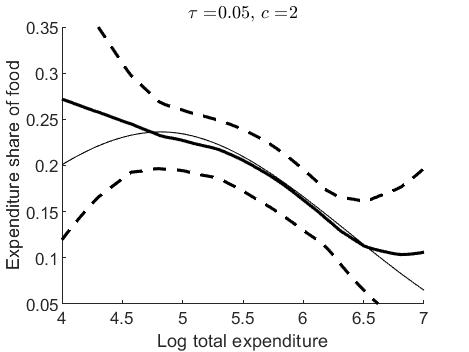}}\subfloat[]{
\includegraphics[scale=0.24]{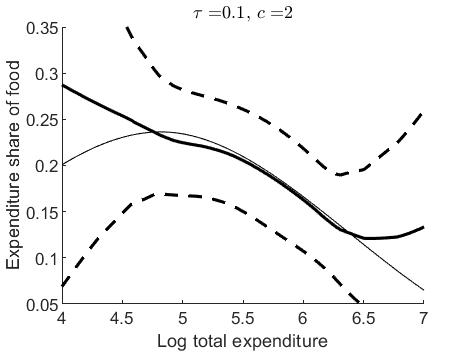}}\\
\subfloat[]{
\includegraphics[scale=0.24]{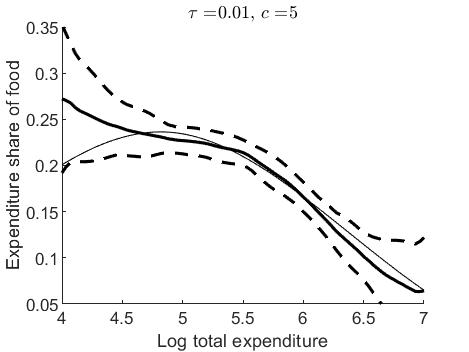}}\subfloat[]{
\includegraphics[scale=0.24]{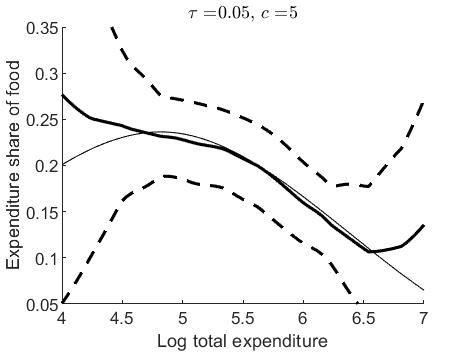}}\subfloat[]{
\includegraphics[scale=0.24]{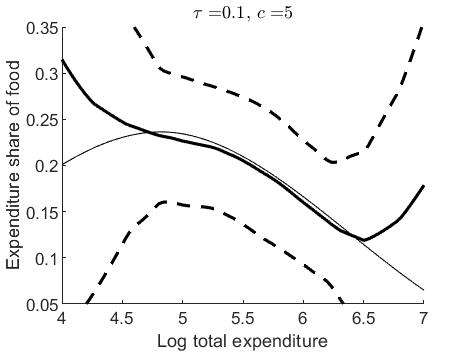}}\\
\caption{Set-Estimated Engel Curves}
{\scriptsize{}}%
\noindent\begin{minipage}[t]{1\columnwidth}%
{\scriptsize{}Results for various $\tau$ and $c$. Lower and upper dotted lines represent the end points of the identified set for $h_0(x)$ for each $x$. The thick black line is half-way between the dotted lines.
The thin black line is the estimate in \citet{Horowitz2011}.}%
\end{minipage}{\scriptsize\par}
\end{figure}

The results in Section 1 suggest that if the bound $c$ on the second
derivatives is loose then the identified set for $h_0(x)$ will be large, even
if $\tau$ (and hence $b$) is small. This is clear in Figure 4.1 which shows that for a given $\tau$, the intervals are wider when $c=5$ than when $c=2$. 

Apart from in the severely misspecified case of $\tau=0.1$ (Sub-figures (c) and (f)), the figures suggest a general downward slope in the Engel curve for medium values of total expenditure. More precisely, the lower envelope at log-total expenditure of $5$ exceeds the upper envelope at $6$, which implies the Engel curve has decreased between these two values.  None of the results in Figure 4.1 provide evidence in favor
of an upward sloping Engel curve for low values of total expenditure
as found by \citet{Horowitz2011} and so we conclude then that this finding is not robust to even a mild failure of instrumental validity.

\section*{Conclusions}

Our results show that identification and estimation in NPIV can be highly sensitive to misspecification, and that the sensitivity depends crucially upon the strength of a priori restrictions on the structural function. We develop a method for empirical sensitivity analysis in NPIV that allows researchers to better assess the relationship between the smoothness restrictions they impose and the robustness of their findings to misspecification.

We conjecture that our sensitivity results extend to a broader
class of conditional moment restriction models. The non-robustness
of NPIV estimators is tied to the ill-posedness of NPIV and a range of other nonparametric conditional moment restriction models are likewise ill-posed. It may be possible to adapt our empirical methods to these settings, although non-linearity of the conditional moment restriction would likely complicate estimation of the identified set. We leave these extensions for future work.

\bibliographystyle{authordate1}
\bibliography{npivcitesnew}

\appendix

\section{Additional Materials}

\subsection{Applying Theorem 1.2 to Specific Estimators}

We can apply Theorem 1.2 to assess the sensitivity of a particular NPIV estimator. Key is to find a set $\mathcal{S}$ that satisfies the conditions of the theorem. We can leverage existing asymptotic results for an NPIV estimator in order to find such a set. As an example, consider the estimator of \cite{Horowitz2011} (whose results we replicate in our empirical example). The estimator in \cite{Horowitz2011} is based on an estimator in \cite{Horowitz2012}.\footnote{While this appears to contradict the chronology, the 2011 paper cites an earlier pre-print of the 2012 paper.} The estimators in \cite{Horowitz2011} and \cite{Horowitz2012} differ in an important respect. While the latter constrains the estimate to be inside a Sobolev ball of fixed radius, the former does not. Theorem 1.2 applies for the unconstrained \cite{Horowitz2011} estimator but not the estimator in \cite{Horowitz2012}.

Consistency of the estimator in \cite{Horowitz2011} follows from results in \cite{Horowitz2012}. \cite{Horowitz2012} provides conditions on $\mu_{XZ\eta}$ so that the estimator is consistent whenever $h_0$ has Sobolev norm below some bound $C_0$. Consistency is defined using the $L_2[0,1]$-norm ($X$ is transformed to lie in the unit interval). The unconstrained estimator in \cite{Horowitz2011} does not depend on the specific level of $C_0$ and it appears in no other conditions, thus consistency only requires that $h_0$ has a finite Sobolev norm. Thus we can take $\mathcal{S}$ to be the space of functions with a finite Sobolev norm, and this is dense in $L_2[0,1]$. So in this case $\bar{\mathcal{S}}$ is the entire space $L_2[0,1]$. Thus if $\mathcal{U}$ contains an open ball at zero and assumptions 1.3-1.5 hold for $\mathcal{B}_X=L_2[0,1]$, the worst-case bias of Horowitz's estimator is infinite for all $b>0$.

As a second example, \cite{Darolles} prove consistency of their estimator when $h_{0}$ is in a space of functions $\Phi_{F}^{\beta}$ that obey a `source condition' (their Assumption A.2). Under statistical completeness, $\Phi_{F}^{\beta}$ is dense in $L_{2}(\mu_X)$ and so $\bar{\mathcal{S}}$ is the whole space $L_{2}(\mu_X)$.

\citet{Chen2018} study the supremum-norm consistency of a more general series two-stage least-squares estimator than \citet{Horowitz2011}. They suggest a Sobolev ball for the parameter space $\mathcal{H}$. The radius of the ball is left unspecified and the formula for the series two-stage least-squares estimator does not depend on this radius. Thus if the estimator is supremum-norm consistent when $h_0$ lies in one particular ball, it is consistent when $h_0$ lies in any ball. The space of all functions with finite Sobolev norm is dense in the set of continuous bounded functions on a compact support. So if $\mathcal{X}$ is compact we have $\bar{\mathcal{S}}$ equal to the whole space of continuous bounded functions.

\subsection{Extensions to the Set Estimator}
The results in Section 2 extend to a more general set estimation problem. For $k=1,..,K$ let $Z_k$ be a vector of instruments and let $b_{1,k}$ and $b_{2,k}$ be known functions.  We can replace the supremum norm constraint with a set of constraints:
\begin{equation}
	b_{1,k}(Z)\leq E[Y-h(X)|Z_k]\leq b_{2,k}(Z),k=1,...,K
\end{equation}
The supremum norm constraint in Section 2 is a special case of the above in which $K=1$ and $b_{1,1}(Z)=-b_{2,1}(Z)=b$. To estimate of $\bar{\theta}_\mathbb{L}$ and $\underline{\theta}_\mathbb{L}$ with the constraints above, we replace the corresponding constraint in the feasible problem with:
\[
b_{1,k}(z)\leq\hat{g}_{n,k}(z)-\hat{\Pi}_{n,k}'(z)\beta\leq b_{2,k}(z),\forall z\in\mathcal{Z}_{n,k},k=1,...,K
\]
Where $\hat{g}_{n,k}$ and $\hat{\Pi}_{n,k}$ are first-stage nonparametric estimates of $g_{0,k}(Z_k)=E[Y|Z_k]$ and $\Pi_{n,k}(Z_k)=E[\Phi_n|Z_k]$, and $\mathcal{Z}_{n,k}$ is a finite subset of the support of $Z_k$.
Theorem 2.0 in the appendix provides a rate of convergence for $\hat{\bar{\theta}}_\mathbb{L}$ and $\hat{\underline{\theta}}_\mathbb{L}$ in this more general set estimation problem. Theorem 2.1 is a special case of the more general theorem.

\subsection{Direct Finite-Sample Sensitivity}

Our set estimation method is designed to analyze the sensitivity of one's empirical findings to misspecification. It is not based on any particular estimator. In this subsection we suggest a method to assess the finite-sample sensitivity of a particular NPIV estimator and apply the method to the empirical setting in Section 3.

Consider a linear NPIV
estimator of the form $\hat{h}_{n}(x)=\sum_{i=1}^{n}\hat{q}_{i}(x)Y_{i}$
where $\hat{q}_{i}(x)$ may depend on the instruments $Z_{1},..,Z_{n}$
and endogenous regressors $X_{1},...,X_{n}$ but not the outcomes
$Y_{1},...,Y_{n}$. Many NPIV estimators take this form, for example those of \cite{Horowitz2011} and \cite{Darolles}. We can decompose the estimator as follows:

\[
\hat{h}_{n}(x)=\sum_{i=1}^{n}\hat{q}_{i}(x)E[h_{0}(X_{i})|Z_{i}]+\sum_{i=1}^{n}\hat{q}_{i}(x)u_{0}(Z_{i})+\sum_{i=1}^{n}\hat{q}_{i}(x)\eta_{i}
\]

Under correct specification $u_{0}(Z_{i})=0$ for all $i$ and keeping
all else fixed we would instead attain the estimate $\tilde{h}_{n}(x)$
given by: 
\[
\tilde{h}_{n}(x)=\sum_{i=1}^{n}\hat{q}_{i}(x)E[h_{0}(X_{i})|Z_{i}]+\sum_{i=1}^{n}\hat{q}_{i}(x)\eta_{i}
\]

Suppose we assume
that $|u_{0}(Z)|\leq b$. Then with probability $1$ the interval below contains the value the estimator would take under correct specification. This interval is not conservative, for
each point in the interval there is a function $u_{0}$ that satisfies
the constraint so that $\tilde{h}(x)$ is equal to that point.
\[
\tilde{h}(x)\in[\hat{h}_{n}(x)-b\sum_{i=1}^{n}|\hat{q}_{i}(x)|,\hat{h}_{n}(x)+b\sum_{i=1}^{n}|\hat{q}_{i}(x)|]
\]

A wide interval suggests that the presence of misspecification (of norm below $b$) can have a large effect on the the estimate. Note the width scales linearly with $b$ and at each point $x$ the width is determined by the factor $\sum_{i=1}^{n}|\hat{q}_{i}(x)|$ which we can take as a measure of the sensitivity to misspecification.

Below we apply the method above to analyze the sensitivity of the specific estimator in \cite{Horowitz2011}. The results are given in Figure A1 below. For  $\tau\geq 0.05$ these bounds are wider than those based on the identified set, even though the bounds do not account for any bias due to excessive Tikhonov regularization or the finite number of series terms used.\footnote{The estimates in \cite{Horowitz2011} employ a small degree of Tikhonov regularization although the description of the estimator in the paper omits this.} The maximum width of each interval is approximately $26$ times the corresponding value of $b$ (in each case $\tau$ is around twice the corresponding value of $b$). This demonstrates the substantial sensitivity of this estimator to misspecification.

\begin{figure}[ht]
	\subfloat[]{
		\includegraphics[scale=0.24]{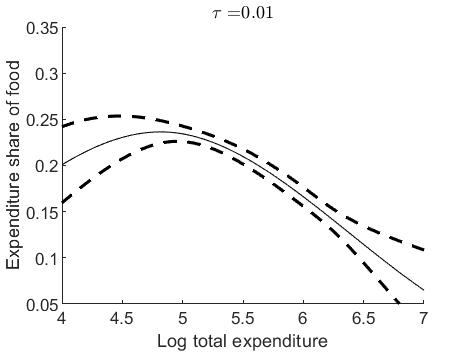}}\subfloat[]{
		\includegraphics[scale=0.24]{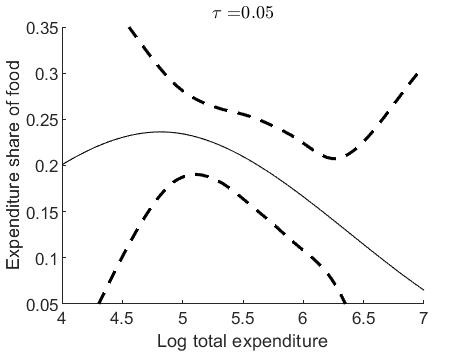}}\subfloat[]{
		\includegraphics[scale=0.24]{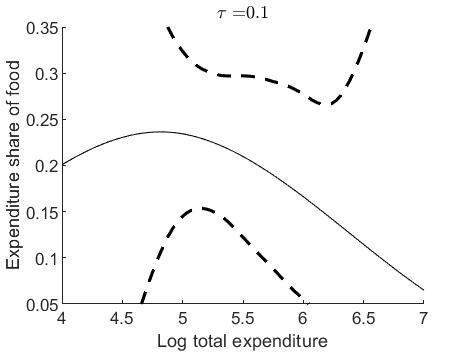}}\\
	
	\caption{Direct Finite-Sample Sensitivity}
	{\scriptsize{}}%
	\noindent\begin{minipage}[t]{1\columnwidth}%
		{\scriptsize{}Results for various $\tau$. Lower and upper dotted lines represent the end points of the interval for $\tilde{h}_n(x)$ for each $x$ as detailed in Subsection 2.3.	The thin black line is the estimate $\hat{h}_n$ in \citet{Horowitz2011}.}%
	\end{minipage}{\scriptsize\par}
\end{figure}

\section{Proofs}
\subsection{Proofs of Results in Section 1}

\newtheorem*{1.1}{Lemma 1.1}

\begin{1.1} 
	
	Suppose Assumption 1.4 holds. Then $\tilde{\mathcal{B}}_{X}$ is a
	Banach space with the norm $\|\cdot\|_{\mathcal{B}_{X}}$and $\mathbb{A}$
	is a compact infinite-dimensional linear operator from $\tilde{\mathcal{B}}_{X}$
	to $\mathcal{B}_{Z}$.
	
\end{1.1} 

\begin{proof} 
	
	First we show that $\tilde{\mathcal{B}}_{X}$ is a Banach space with
	the norm $\|\cdot\|_{\mathcal{B}_{X}}$, that is, a linear space that
	is complete with respect to $\|\cdot\|_{\mathcal{B}_{X}}$. Recall
	that $\tilde{\mathcal{B}}_{X}$ contains all elements of $\mathcal{B}_{X}$
	that have mean zero. Let $h_{1},h_{2}\in\tilde{\mathcal{B}}_{X}$
	and $\alpha,\beta\in\mathbb{R}$. Note that $E[h_{1}(X)]=E[h_{2}(X)]=0$
	and $h_{1},h_{2}\in\mathcal{B}_{X}$. Since $\mathcal{B}_{X}$ is
	a Banach space and thus linear, $\alpha h_{1}+\beta h_{2}\in\mathcal{B}_{X}$,
	and by linearity of the mean $E[\alpha h_{1}(X)+\beta h_{2}(X)]=0$.
	Thus $\alpha h_{1}+\beta h_{2}\in\tilde{\mathcal{B}}_{X}$ and so
	$\tilde{\mathcal{B}}_{X}$ is a linear space. To show $\mathcal{\tilde{B}}_{X}$
	is complete, let $\{h_{n}\}_{n=1}^{\infty}$ be a Cauchy sequence
	in $\mathcal{\tilde{B}}_{X}$. Since $\mathcal{B}_{X}$ is Banach
	and thus complete, this sequence converges (in the norm $\|\cdot\|_{\mathcal{B}_{X}}$)
	to an element $h_{\infty}\in\mathcal{B}_{X}$. By Assumption 1.5 the
	mapping of a function in $\mathcal{B}_{X}$ to its mean is continuous,
	and so $h_{n}\to h_{\infty}$ implies $E[h_{n}(X)]\to E[h_{\infty}(X)]$
	and since $E[h_{n}(X)]=0$ for all $n$, we must have $E[h_{\infty}(X)]=0$.
	Thus $h_{\infty}\in\tilde{\mathcal{B}}_{X}$ and so $\tilde{\mathcal{B}}_{X}$
	is complete.
	
	Finally we show that $\mathbb{A}$ is a compact operator from $\tilde{\mathcal{B}}_{X}$
	to $\mathcal{B}_{Z}$. By definition a compact operator maps bounded
	sets into relatively compact sets. Assumption 1.4 states that $\mathbb{A}$
	is a compact operator between $\mathcal{B}_{X}$ and $\mathcal{B}_{Z}$.
	Since any bounded set in $\tilde{\mathcal{B}}_{X}$ is also a bounded
	set in $\mathcal{B}_{X}$ we get that $\mathbb{A}$ is compact between
	$\tilde{\mathcal{B}}_{X}$ to $\mathcal{B}_{Z}$.
\end{proof} 

\newtheorem*{1.2}{Lemma 1.2}

\begin{1.2} 
	
	Suppose Assumption 1.4 holds, there is an open $\tilde{\mathcal{B}}_{X}$-ball
	centered at zero in $\mathbb{A}^{-1}[\mathcal{U}]$ of radius $r_{u}$,
	and an open $\tilde{\mathcal{B}}_{X}$-ball centered at an element
	$h^{*}$ in a set $\bar{\mathcal{S}}$ of radius $r_{s}$. Then for
	any $\delta>0$ there exist functions $h_{1},h_{2}\in\mathcal{S}$,
	so that for $i=1,2$, $\mathbb{A}[h_{i}-h^{*}]\in\mathcal{U}$, $\|\mathbb{A}[h_{i}-h^{*}]\|_{\mathcal{B}_{Z}}\leq b$,
	$\|h_{i}-h^{*}\|\geq\min\{r_{h},r_{u}\}-\delta$, and $\|h_{1}-h_{2}\|\geq2\min\{r_{s},r_{u}\}-\delta$.
	
\end{1.2} 

\begin{proof}
	
	From Lemma 1.1 $\tilde{\mathcal{B}}_{X}$ is a Banach space and $\mathbb{A}$
	is a compact infinite-dimensional operator from $\tilde{\mathcal{B}}_{X}$
	to $\mathcal{B}_{Z}$. Thus we can apply Theorem 15.4 (or 2.20) in \cite{Kress2014} to get that $\mathbb{A}^{-1}$ is unbounded and so:
	\begin{equation}
		\sup_{h\in\tilde{\mathcal{B}}_{X},\|h\|_{\mathcal{B}_{X}}=1}\|\mathbb{A}[h]\|_{\mathcal{B}_{Z}}=\infty\label{eq:illp-1}
	\end{equation}
	
	Let $\|\mathbb{A}\|$ be the operator norm of $\mathbb{A}$, this
	must be finite because $\mathbb{A}$ is compact and therefore bounded
	(see Theorem 2.14 in \cite{Kress2014}). By (\ref{eq:illp-1}) for any $0<\delta<2\min\{r_{s},r_{u},\frac{b}{\|\mathbb{A}\|}\}$
	there exists an element $\tilde{h}$ of $\tilde{\mathcal{B}}_{X}$
	so that $\|\tilde{h}\|_{\mathcal{B}_{X}}=\min\{r_{s},r_{u}\}-\frac{\delta}{2}$
	and $\|\mathbb{A}[\tilde{h}]\|_{\mathcal{B}_{Z}}\leq b-\frac{\delta}{2}\|\mathbb{A}\|$.
	By linearity of $\mathbb{A}$ and the elementary properties of norms
	we also have $\|-\tilde{h}\|_{\mathcal{B}_{X}}=\min\{r_{s},r_{u}\}-\frac{\delta}{2}$
	and $\|\mathbb{A}[-\tilde{h}]\|_{\mathcal{B}_{Z}}\leq b-\frac{\delta}{2}\|\mathbb{A}\|$.
	
	Because $\|\tilde{h}\|_{\mathcal{B}_{X}}\leq r_{h}$ we have $h^{*}+\tilde{h}\in\bar{\mathcal{S}}$
	and similarly $h^{*}-\tilde{h}\in\bar{\mathcal{S}}$. Since $\mathcal{S}$
	is dense in $\bar{\mathcal{S}}$ there exists an $h_{1}\in\mathcal{H}$
	so that $\|h^{*}+\tilde{h}-h_{1}\|_{\mathcal{B}_{X}}\leq\frac{\delta}{2}$
	in which case, by the triangle inequality:
	\[
	\|h_{1}-h^{*}\|_{\mathcal{B}_{X}}\leq\min\{r_{s},r_{u}\}
	\]
	
	The inequality above implies that $h_{1}-h^{*}\in\mathbb{A}^{-1}[\mathcal{U}]$
	and therefore $\mathbb{A}[h_{1}-h^{*}]\in\mathcal{U}$. Also by the
	definition of the operator norm and the triangle equality: 
	\begin{align*}
		\|\mathbb{A}[h_{1}-h^{*}]\|_{\mathcal{B}_{Z}} & \leq\|\mathbb{A}[\tilde{h}]\|_{\mathcal{B}_{Z}}+\|\mathbb{A}[h^{*}+\tilde{h}-h_{1}]\|_{\mathcal{B}_{Z}}\\
		& \leq\|\mathbb{A}[\tilde{h}]\|_{\mathcal{B}_{Z}}+\|\mathbb{A}\|\|h^{*}+\tilde{h}-h_{1}\|_{\mathcal{B}_{X}}\\
		& \leq b
	\end{align*}
	
	So in all $h_{1}\in\mathcal{S}$, $\mathbb{A}[h_{1}-h^{*}]\in\mathcal{U}$,
	$\|\mathbb{A}[h^{*}-h_{1}]\|_{\mathcal{B}_{Z}}\leq b$, and $\|h^{*}+\tilde{h}-h_{1}\|_{\mathcal{B}_{X}}\leq\frac{\delta}{2}$.
	Applying the same reasoning with $\tilde{h}$ replaced by $-\tilde{h}$
	we see that there exists an $h_{2}\in\mathcal{S}$ with $\mathbb{A}[h_{2}-h^{*}]\in\mathcal{U}$,
	and $\|\mathbb{A}[h^{*}-h_{2}]\|_{\mathcal{B}_{Z}}\leq b$, and $\|h^{*}-\tilde{h}-h_{2}\|_{\mathcal{B}_{X}}\leq\frac{\delta}{2}$.
	Now that note that by the triangle inequality:
	
	\begin{align*}
		\|h_{1}-h^{*}\|_{\mathcal{B}_{X}} & \geq\|\tilde{h}\|_{\mathcal{B}_{X}}-\|h^{*}+\tilde{h}-h_{1}\|_{\mathcal{B}_{X}}\\
		& \geq\min\{r_{s},r_{u}\}-\delta
	\end{align*}
	
	And similarly $\|h_{2}-h^{*}\|_{\mathcal{B}_{X}}\geq\min\{r_{s},r_{u}\}-\delta$.
	Moreover, by the triangle inequality:
	\begin{align*}
		\|h_{1}-h_{2}\|_{\mathcal{B}_{X}} & \geq\|h^{*}+\tilde{h}-(h^{*}-\tilde{h})\|_{\mathcal{B}_{X}}-\delta\\
		& =2\|\tilde{h}\|_{\mathcal{B}_{X}}-2\delta\\
		& =2\min\{r_{s},r_{u}\}-2\delta
	\end{align*}
\end{proof} 

\newtheorem*{1.3}{Lemma 1.3}

\begin{1.3} 
	
	Suppose Assumptions 1.3 and 1.4 hold. If $\mathcal{S}$ is a compact
	subset of $\mathcal{B}_{X}$ then:
	
	\[
	\lim_{b\to0}\sup_{u\in\mathbb{A}[\mathcal{S}],\|u\|_{\mathcal{B}_{Z}}\leq b}\|\mathbb{A}^{-1}[u]\|_{\mathcal{B}_{X}}=0
	\]
	
\end{1.3} 

\begin{proof}
	
	Since Assumption 1.3 holds, $\mathbb{A}$ is injective. Denote the
	restriction of $\mathbb{A}$ to $\mathcal{S}$ by $\mathbb{A}_{\mathcal{S}}$
	and its inverse by $\mathbb{A}_{\mathcal{S}}^{-1}$. It is well-known
	that a continuous and injective function defined on a compact set
	has a continuous inverse. So by compactness of $\mathcal{S}$, $\mathbb{A}_{\mathcal{S}}^{-1}$
	is continuous. Continuity of $\mathbb{A}_{\mathcal{S}}^{-1}$ implies:
	\[
	\lim_{b\to0}\sup_{u\in\mathbb{A}[\mathcal{S}],\|u\|_{\mathcal{B}_{Z}}\leq b}\|\mathbb{A}_{\mathcal{S}}^{-1}[u]\|_{\mathcal{B}_{X}}=0
	\]
	
	The final result follows because $\mathbb{A}_{\mathcal{S}}^{-1}[u]$
	and $\mathbb{A}^{-1}[u]$ coincide for $u\in\mathbb{A}[\mathcal{S}]$.	
\end{proof}

\newtheorem*{1.4}{Lemma 1.4} \begin{1.4} Suppose Assumptions
	1.3-1.5 hold, there is an open $\tilde{\mathcal{B}}_{X}$-ball centered
	at zero in $\mathbb{A}^{-1}[\mathcal{U}]$, and the set $\text{\ensuremath{\mathcal{S}}\ensuremath{\subset\mathcal{B}_{X}}}$
	is such that $\mathcal{S}\cap\tilde{\mathcal{B}}_{X}$ is absolutely
	convex and infinite dimensional. Let $h^{*}\in\mathcal{S}$ and suppose
	there exists $\alpha\in(0,1)$ so that $\frac{1}{\alpha}h^{*}\in\mathcal{S}$.
	Then:
	\[
	\lim_{b\to0}\frac{\sup_{h\in\mathcal{S},\mathbb{A}[h-h^{*}]\in\mathcal{U}:\,\|\mathbb{A}[h-h^{*}]\|_{\mathcal{B}_{Z}}\leq b}\|h-h^{*}\|_{\mathcal{B}_{X}}}{b}=\infty
	\]
\end{1.4} \begin{proof} Assume the contrary, then for some $b^{*}>0$
	there exists a finite scalar $C$ so that for any $h\in\mathcal{S}$
	with $\mathbb{A}[h-h^{*}]\in\mathcal{U}$:
	\[
	\|\mathbb{A}[h-h^{*}]\|_{\mathcal{B}_{Z}}\leq b^{*}\implies\|h-h^{*}\|_{\mathcal{B}_{X}}\leq C\|\mathbb{A}[h-h^{*}]\|_{\mathcal{B}_{Z}}
	\]
	
	By definition of the operator norm of $\mathbb{A}$, $\|\mathbb{A}\|_{op}$,
	$\|\mathbb{A}[h-h^{*}]\|_{\mathcal{B}_{Z}}\leq\|\mathbb{A}\|_{op}\|h-h^{*}\|_{\mathcal{B}_{X}}$.
	Assumption 1.3 implies that $\|\mathbb{A}\|_{op}>0$. And so for any
	$h\in\mathcal{S}$ with $\mathbb{A}[h-h^{*}]\in\mathcal{U}$ we must
	have: 
	\[
	\|h-h^{*}\|_{\mathcal{B}_{X}}\leq\frac{1}{\|\mathbb{A}\|_{op}}b^{*}\implies\|h-h^{*}\|_{\mathcal{B}_{X}}\leq C\|\mathbb{A}[h-h^{*}]\|_{\mathcal{B}_{Z}}
	\]
	
	Recall that there is an open $\tilde{\mathcal{B}}_{X}$-ball centered
	at zero in $\mathbb{A}^{-1}[\mathcal{U}]$, call the radius of this
	ball $r_{u}$. If $\|h-h^{*}\|_{\mathcal{B}_{X}}\leq r_{u}$ and $h-h^{*}\in\tilde{\mathcal{B}}_{X}$
	it follows that $\mathbb{A}[h-h^{*}]\in\mathcal{U}$. Let $c^{*}=\min\{\frac{1}{\|\mathbb{A}\|_{op}}b^{*},r_{u}\}$,
	we get that for any $h\in\mathcal{S}$ with $h-h^{*}\in\tilde{\mathcal{B}}_{X}$:
	\[
	\|h-h^{*}\|_{\mathcal{B}_{X}}\leq c^{*}\implies\|h-h^{*}\|_{\mathcal{B}_{X}}\leq C\|\mathbb{A}[h-h^{*}]\|_{\mathcal{B}_{Z}}
	\]
	
	We have $\frac{1}{\alpha}h^{*}\in\mathcal{S}$ for some $\alpha\in(0,1)$
	and $\mathcal{S}$ is convex. So for any $h\in\mathcal{S}\cap\tilde{\mathcal{B}}_{X}$,
	$(1-\alpha)h+\alpha(\frac{1}{\alpha}h^{*})=h'\in\mathcal{S}$ in which
	case $(1-\alpha)h=h'-h^{*}\in\tilde{\mathcal{B}}_{X}$. Thus any $h\in\mathcal{S}\cap\tilde{\mathcal{B}}_{X}$
	can be written as $h'-h^{*}$ for some $h'\in\mathcal{S}$ and $h-h^{*}\in\tilde{\mathcal{B}}_{X}$
	and $\|h'-h^{*}\|_{\mathcal{B}_{X}}=(1-\alpha)\|h\|_{\mathcal{B}_{X}}$
	. Therefore, for any $h\in\mathcal{S}\cap\tilde{\mathcal{B}}_{X}$:
	\[
	\|h\|_{\mathcal{B}_{X}}\leq\frac{1}{1-\alpha}c^{*}\implies\|h\|_{\mathcal{B}_{X}}\leq C\|\mathbb{A}[h]\|_{\mathcal{B}_{Z}}
	\]
	Let $R$ be the closed ball in $\mathcal{\tilde{B}}_{X}$ of radius
	$\frac{1}{1-\alpha}\min\{\frac{1}{\|\mathbb{A}\|_{op}}b^{*},r_{u}\}$.
	Let $\mathcal{C}=[\gamma h:\,h\in R\cap\mathcal{S},\gamma\in\mathbb{R}]$.
	We have already shown that for any $h\in R\cap\mathcal{S}$, $\|h\|_{\mathcal{B}_{X}}\leq C\|\mathbb{A}[h]\|_{\mathcal{B}_{Z}}$.
	By linearity of $\mathbb{A}$ and properties of norms, for any $h\in\mathcal{C}$
	we have that $\|h\|_{\mathcal{B}_{X}}\leq C\|\mathbb{A}[h]\|_{\mathcal{B}_{Z}}$.
	
	Let $\bar{\mathcal{C}}$ be the closure of $\mathcal{C}$. For any
	$h\in\bar{\mathcal{C}}$ there is a sequence $h_{k}$ in $\mathcal{C}$
	so that $\|h-h_{k}\|_{\mathcal{B}_{X}}\to0$. For all $k$, $\|h_{k}\|_{\mathcal{B}_{X}}\leq C\|\mathbb{A}[h_{k}]\|_{\mathcal{B}_{Z}}$,
	so by the triangle inequality and the definition of the operator norm:
	\[
	\|h\|_{\mathcal{B}_{X}}\leq C\|\mathbb{A}[h]\|_{\mathcal{B}_{Z}}+(1+C\|\mathbb{A}\|_{op})\|h-h_{k}\|_{\mathcal{B}_{X}}
	\]
	$\mathbb{A}$ is continuous so $\|\mathbb{A}\|_{op}<\infty$, and
	so since $\|h-h_{k}\|_{\mathcal{B}_{X}}\to0$ we get $\|h\|_{\mathcal{B}_{X}}\leq C\|\mathbb{A}[h]\|_{\mathcal{B}_{Z}}$.
	Thus the inverse $\mathbb{A}^{-1}$ (which exists by Assumption 1.3)
	is bounded on $\bar{\mathcal{C}}$.
	
	Now, $R\cap\mathcal{S}$ is absolutely convex which implies $\mathcal{C}$
	is a linear space and therefore so is $\bar{\mathcal{C}}$. Because
	$\mathcal{S}$ is infinite-dimensional and absolutely convex, $\mathcal{C}$
	is infinite-dimensional and likewise $\bar{\mathcal{C}}$. It is well-known
	that a closed subset of a complete space is complete, $\bar{\mathcal{C}}$
	is a closed subset of $\tilde{\mathcal{B}}_{X}$ by construction and
	$\tilde{\mathcal{B}}_{X}$ is a Banach space (see Lemma 1.1) and thus
	complete in the norm $\|\cdot\|_{\mathcal{B}_{X}}$. Thus $\bar{\mathcal{C}}$
	is an infinite-dimensional, complete linear space, i.e., an infinite-dimensional
	Banach space. But the inverse of a compact injective operator on an
	infinite-dimensional Banach space cannot be bounded (see Theorem 15.4 in \cite{Kress2014}), and
	so we have a contradiction. \end{proof}

\begin{proof}[Proof of Theorem 1.1]
	We begin with part a. Apply Lemma 1.2 with $\mathcal{S}=\mathcal{H}$ and $h^{*}$ the element
	of $\mathbb{A}^{-1}[g_{0}]$ so that $\bar{\mathcal{H}}$ contains
	an open ball of radius $r_{h}$ centered at this element. We see that
	for any $\delta>0$ there exists $h_{1},h_{2}\in\mathcal{H}$, with
	$\|h_{1}-h_{2}\|_{\mathcal{B}_{X}}\geq2\min\{r_{h},r_{u}\}-2\delta$
	and for $i=1,2$, $\mathbb{A}[h_{i}]-g_{0}\in\mathcal{U}$, and $\|g_{0}-\mathbb{A}[h_{i}]\|_{\mathcal{B}_{Z}}\leq b$.
	Thus for $i=1,2$, $h_{i}\in\Theta_{b}$. Since we make $\delta$
	arbitrarily small, it follows that the diameter of $\Theta_{b}$ satisfies:
	\[
	\sup_{h_{1},h_{2}\in\Theta_{b}}\|h_{1}-h_{2}\|_{\mathcal{B}_{X}}\geq2\min\{r_{h},r_{u}\}
	\]
	
	Now consider part b. First we show that $\lim_{b\to0}diam(\Theta_{b})=0$. It is well known that a continuous function maps compact sets into
	compact sets. By Assumption 1.4 $\mathbb{A}$ is compact and therefore
	continuous. Thus $\mathbb{A}[\mathcal{H}]$ is compact. It is also
	well-known that a compact set in a Hausdorff space is closed, and
	so $\mathbb{A}[\mathcal{H}]$ is closed. Suppose $g_{0}\notin\mathbb{A}[\mathcal{H}]$,
	then by closedness of $\mathbb{A}[\mathcal{H}]$ there exists an open
	ball centered at $g_{0}$ that does not intersect $\mathbb{A}[\mathcal{H}]$,
	in which case for $b$ sufficiently small, there is no $h\in\mathcal{H}$
	so that $\|g_{0}-\mathbb{A}[h]\|_{\mathcal{B}_{Z}}<b$ and so the
	identified set is empty for $b$ sufficiently small and $\lim_{b\to\infty}diam(\Theta_{b})=0$
	holds trivially. So it remains to consider the case of $g_{0}\in\mathbb{A}[\mathcal{H}]$
	and thus we assume this for the remainder of the theorem.
	
	Now, given $g_{0}\in\mathbb{A}[\mathcal{H}]$ and $\mathbb{A}$ is
	injective by Assumption 1.3, we have:
	\begin{align*}
		\sup_{h_{1},h_{2}\in\Theta_{b}}\|h_{1}-h_{2}\|_{\mathcal{B}_{X}}		& \leq2\sup_{h\in\Theta_{b}}\|h-\mathbb{A}^{-1}[g_{0}]\|_{\mathcal{B}_{X}}\\
		& \leq2\sup_{h\in\mathcal{H}:\,\|\mathbb{A}[h]-g_{0}\|_{\mathcal{B}_{Z}}\leq b}\|h-\mathbb{A}^{-1}[g_{0}]\|_{\mathcal{B}_{X}}\\
		& \leq\sup_{u\in\mathbb{A}[\mathcal{H}]-g_{0}:\,\|u\|_{\mathcal{B}_{Z}}\leq2b}\|\mathbb{A}^{-1}[u]\|_{\mathcal{B}_{X}}
	\end{align*}
	
	Where the first inequality follows by the triangle inequality, the
	second because the set $[h\in\mathcal{H}:\,\|\mathbb{A}[h]-g_{0}\|\leq b]$
	is a subset of $\Theta_{b}$, and the final inequality by a reparameterization.
	Note that $\mathbb{A}[\mathcal{H}]-g_{0}$ is defined so that $u\in\mathbb{A}[\mathcal{H}]-g_{0}$
	if and only if $u-g_{0}\in\mathbb{A}[\mathcal{H}]$. 
	
	Finally, since $\mathcal{H}$ is compact, it follows that $\mathcal{H}-\mathbb{A}^{-1}[g_{0}]$
	is compact. Thus we can apply Lemma 1.4 with $\mathcal{S}=\mathcal{H}-\mathbb{A}^{-1}[g_{0}]$
	and we get the result.
	
	Now we show that $\lim_{b\to0}diam(\Theta_{b})/b=\infty$. Since $\mathbb{A}^{-1}[g_{0}]\in\Theta_b$:
	\begin{align*}
		\sup_{h_{1},h_{2}\in\Theta_{b}}\|h_{1}-h_{2}\|_{\mathcal{B}_{X}}
		\geq & \sup_{h\in\Theta_{b}}\|h-\mathbb{A}^{-1}[g_{0}]\|_{\mathcal{B}_{X}}\\
		= & \sup_{h\in\mathcal{S},\mathbb{A}[h]-g_{0}\in\mathcal{U}:\,\|\mathbb{A}[h]-g_{0}\|_{\mathcal{B}_{Z}}\leq b}\|h-\mathbb{A}^{-1}[g_{0}]\|_{\mathcal{B}_{X}}
	\end{align*}
	
	Applying Lemma 1.4 with $\mathcal{S}=\mathcal{H}$ and $h^{*}=\mathbb{A}^{-1}[g_{0}]$
	then gives the result. \end{proof} 

\begin{proof}[Proof of Theorem 1.2]
	
	Applying Lemma 1.2 with $h^{*}=h_{0}$, there exists an $h\in\mathcal{S}$
	so that $\mathbb{A}[h-h_{0}]\in\mathcal{U}$, $\|\mathbb{A}[h-h_{0}]\|_{\mathcal{B}_{Z}}\leq b$,
	and $\|h-h_{0}\|\geq\min\{r_{s},r_{u}\}-\delta$. Fix $u_{0}=\mathbb{A}[h-h_{0}]$,
	then $u_{0}\in\mathcal{U}$ and $\|u_{0}\|_{\mathcal{B}_{Z}}\leq b$.
	By Assumption 1.3 $\mathbb{A}^{-1}[u_{0}]$ is singleton which implies
	$h-h_{0}=\mathbb{A}^{-1}[u_{0}]$, therefore we have $\|\mathbb{A}^{-1}[u_{0}]\|\geq\min\{r_{s},r_{u}\}-\delta$.
	Since $\delta$ can be made arbitrarily small we get:
	
	\[
	\sup_{u_{0}\in\mathcal{U}:\,\|u_{0}\|_{\mathcal{B}_{Z}}\leq b}\|\mathbb{A}^{-1}[u_{0}]\|_{\mathcal{B}_{X}}\geq\min\{r_{s},r_{u}\}
	\]
	
	By supposition $\hat{h}_{n}$ is consistent for $h_{0}$ under Assumptions
	1.1 and 1.2 whenever $h_{0}\in\mathcal{S}$. Under Assumption 1.1,
	$g_{0}=\mathbb{A}[h_{0}]$ and under Assumption 1.2, $\mathbb{A}^{-1}[g_{0}]$
	is singleton and thus $h_{0}=\mathbb{A}^{-1}[g_{0}]$. Therefore whenever
	$\mathbb{A}^{-1}[g_{0}]\in\mathcal{S}$, we have $\underset{n\to\infty}{\text{plim}}\|\hat{h}_{n}-\mathbb{A}^{-1}[g_{0}]\|_{\mathcal{B}_{X}}=0$.
	Now, by definition of $u_{0}$, $g_{0}=\mathbb{A}[h_{0}]+u_{0}$ and
	so:
	\[
	\hat{h}_{n}-h_{0}=\hat{h}_{n}-\mathbb{A}^{-1}[g_{0}]+\mathbb{A}^{-1}[u_{0}]
	\]
	
	By the triangle inequality we get:
	\[
	\|\hat{h}_{n}-h_{0}\|_{\mathcal{B}_{X}}\geq\|\mathbb{A}^{-1}[u_{0}]\|_{\mathcal{B}_{X}}-\|\hat{h}_{n}-\mathbb{A}^{-1}[g_{0}]\|_{\mathcal{B}_{X}}
	\]
	
	Recall that $u_{0}=\mathbb{A}[h-h_{0}]$ for an $h\in\mathcal{S}$.
	Thus we have $\mathbb{A}^{-1}[g_{0}]=h\in\mathcal{S}$ and so $\underset{n\to\infty}{\text{plim}}\|\hat{h}_{n}-\mathbb{A}^{-1}[g_{0}]\|_{\mathcal{B}_{X}}=0$.
	From the above we then get:
	\[
	\underset{n\to\infty}{\text{plim}}\|\hat{h}_{n}-h_{0}\|_{\mathcal{B}_{X}}\geq\|\mathbb{A}^{-1}[u_{0}]\|_{\mathcal{B}_{X}}
	\]
	
	Recall that $\|\mathbb{A}^{-1}[u_{0}]\|\geq\min\{r_{s},r_{u}\}-\delta$
	and so:
	\[
	\underset{n\to\infty}{\text{plim}}\|\hat{h}_{n}-h_{0}\|_{\mathcal{B}_{X}}\geq\min\{r_{s},r_{u}\}-\delta
	\]
	
	Since our choice of $u_{0}$ satisfies $u_{0}\in\mathcal{U}$ and
	$\|u_{0}\|_{\mathcal{B}_{Z}}\leq b$ we have:
	\[
	\sup_{u_{0}\in\mathcal{U}:\,\|u_{0}\|_{\mathcal{B}_{Z}}\leq b}\underset{n\to\infty}{\text{plim}}\|\hat{h}_{n}-h_{0}\|_{\mathcal{B}_{X}}\geq\min\{r_{s},r_{u}\}-\delta
	\]
	
	Since we can set $\delta$ arbitrarily small we get:
	
	\[
	\sup_{u_{0}\in\mathcal{U}:\,\|u_{0}\|_{\mathcal{B}_{Z}}\leq b}\underset{n\to\infty}{\text{plim}}\|\hat{h}_{n}-h_{0}\|_{\mathcal{B}_{X}}\geq\min\{r_{s},r_{u}\}
	\]
	
	Using the definition of the worst-case asymptotic bias then gives the result. \end{proof}

\begin{proof}[Proof of Theorem 1.3]
	
	It is well-known that a continuous and injective function defined
	on a compact set has a continuous inverse. So by compactness of $\mathcal{S}$,
	the restriction of $\mathbb{A}^{-1}$ to $\mathbb{A}[\mathcal{S}]$
	is continuous. $P_{Z}$ is continuous by assumption, and maps into
	$\mathbb{A}[\mathcal{S}]$. Hence $\mathbb{A}^{-1}P_{Z}$ is continuous.
	By continuity, for any $g\in\mathcal{B}_{Z}$: 
	\[
	\lim_{b\to0}\sup_{g'\in\mathcal{B}_{Z}:\,\|g'-g\|\leq b}\|\mathbb{A}^{-1}P_{Z}[g']-\mathbb{A}^{-1}P_{Z}[g]\|_{\mathcal{B}_{X}}=0
	\]
	
	Set $g=\mathbb{A}[h_{0}]$ in the above. Since $h_{0}\in\mathcal{S}$
	we have $\mathbb{A}^{-1}P_{Z}\mathbb{A}[h_{0}]=h_{0}$. Reparameterizing
	in terms of $u_{0}$ and using $g_{0}=u_{0}+\mathbb{A}[h_{0}]$ we
	get: 
	\[
	\lim_{b\to0}\sup_{u_{0}\in\mathcal{B}_{Z}:\,\|u_{0}\|\leq b}\|\mathbb{A}^{-1}P_{Z}[g_{0}]-h_{0}\|_{\mathcal{B}_{X}}=0
	\]
	Further restrictions on $u_{0}$ can only decrease the supremum and
	so:
	\begin{equation}
		\lim_{b\to0}\sup_{u_{0}\in\mathcal{U}:\,\|u_{0}\|\leq b}\|\mathbb{A}^{-1}P_{Z}[g_{0}]-h_{0}\|_{\mathcal{B}_{X}}=0\label{eq:limusm}
	\end{equation}
	
	Now, by the triangle inequality:
	\[
	\|\hat{h}_{n}-h_{0}\|_{\mathcal{B}_{X}}\leq\|\mathbb{A}^{-1}P_{Z}[g_{0}]-h_{0}\|_{\mathcal{B}_{X}}+\|\hat{h}_{n}-\mathbb{A}^{-1}P_{Z}[g_{0}]\|_{\mathcal{B}_{X}}
	\]
	Since $\|\hat{h}_{n}-\mathbb{A}_{\mathcal{H}}^{-1}P_{Z}[g_{0}]\|_{\mathcal{B}_{X}}\to^{p}0$
	it follows that: 
	\begin{equation}
		\underset{n\to\infty}{\text{plim}}\|\hat{h}_{n}-h_{0}\|_{\mathcal{B}_{X}}=\|\mathbb{A}^{-1}P_{Z}[g_{0}]-h_{0}\|_{\mathcal{B}_{X}}\label{eq:cove}
	\end{equation}
	
	Substituting the above into (\ref{eq:limusm}) and using the definition
	of the worst case asymptotic bias gives the result.
	
	For the second part of the theorem we apply Lemma 1.4 with $h^{*}=h_{0}$
	and reparameterize $h=h_{0}+\mathbb{A}^{-1}[u_{0}]$ and use $g_{0}=u_{0}+\mathbb{A}[h_{0}]$
	to get:
	\[
	\lim_{b\to0}\frac{\sup_{u_{0}\in\mathbb{A}[\mathcal{S}]-h_{0},u_{0}\in\mathcal{U}:\,\|u_{0}\|_{\mathcal{B}_{Z}}\leq b}\|\mathbb{A}^{-1}[g_{0}]-h_{0}\|_{\mathcal{B}_{X}}}{b}=\infty
	\]
	
	Note that $u_{0}\in\mathbb{A}[\mathcal{S}-h_{0}]$ and $h\in\mathcal{S}$
	implies $g_{0}\in\mathbb{A}[\mathcal{S}]$ and so $\mathbb{A}^{-1}[g_{0}]=\mathbb{A}^{-1}P_{Z}[g_{0}]$.
	Substituting we get:
	\[
	\lim_{b\to0}\frac{\sup_{u_{0}\in\mathbb{A}[\mathcal{S}]-h_{0},u_{0}\in\mathcal{U}:\,\|u_{0}\|_{\mathcal{B}_{Z}}\leq b}\|\mathbb{A}^{-1}P_{Z}[g_{0}]-h_{0}\|_{\mathcal{B}_{X}}}{b}=\infty
	\]
	
	The supremum in the above must be smaller than the supremum without
	the constraint $u_{0}\in\mathbb{A}[\mathcal{S}]-h_{0}$, and so we
	have:
	\[
	\lim_{b\to0}\frac{\sup_{u_{0}\in\mathcal{U}:\,\|u_{0}\|_{\mathcal{B}_{Z}}\leq b}\|\mathbb{A}^{-1}P_{Z}[g_{0}]-h_{0}\|_{\mathcal{B}_{X}}}{b}=\infty
	\]
	
	Substituting (\ref{eq:cove}) and using the definition of the worst-case
	bias then gives the result.
\end{proof}

\begin{proof}[Proof of Theorem 1.4] Let us introduce notation.
	Let the $L_{2}(\mu_{X})$ and $L_{2}(\mu_{Z})$ inner products be
	$\langle\cdot,\cdot\rangle_{L_{2}(\mu_{X})}$ and $\langle\cdot,\cdot\rangle_{L_{2}(\mu_{Z})}$
	respectively. The linear functional of interest $\mathbb{L}[h_{0}]$
	can then be written as $\mathbb{L}[h_{0}]=\langle w,h_{0}\rangle_{L_{2}(\mu_{X})}$.
	The adjoint of the operator $\mathbb{A}$, denoted $\mathbb{A}^{*}$
	is given by $\mathbb{A}^{*}[g](X)=E[g(Z)|X]$.
	
	First we prove the following: 
	\begin{align}
		bias_{\hat{l}_{n}}(b) & =\sup_{u_{0}\in\mathbb{A}[L_{2}(\mu_{X})]:\,\|u_{0}\|_{L_{2}(\mu_{Z})}\leq b}|E[w(X)\mathbb{A}^{-1}[u_{0}](X)]|\nonumber \\
		& =\sup_{u_{0}\in\mathbb{A}[L_{2}(\mu_{X})]:\,\|u_{0}\|_{L_{2}(\mu_{Z})}\leq b}|\langle w,\mathbb{A}^{-1}[u_{0}]\rangle_{L_{2}(\mu_{X})}|\label{eq:ass1}
	\end{align}
	If instruments are valid then $h_{0}=\mathbb{A}^{-1}[g_{0}]$ and
	$g_{0}\in\mathbb{A}[L_{2}(\mu_{X})]$. So consistency under instrumental
	validity, implies that if $g_{0}\in\mathbb{A}[L_{2}(\mu_{X})]$ then:
	\[
	|\hat{l}_{n}-E[w(X)\mathbb{A}^{-1}[g_{0}](X)]|\to^{p}0
	\]
	Let $u_{0}\in\mathbb{A}[L_{2}(\mu_{X})]$. Using $g_{0}=\mathbb{A}[h_{0}]+u_{0}$:
	\[
	|E[w(X)\mathbb{A}^{-1}[g_{0}](X)]-\mathbb{L}[h_{0}]|=|E[w(X)\mathbb{A}^{-1}[u_{0}](X)]|
	\]
	And so: 
	\[
	\underset{n\to\infty}{\text{plim}}|\hat{l}_{n}-\mathbb{L}[h_{0}]|=|E[w(X)\mathbb{A}^{-1}[u_{0}](X)]|
	\]
	Applying the definition of the worst-case asymptotic bias then gives
	(\ref{eq:ass1}).
	
	Next we show that :
	\begin{align}
		diam(\mathbb{L}(\Theta_{b})) & =2\sup_{u_{0}\in\mathbb{A}[L_{2}(\mu_{X})]:\,\|u_{0}\|_{L_{2}(\mu_{Z})}\leq b}|E[w(X)\mathbb{A}^{-1}[u_{0}](X)]|\nonumber \\
		& =2\sup_{u_{0}\in\mathbb{A}[L_{2}(\mu_{X})]:\,\|u_{0}\|_{L_{2}(\mu_{Z})}\leq b}|\langle w,\mathbb{A}^{-1}[u_{0}]\rangle_{L_{2}(\mu_{X})}|\label{eq:ass2l}
	\end{align}
	
	First we note that by the triangle inequality:
	\begin{align}
		& \sup_{\mathbb{A}^{-1}[g_{0}]+h,\mathbb{A}^{-1}[g_{0}]-h\in\Theta_{b}}|\mathbb{L}[\mathbb{A}^{-1}[g_{0}]+h]-\mathbb{L}[\mathbb{A}^{-1}[g_{0}]-h]|\nonumber \\
		\leq & diam(\mathbb{L}(\Theta_{b}))\nonumber \\
		\leq & 2\sup_{h\in\Theta_{b}}|\mathbb{L}[\mathbb{A}^{-1}[g_{0}]-h]|\label{eq:boundsin}
	\end{align}
	
	Given our choice of $\mathcal{U}$ and $\mathcal{H}$ the identified
	set is symmetric around $\mathbb{A}^{-1}[g_{0}]$, thus $\mathbb{A}^{-1}[g_{0}]-h\in\Theta_{b}$
	implies $\mathbb{A}^{-1}[g_{0}]+h\in\Theta_{b}$. So (using linearity
	of $\mathbb{L}$) the lower bound is equal to 
	\[
	\sup_{\mathbb{A}^{-1}[g_{0}]-h\in\Theta_{b}}|\mathbb{L}[\mathbb{A}^{-1}[g_{0}]+h]-\mathbb{L}[\mathbb{A}^{-1}[g_{0}]-h]|=2\sup_{\mathbb{A}^{-1}[g_{0}]-h\in\Theta_{b}}|\mathbb{L}[h]|
	\]
	
	Using the respresentation of $\mathbb{L}$ and our choice of $\mathcal{U}$
	and $\mathcal{H}$ we have:
	\[
	\sup_{\mathbb{A}^{-1}[g_{0}]-h\in\Theta_{b}}|\mathbb{L}[h]|=\sup_{h\in L_{2}(\mu_{X}):\,\|\mathbb{A}[\mathbb{A}^{-1}[g_{0}]-h]-g_{0}\|_{L_{2}(\mu_{Z})}\leq b}|E[w(X)h(X)]|
	\]
	
	Reparameterizing $u_{0}=\mathbb{A}[h]$ we get:
	\[
	\sup_{\mathbb{A}^{-1}[g_{0}]-h\in\Theta_{b}}|\mathbb{L}[h]|=\sup_{u_{0}\in\mathbb{A}[L_{2}(\mu_{X})]:\,\|u_{0}\|_{L_{2}(\mu_{Z})}\leq b}|E[w(X)\mathbb{A}^{-1}[u_{0}]]|
	\]
	
	And thus:
	\[
	diam(\mathbb{L}(\Theta_{b}))\geq2\sup_{u_{0}\in\mathbb{A}[L_{2}(\mu_{X})]:\,\|u_{0}\|_{L_{2}(\mu_{Z})}\leq b}|E[w(X)\mathbb{A}^{-1}[u_{0}](X)]|
	\]
	
	Now, given our choice of $\mathcal{U}$ and $\mathcal{H}$, the upper
	bound in (\ref{eq:boundsin}) can be written as:
	\[
	2\sup_{h\in L_{2}(\mu_{X}):\,\|\mathbb{A}[h]-g_{0}\|_{L_{2}(\mu_{Z})}\leq b}|E\big[w(X)\big(\mathbb{A}^{-1}[g_{0}]-h(X)\big)\big]|
	\]
	
	Reparameterizing $u_{0}=\mathbb{A}^{-1}[g_{0}]-h(X)$ and using (\ref{eq:boundsin})
	we get:
	\[
	diam(\mathbb{L}(\Theta_{b}))\leq2\sup_{u_{0}\in\mathbb{A}[L_{2}(\mu_{X})]:\,\|u_{0}\|_{L_{2}(\mu_{Z})}\leq b}|E[w(X)\mathbb{A}^{-1}[u_{0}](X)]|
	\]
	
	Thus we have (\ref{eq:ass2l}).
	
	Now let us now prove claim a. Suppose that for some $\alpha$, $w(X)=E[\alpha(Z)|X]$,
	equivalently $w=\mathbb{A}^{*}[\alpha]$. Then for any $g\in\mathbb{A}[L_{2}(\mu_{X})]$:
	\begin{align*}
		\langle w,\mathbb{A}^{-1}[g]\rangle_{L_{2}(\mu_{X})}=\langle\mathbb{A}^{*}[\alpha],\mathbb{A}^{-1}[g]\rangle_{L_{2}(\mu_{X})} & =\langle\alpha,\mathbb{A}\mathbb{A}^{-1}[g]\rangle_{L_{2}(\mu_{Z})}
	\end{align*}
	Which simply equals $\langle\alpha,g\rangle_{L_{2}(\mu_{Z})}$. Suppose
	$\|g\|_{L_{2}(\mu_{Z})}\leq b$, by Cauchy-Schwartz: 
	\begin{align*}
		|\langle w,\mathbb{A}^{-1}[g]\rangle_{L_{2}(\mu_{X})}|\leq\|g\|_{L_{2}(\mu_{Z})}\|\alpha\|_{L_{2}(\mu_{Z})}\leq b\|\alpha\|_{L_{2}(\mu_{Z})}
	\end{align*}
	Combining the above with (\ref{eq:ass1}) and (\ref{eq:ass2l}) gives
	the result.
	
	We now prove claim b. Given (\ref{eq:ass1}) and (\ref{eq:ass2l})
	it is enough to show that for any $b>0$:
	\[
	\sup_{u_{0}\in\mathbb{A}[L_{2}(\mu_{X})]:\,\|u_{0}\|_{L_{2}(\mu_{Z})}\leq b}|\langle w,\mathbb{A}^{-1}[u_{0}]\rangle_{L_{2}(\mu_{X})}|=\infty
	\]
	Suppose that for some $\bar{b}>0$ the worst-case asymptotic bias
	is finite. That is, there exists a scalar $c<\infty$ so that: 
	\[
	\sup_{u_{0}\in\mathbb{A}[L_{2}(\mu_{X})]:\,\|u_{0}\|_{L_{2}(\mu_{Z})}\leq\bar{b}}|\langle w,\mathbb{A}^{-1}[u_{0}]\rangle_{L_{2}(\mu_{X})}|\leq c
	\]
	Linearity of $\mathbb{A}^{-1}$ and the inner-product then implies
	that for all $b>0$: 
	\[
	\sup_{u_{0}\in\mathbb{A}[L_{2}(\mu_{X})]:\,\|u_{0}\|_{L_{2}(\mu_{Z})}\leq b}|\langle w,\mathbb{A}^{-1}[u_{0}]\rangle_{L_{2}(\mu_{X})}|\leq\frac{c}{\bar{b}}b
	\]
	Also by linearity, the LHS above equals: 
	\[
	\sup_{u_{0}\in\mathbb{A}[L_{2}(\mu_{X})]:\,\|u_{0}\|_{L_{2}(\mu_{Z})}\leq b}\langle w,\mathbb{A}^{-1}[u_{0}]\rangle_{L_{2}(\mu_{X})}
	\]
	Define the function $D:\mathbb{A}[L_{2}(\mu_{X})]\to\mathbb{R}$ by
	$D[g]=\langle w,\mathbb{A}^{-1}[g]\rangle_{L_{2}(\mu_{X})}$. Then:
	\[
	\sup_{u_{0}\in\mathbb{A}[L_{2}(\mu_{X})]:\,\|u_{0}\|_{L_{2}(\mu_{Z})}\leq b}|D[u_{0}]|\leq\frac{c}{\bar{b}}b
	\]
	By the Hahn-Banach theorem we can extend $D$ to a bounded linear
	function $\bar{D}$ defined on the whole space $L_{2}(\mu_{Z})$ which
	then satisfies: 
	\[
	\sup_{u_{0}\in L_{2}(\mu_{Z}):\,\|u_{0}\|_{L_{2}(\mu_{Z})}\leq b}|\bar{D}[u_{0}]|\leq\frac{c}{\bar{b}}b
	\]
	Since $\bar{D}$ is a bounded linear functional defined on a Hilbert
	space, by the Reisz representation theorem there exists an element
	$\alpha\in L_{2}(\mu_{Z})$ so that for all $g\in L_{2}(\mu_{Z})$,
	$\bar{D}[g]=\langle\alpha,g\rangle_{L_{2}(\mu_{Z})}$. And so for
	any $u_{0}\in\mathbb{A}[L_{2}(\mu_{X})]$: 
	\begin{align*}
		D[u_{0}]=\langle\alpha,u_{0}\rangle_{L_{2}(\mu_{Z})}=\langle\mathbb{A}^{*}[\alpha],\mathbb{A}^{-1}[u_{0}]\rangle_{L_{2}(\mu_{X})}=\langle w,\mathbb{A}^{-1}[u_{0}]\rangle_{L_{2}(\mu_{X})}
	\end{align*}
	Where the final equality follows by the definition of $D$.
	
	Since the equality above holds for all $u_{0}\in\mathbb{A}[L_{2}(\mu_{X})]$,
	for all $h\in L_{2}(\mu_{X})$ we have (using bi-linearity of the
	inner product) $\langle\mathbb{A}^{*}[\alpha]-w,h\rangle_{L_{2}(\mu_{X})}=0$.
	But we can set $h=\mathbb{A}^{*}[\alpha]-w$ and the above implies
	that the norm of $\mathbb{A}^{*}[\alpha]-w$ equals zero and so we
	have that $\mathbb{A}^{*}[\alpha]-w=0$. Or equivalently $w(X)=E[\alpha(Z)|X]$.
\end{proof}

\subsection{Proofs of Results in Section 2}

\newtheorem*{P2.1}{Proposition 2.1}

\begin{P2.1}
	Under Assumption 2.1, for any linear operator $\mathbb{L}:\mathcal{B}_X\to\mathbb{R}$, 
	$\Theta_{\mathbb{L}}=[\mathbb{L}[h]\in\mathbb{R}:h\in\Theta]$ is an interval.
\end{P2.1}
\begin{proof}
	The constraints (2.1) and (2.2) are clearly convex and therefore so is $\Theta$.
	Suppose $\theta',\theta''\in\Theta_{\mathbb{L}}$, then $\exists h',h''\in\mathcal{B}_{X}$ with $\mathbb{L}[h']=\theta'$, $\mathbb{L}[h'']=\theta''$ and $h',h''\in\Theta$. Since $\Theta$ is convex $h'''=\alpha h'+(1-\alpha)h''\in\Theta$ for any
	$\alpha\in[0,1]$. $\mathbb{L}$ is linear, so we then have $\alpha\theta'+(1-\alpha)\theta''=\mathbb{L}[h''']\in\Theta_{\mathbb{L}}$. So $\Theta_{\mathbb{L}}$ is a convex subset of $\mathbb{R}$, i.e., an interval. 
\end{proof}

\newtheorem*{2.1}{Lemma 2.1}
\begin{2.1}
	For each $j\in[J]$ let $\mathbb{M}_{j}$ be a linear operator from a vector space $\mathcal{S}$
	to the space of functions from $\mathcal{V}_j$ to $\mathbb{R}$ and let $b_j$ be a function from $\mathcal{V}_j$ to $\mathbb{R}$.\footnote{$[J]$ denote the set of natural numbers from $1$ to $J$.} Let $\mathbb{L}$ be a linear operator that maps from $\mathcal{S}$ to $\mathbb{R}$. Define:
	\begin{align*}
		\bar{\theta}_{\mathbb{L}} & =\sup_{s\in\mathcal{S}}\mathbb{L}[s] \text{ s.t. }  \mathbb{M}_{j}[s](v)\leq b_j(v),\forall v\in\mathcal{V}_j,\forall j\in[J]
	\end{align*}
	And $\underline{\theta}_{\mathbb{L}}$ is the infimum subject the same constraints.
	
	Consider some $\hat{s}\in\mathcal{S}$ so that for some $r>0$ and all $j\in[J]$, $\mathbb{M}_{j}[\hat{s}]\leq b_j(v)+r$,$\forall v \in \mathcal{V}_j$. Suppose there exists $\epsilon>0$ and $\tilde{s}\in\mathcal{S}$ so that for each $j\in[J]$, $\mathbb{M}_{j}[\tilde{s}](v)\leq b_j(v)-\epsilon$, $\forall v \in \mathcal{V}_j$. Then $\mathbb{L}[\hat{s}]-\bar{\theta}_{\mathbb{L}}\leq \frac{r}{\epsilon}(\bar{\theta}_{\mathbb{L}}-\mathbb{L}[\tilde{s}])\leq\frac{r}{\epsilon}(\bar{\theta}_{\mathbb{L}}-\underline{\theta}_{\mathbb{L}})$.
\end{2.1}
\begin{proof}
	Define $s^*\in\mathcal{S}$ by $s^*=\frac{\epsilon}{r+\epsilon}\hat{s}+\frac{r}{r+\epsilon}\tilde{s}$.
	By the linearity of $\mathbb{M}_{j}$, for  $j\in[J]$:
	\begin{align*}
		\mathbb{M}_{j}[s^*](v) & =\frac{\epsilon}{r+\epsilon}\mathbb{M}_{j}[\hat{s}](v)+\frac{r}{r+\epsilon}\mathbb{M}_{j}[\tilde{s}](v)\\
		& \leq\frac{\epsilon}{r+\epsilon}(r+b_j(v))-\frac{r}{r+\epsilon}(\epsilon-b_j(v)) =b_j(v)
	\end{align*}
	So $s^*$ satisfies the constraints in the problems for $\bar{\theta}_{\mathbb{L}}$
	and $\underline{\theta}_{\mathbb{L}}$, so we must have 
	$ \mathbb{L}[s^*]\leq\bar{\theta}_{\mathbb{L}}$. Substituting the definition of $s^*$ and using linearity of $\mathbb{L}$ we get:
	\[
	\mathbb{L}[\hat{s}]-\overline{y}\leq \frac{r}{r+\epsilon}\mathbb{L}[\hat{s}-\tilde{s}]= \frac{r}{r+\epsilon}(\bar{\theta}_{\mathbb{L}}-\mathbb{L}[\tilde{s}])+\frac{r}{r+\epsilon}(\mathbb{L}[\hat{s}]-\bar{\theta}_{\mathbb{L}})
	\]
	Subtracting $\frac{r}{r+\epsilon}(\mathbb{L}[\hat{s}]-\bar{\theta}_{\mathbb{L}})$ from both sides and then dividing by $\frac{\epsilon}{r+\epsilon}$ we get $\mathbb{L}[\hat{s}]-\bar{\theta}_{\mathbb{L}}\leq\frac{r}{\epsilon}(\bar{\theta}_{\mathbb{L}}-\mathbb{L}[\tilde{s}])$. Noting $\underline{\theta}_{\mathbb{L}}\leq \mathbb{L}[\tilde{s}]$ gives the conclusion. \end{proof}

\newtheorem*{2.2}{Lemma 2.2}
\begin{2.2}
	For each $j\in[J]$ let $\mathbb{M}_{j}$ be a linear operator from a vector space $\mathcal{S}$ with norm $\|\cdot\|_{\mathcal{S}}$ 
	to the space of functions from $\mathcal{V}_j$ to $\mathbb{R}$, and let $b_j$ be a function from $\mathcal{V}_j$ to $\mathbb{R}$. Let $\mathbb{L}$ be a linear operator from $\mathcal{S}$ to $\mathbb{R}$. Define:
	\begin{align*}
		\bar{\theta}_{\mathbb{L}} & =\sup_{s\in\mathcal{S}}\mathbb{L}[s] \text{ s.t. }  \mathbb{M}_{j}[s](v)\leq b_j(v),\forall v\in\mathcal{V}_j,\forall j\in[J]
	\end{align*}
	And let $\underline{\theta}_{\mathbb{L}}$ be the infimum subject to the same constraints.
	
	Consider a subset $\tilde{\mathcal{S}}\subseteq \mathcal{S}$. Suppose: i. There exists $\mathcal{J}\subseteq[J]$ so that for any $s\in\mathcal{S}$, if $\mathbb{M}_{j}[s](v)\leq b_j(v),\forall v\in\mathcal{V}_j, j\in\mathcal{J}$, then there exists $s'\in\tilde{\mathcal{S}}$ so that $\|s-s'\|_{\mathcal{S}}\leq \omega$ and $\mathbb{M}_{j}[s'](v)\leq \mathbb{M}_{j}[s](v),\forall v\in\mathcal{V}_j, j\in\mathcal{J}$. ii. For each $j\in\mathcal{J}^c=[J]\setminus\mathcal{J}$ the linear operators of the form $s\mapsto\mathbb{M}_j[s](v)$ for each $v\in\mathcal{V}_j$ have operator norm less than $c'$. iii. There exists $\epsilon>0$ and $\tilde{s}\in\mathcal{S}$ so that for each $j\in[J]$,
	$\mathbb{M}_{j}[\tilde{s}](v)\leq b_j(v)-\epsilon$,
	$\forall v\in\mathcal{V}_j$. iv. $\mathbb{L}$ is a continuous linear operator with operator norm $c''$.
	
	Under conditions i., ii., and iv., there exists $\tilde{s}\in\tilde{\mathcal{S}}$ so that for each $j\in[J]$, $\mathbb{M}_{j}[\tilde{s}](v)\leq b_j(v)+c'\omega-\epsilon$, $\forall v\in\mathcal{V}_j$.
	
	Define $\tilde{\bar{\theta}}_{\mathbb{L}}$ by:
	\begin{align*}
		\tilde{\bar{\theta}}_{\mathbb{L}} & =\sup_{s\in\tilde{\mathcal{S}}}\mathbb{L}[s] \text{ s.t. }  \mathbb{M}_{j}[s](v)\leq b_j(v),\forall v\in\mathcal{V}_j,\forall j\in[J]\end{align*}
	Under conditions i. ii., iii., and iv., $|\bar{\theta}_{\mathbb{L}}-\tilde{\bar{\theta}}_{\mathbb{L}}|\leq  \omega\big(\frac{c'}{\epsilon}(\bar{\theta}_{\mathbb{L}}-\underline{\theta}_{\mathbb{L}})+c''\big)$.
\end{2.2}
\begin{proof}
	By conditions i. and iii. there exists an $\tilde{s}'\in\tilde{\mathcal{S}}$ so that $\|\tilde{s}'-\tilde{s}\|_{\mathcal{S}}\leq \omega$ and for $j\in\mathcal{J}$, $\mathbb{M}_{j}[\tilde{s}'](v)\leq b_j(v)-\epsilon$ for each $v\in\mathcal{V}_j$. By ii.,  for any $s_1,s_2\in\mathcal{S}$ and all $v\in\mathcal{V}_j$, $|\mathbb{M}_{j}[s_1-s_2](v)|\leq c'\|s_1-s_2\|_{\mathcal{S}}$. Therefore for $j\in\mathcal{J}^c$, $|\mathbb{M}_{j}[\tilde{s}-\tilde{s}'](v)|\leq c'\omega$. By the triangle inequality and $\mathbb{M}_{j}[\tilde{s}](v)\leq b_j(v)-\epsilon$:
	\[
	\mathbb{M}_{j}[\tilde{s}'](v)\leq b_j(v)-\epsilon +c'\omega
	\]
	So the first claim of the lemma holds, now we prove the second claim. By definition of the supremum, for any $r>0$ there must be some $s'\in\mathcal{S}$ that satisfies the constraints in the problem for $\bar{\theta}_{\mathbb{L}}$ and achieves $\bar{\theta}_{\mathbb{L}}-\mathbb{L}[s']\leq r$. By the same reasoning used to establish the first claim there must exist a $s''\in\tilde{\mathcal{S}}$ with $\|s''-s'\|_{\mathcal{S}}\leq \omega$ and for $j\in\mathcal{J}$, $\mathbb{M}_{j}[s''](v)\leq b_j(v)$ and for $j\in\mathcal{J}^c$, $\mathbb{M}_{j}[s''](v)\leq b_j(v)+c'\omega$. Let $s^*=\frac{\epsilon-c'\omega}{\epsilon}s''+\frac{c'\omega}{\epsilon}\tilde{s}'$. Since $\tilde{\mathcal{S}}$ is linear $s^*\in\tilde{S}$. Moreover, by linearity of $M_j$ we have get that for $j\in\mathcal{J}$ $\mathbb{M}_{j}[s^*](v)\leq b_j(v)$ and for $j\in\mathcal{J}^c$:
	\[
	\mathbb{M}_{j}[s^*](v)\leq \frac{\epsilon-c'\omega}{\epsilon}(b_j(v)+c'\omega)+\frac{c'\omega}{\epsilon}(b_j(v)-\epsilon+c'\omega)=b_j(v)
	\]
	So  $s^*$ satisfies the constraints of the problems for $\bar{\theta}_{\mathbb{L}}$ and $\tilde{\bar{\theta}}_{\mathbb{L}}$ so $\bar{\theta}_{\mathbb{L}}\geq\tilde{\bar{\theta}}_{\mathbb{L}}\geq \mathbb{L}[s^*]$. Now, by iv., for any $s_1,s_2\in\mathcal{S}$, $|\mathbb{L}[s_1-s_2]|\leq c''\|s_1-s_2\|_{\mathcal{S}}$. Therefore using the definition of $s^*$:
	\[
	\mathbb{L}[s'-s^*]=\mathbb{L}[s'-s'']+\mathbb{L}[s'-s^*]\leq c''\omega+\frac{c'\omega}{\epsilon}\mathbb{L}[s'-\tilde{s}']
	\]
	And so $\bar{\theta}_{\mathbb{L}}-\mathbb{L}[s^*]\leq r + c''\omega+\frac{c'\omega}{\epsilon}\mathbb{L}[s'-\tilde{s}']$. Since $s'$ and $\tilde{s}'$ satisfy the constraints of the problems for $\bar{\theta}_{\mathbb{L}}$ and $\underline{\theta}_{\mathbb{L}}$, $\mathbb{L}[s']\leq\bar{\theta}_{\mathbb{L}}$ and $\mathbb{L}[\tilde{s}']\geq\underline{\theta}_{\mathbb{L}}$, and recall $\tilde{\bar{\theta}}_{\mathbb{L}}\geq \mathbb{L}[s^*]$ so (using linearity of $\mathbb{L}$) we get $\bar{\theta}_{\mathbb{L}}-\tilde{\bar{\theta}}_{\mathbb{L}}\leq r + c''\omega+\frac{c'\omega}{\epsilon}(\bar{\theta}_{\mathbb{L}}-\underline{\theta}_{\mathbb{L}})$. Since this holds for any $r>0$ we get  $\bar{\theta}_{\mathbb{L}}-\tilde{\bar{\theta}}_{\mathbb{L}}\leq c''\omega+\frac{c'\omega}{\epsilon}(\bar{\theta}_{\mathbb{L}}-\underline{\theta}_{\mathbb{L}})$. Finally, note that the constraints of the problem for $\tilde{\bar{\theta}}_{\mathbb{L}}$ are stronger than those for $\bar{\theta}_{\mathbb{L}}$ (because $\tilde{\mathcal{S}}\subseteq\mathcal{S}$) so $\bar{\theta}_{\mathbb{L}}-\tilde{\bar{\theta}}_{\mathbb{L}}\geq 0$. \end{proof}

\newtheorem*{2.3}{Lemma 2.3}
\begin{2.3}
	For each $j\in[J]$ let $\mathbb{M}_{j}$ be a linear operator from a vector space $\mathcal{S}$
	to the space of functions from $\mathcal{V}_j$ to $\mathbb{R}$. Let $\mathbb{L}$ be a linear operator that maps from $\mathcal{S}$ to $\mathbb{R}$. Define:
	\begin{align*}
		\bar{\theta}_{\mathbb{L}} & =\sup_{s\in\mathcal{S}}\mathbb{L}[s] \text{ s.t. }  \mathbb{M}_{j}[s](v)\leq b_j(v),\forall v\in\mathcal{V}_j,\forall j\in[J]
	\end{align*}
	And let $\underline{\theta}_{\mathbb{L}}$ be the infimum subject to the same constraints. For each $j$ let $\tilde{\mathcal{V}}_j$ be a finite subset of $\mathcal{V}_j$ and define:
	\begin{align*}
		\tilde{\bar{\theta}}_{\mathbb{L}} & =\sup_{s\in\mathcal{S}}\mathbb{L}[s] \text{ s.t. }  \mathbb{M}_{j}[s](v)\leq b_j(v),\forall v\in\tilde{\mathcal{V}}_j,\forall j\in[J]
	\end{align*}
	
	Suppose there exists $\tilde{s}\in\mathcal{S}$ and $\epsilon>0$
	so that for each $j\in[J]$ we have $\mathbb{M}_{j}[\tilde{s}](v)\leq b_j(v)-\epsilon$, $\forall v \in\mathcal{V}_j$. Suppose for each $j$, $\mathcal{V}_j$ is a subset of Euclidean space, $b_j(\cdot)$ is Lipschitz continuous with constant at most $\xi_j$, and that if $s\in\mathcal{S}$ satisfies the constraints in the problem for $\tilde{\bar{\theta}}_{\mathbb{L}}$ then 
	$\mathbb{M}_{j}[s](\cdot)$ is Lipschitz continuous with constant at most $\xi_j$. 
	Then $|\tilde{\bar{\theta}}_{\mathbb{L}}- \bar{\theta}_{\mathbb{L}}|\leq \frac{2\sum^J_{j=1}\xi_j D_j}{\epsilon}(\bar{\theta}_{\mathbb{L}}-\underline{\theta}_{\mathbb{L}})$, where $D_j= \underset{v\in\mathcal{V}_j}{\sup}\underset{v'\in\tilde{\mathcal{V}_j}}{\min}\|v-v'\|_2$.
\end{2.3}
\begin{proof}
	By definition of the supremum, for any $r>0$ there must be some $s'\in\mathcal{S}$ that satisfies the constraints in the problem for $\tilde{\bar{\theta}}_{\mathbb{L}}$ and achieves $\tilde{\bar{\theta}}_{\mathbb{L}}-\mathbb{L}[s']\leq r$. By supposition $\mathbb{M}_{j}[s']$ is Lipschitz continuous with constant at most $\xi_j$ and likewise for $b_j$. Thus $\mathbb{M}_{j}[s']-b_j$ is Lipschitz with constant at most $2\xi_j$. This implies:
	\begin{align*}
		&\sup_{v\in\mathcal{V}_j}\big(\mathbb{M}_{j}[s'](v)-b_j(v)\big)-\max_{v\in\tilde{\mathcal{V}_j}}\big(\mathbb{M}_{j}[s'](v)-b_j(v)\big)\\
		\leq &2\xi_j \sup_{v\in\mathcal{V}_j}\min_{v'\in\tilde{\mathcal{V}_j}}\|v-v'\|_2=2\xi_j D_j\leq 2\sum^J_{j=1}\xi_j D_j
	\end{align*}
	We know that $\mathbb{M}_{j}[s'](v)\leq b_j(v)$ for all $v\in\mathcal{V}_j$, so from the above we get that $\mathbb{M}_{j}[s'](v)\leq b_j(v) +2\sum^J_{j=1}\xi_j D_j$, $\forall v\in\mathcal{V}_j$. Applying Lemma 2.1 we then get $\mathbb{L}[s']- \bar{\theta}_{\mathbb{L}}\leq \frac{2\sum^J_{j=1}\xi_j D_j}{\epsilon}(\bar{\theta}_{\mathbb{L}}-\mathbb{L}[\tilde{s}])$ which implies  $\tilde{\bar{\theta}}_{\mathbb{L}}- \bar{\theta}_{\mathbb{L}}\leq r+ \frac{2\sum^J_{j=1}\xi_j D_j}{\epsilon}(\bar{\theta}_{\mathbb{L}}-\underline{\theta}_{\mathbb{L}})$. Since this holds for any $r$ we get $\tilde{\bar{\theta}}_{\mathbb{L}}- \bar{\theta}_{\mathbb{L}}\leq \frac{2\sum^J_{j=1}\xi_j D_j}{\epsilon}(\bar{\theta}_{\mathbb{L}}-\underline{\theta}_{\mathbb{L}})$. Finally, the constraints in the problem for $\tilde{\bar{\theta}}_{\mathbb{L}}$ are weaker than for $\bar{\theta}_{\mathbb{L}}$, so $\tilde{\bar{\theta}}_{\mathbb{L}}- \bar{\theta}_{\mathbb{L}}\geq 0 $.
\end{proof}

\newtheorem*{2.0}{Theorem 2.0}
\begin{2.0}
	Let $\mathcal{B}_X$ be the space of real valued functions on the support of $X$ equipped with the supremum norm.  Define $\bar{\theta}_{\mathbb{L}}$ and $\underline{\theta}_{\mathbb{L}}$ as follows:
	\begin{align*}
		&\bar{\theta}_{\mathbb{L}}  =\sup_{h\in\mathcal{B}_X}\mathbb{L}[h],\,  \underline{\theta}_{\mathbb{L}}   =\inf_{h\in\mathcal{B}_X}\mathbb{L}[h] \text{ s.t. }\mathbb{T}[h](x)\leq C(x) ,\forall x\in\mathcal{X}\\
		&\text{and } b_{1,k}(Z_k) \leq E[Y-h(X)|Z_k]\leq b_{2,k}(Z_k),\forall k\in[K]
	\end{align*}
	Define estimates $\hat{\bar{\theta}}_{\mathbb{L}}$ and $\hat{\underline{\theta}}_{\mathbb{L}}$ by:
	\begin{align*}
		&\hat{\bar{\theta}}_{\mathbb{L}}  =\max_{\beta\in\mathbb{R}^{K_n}}\mathbb{L}[\Phi_n']\beta,\,  \hat{\underline{\theta}}_{\mathbb{L}}   =\min_{\beta\in\mathbb{R}^{K_n}}\mathbb{L}[\Phi_n']\beta\\
		&\text{s.t. }\mathbb{T}[\Phi_n'](x)\beta\leq C(x) ,\forall x\in\mathcal{X}_n\\
		&\text{and } b_{1,k}(z) \leq \hat{g}_{n,k}(z)-\hat{\Pi}_{n,k}(z)'\beta\leq b_{2,k}(z),\forall z\in\mathcal{Z}_{n,k}, k\in[K]
	\end{align*}
	Suppose Assumptions 2.1.i, 2.2, and 2.4 hold, and for each $k\in[K]$ Assumption 2.3 holds with $g_0(Z)$ replaced by $g_{0,k}(Z_k)=E[Y|Z_k]$, $\Pi_n(Z)$ by $\Pi_{n,k}(Z)=E[\Phi_n(X)|Z_k]$, $\hat{g}_n$ by $\hat{g}_{n,k}$, and $\hat{\Pi}_n$ by $\hat{\Pi}_{n,k}$. Suppose for each $k\in[K]$, $b_{1,k}$ and $b_{2,k}$ are Lipschitz continuous with constant at most $\xi_n$. Finally suppose there exists $\tilde{h}\in\mathcal{B}_X$ that satisfies the constraints of the problem for $\bar{\theta}_\mathbb{L}$ with slack $\epsilon$, that is $\mathbb{T}[h](x)\leq C(x)-\epsilon$ for all $x\in\mathcal{X}$ and $b_{1,k}(Z_k)+\epsilon \leq E[Y-h(X)|Z_k]\leq b_{2,k}(Z_k)-\epsilon$ for all $k\in[K]$. Then uniformly over all linear functionals $\mathbb{L}$ with operator norm less than $c_\mathbb{L}$:
	\[
	|\bar{\theta}_{\mathbb{L}}-\hat{\bar{\theta}}_{\mathbb{L}}|=O_p(\kappa_n+ a_n+C_n\xi_n D_{1,n}+C_nG_n D_{2,n})
	\]
	And likewise for $|\underline{\theta}_{\mathbb{L}}-\underline{\bar{\theta}_{\mathbb{L}}}|$.
\end{2.0}

\begin{proof}
	Define $\tilde{a}_n$ by:
	\begin{equation}
		\tilde{a}_n=\max_k\{|{g}_{k}-{g}_{n,k}|_\infty + \sup_{\beta:\Phi_n'\beta\in\mathcal{H}}|(\hat{\Pi}_{n,k}-{\Pi}_{n,k})'\beta|_\infty\}\label{convr}
	\end{equation}
	By Assumption 2.3, $\tilde{a}_n=O_p(a_n)=o_p(1)$. We suppose, until stated otherwise, that $\kappa_n+2\tilde{a}_n\leq \epsilon/2$, by Assumptions 2.3 and 2.4 this holds with probability approaching $1$. We also suppose that $n$ is sufficiently large that $C_n\xi_n D_{1,n}\leq 1/2$, this must be true for sufficiently large $n$ by Assumption 2.5.iii. 
	
	Define $\bar{\theta}_{\mathbb{L}}^*$, and $\bar{\theta}_{\mathbb{L}}^\circ$ as follows:
	\begin{align*}
		&\bar{\theta}_{\mathbb{L}}^*  =\sup_{\beta\in\mathbb{R}^{K_n}}\mathbb{L}[\Phi_n']\beta \text{ s.t. } \mathbb{T}[\Phi_n'](x)\beta\leq C(x) ,\forall x\in\mathcal{X}\\
		&\text{and } b_{1,k}(Z_k) \leq E[Y-\Phi_{n,k}(X)'\beta|Z_k]\leq b_{2,k}(Z_k),\forall k\in[K]
	\end{align*}
	\begin{align*}
		&\bar{\theta}_{\mathbb{L}}^\circ  =\sup_{\beta\in\mathbb{R}^{K_n}}\mathbb{L}[\Phi_n']\beta\\
		& \text{s.t. }\mathbb{T}[\Phi_n'](x)\beta\leq C(x) ,\forall x\in\mathcal{X}\\
		&\text{and } b_{1,k}(Z_k) \leq \hat{g}_{n,k}(Z_j)-\hat{\Pi}_{n,k}(Z_k)'\beta\leq b_{2,k}(Z_k),\forall k\in[K]
	\end{align*}
	And let $\underline{\theta}_{\mathbb{L}}^*$ and $\underline{\theta}_{\mathbb{L}}^\circ$ be the respective infima subject to the same constraints. By the triangle inequality:
	\begin{equation}
		|\bar{\theta}_{\mathbb{L}}-\tilde{\bar{\theta}}_{\mathbb{L}}|\leq|\bar{\theta}_{\mathbb{L}}-\bar{\theta}_{\mathbb{L}}^*|+|\bar{\theta}_{\mathbb{L}}^*-\bar{\theta}_{\mathbb{L}}^\circ|+|\bar{\theta}_{\mathbb{L}}^\circ-\tilde{\bar{\theta}}_{\mathbb{L}}|\label{trianglee}
	\end{equation}
	
	We use Lemma 2.2 to bound $|\bar{\theta}_{\mathbb{L}}-\bar{\theta}_{\mathbb{L}}^*|$. To apply Lemma 2.2 let $\mathcal{S}=\mathcal{B}_X$ equipped with the supremum norm and we take $\tilde{\mathcal{S}}$ to be the functions of the form $\Phi_n'\beta$. For $j\in\mathcal{J}=[d]$ let $\mathcal{V}_j=\mathcal{X}$, let $\mathbb{M}_j[h](x)=[\mathbb{T}[h](x)]_{j}$, and $b_j[h](x)=[C(x)]_{j}$, where $[v]_j$ denotes the $j^{th}$ component of a vector $v$. Then Assumption 2.4 implies condition i. of Lemma 2.2 with $\omega=\kappa_n$. To establish condition ii., of Lemma 2.2, note that the linear operators of the form $h\mapsto E[h(X)|Z_{k}=z]$ and $h\mapsto -E[h(X)|Z_{k}=z]$ have operator norm of unity because $|E[h(X)|Z_k]|\leq \sup_{x\in\mathcal{X}}|h(x)|$. So condition ii., of the lemma holds with $c'=1$. Conditions iii. of Lemma 2.2 holds be supposition and iv. with $c''=c_\mathbb{L}$.  The second statement of Lemma 2.2 gives:
	\begin{equation}
		|\bar{\theta}_{\mathbb{L}}-\bar{\theta}_{\mathbb{L}}^*|\leq \kappa_n\big(\frac{1}{\epsilon}(\bar{\theta}_{\mathbb{L}}-\underline{\theta}_{\mathbb{L}})+c_{\mathbb{L}}\big)\label{term1}
	\end{equation}
	Moreover, the first statement in Lemma 2.2 tells us there exists $\tilde{\beta}$ so that for each $j\in[J]$, $\mathbb{M}_j[\Phi_n'\tilde{\beta}](v)\leq b_j(v)-\epsilon+\kappa_n$ for all $v\in\mathcal{V}_j$. We will use this when we employ Lemma 2.1 below.
	
	We now use Lemma 2.1 to bound $|\bar{\theta}_{\mathbb{L}}^*-\bar{\theta}_{\mathbb{L}}^\circ|$. By definition of the supremum we can find a $\beta^*$ and $\beta^\circ$ that respectively satisfy the constraints of the problems for $\bar{\theta}_{\mathbb{L}}^*$ and $\bar{\theta}_{\mathbb{L}}^\circ$ so that $\bar{\theta}_{\mathbb{L}}^*-\mathbb{L}[\Phi_n']\beta^*\leq r$ and $\bar{\theta}_{\mathbb{L}}^\circ-\mathbb{L}[\Phi_n']\beta^\circ\leq r$. Because they satisfy the constraints we have that $\Phi_n'\beta^*,\Phi_n'\beta^\circ\in\mathcal{H}$.  Recall $g_k(Z)=E[Y|Z_k]$ and $\Pi_{n,k}(Z_k)=E[\Phi_n(X)|Z_k]$. Since $\hat{\Pi}_{n,k}(Z_k)'\beta^\circ\leq \hat{g}_{n,k}(Z_k)-b_{1,k}(Z_k)$ we get from (\ref{convr}) that:
	\[E[\Phi_{n,k}(X)|Z_k]'\beta^\circ\leq E[Y|Z_k]-b_{1,k}(Z_k)+\tilde{a}_n\]
	And similarly:
	\[-E[\Phi_{n,k}(X)|Z_k]'\beta^\circ\leq-E[Y|Z_k]+b_{2,k}(Z_k)+\tilde{a}_n\]
	We established earlier that there exists $\tilde{\beta}$ with $\mathbb{M}_j[\Phi_n'\tilde{\beta}](v)\leq b_j(v)-\epsilon+\kappa_n$ for each $v\in\mathcal{V}_j$ and $j\in[J]$ (where $\mathbb{M}$ and $b_j$ were defined above). Because $\kappa_n+2\tilde{a}_n\leq \epsilon/2$ we have $\epsilon>\kappa_n$, so we can apply Lemma 2.1 to get $\mathbb{L}[\Phi_n\beta^\circ]-\bar{\theta}_{\mathbb{L}}^*\leq\frac{\tilde{a}_n}{\epsilon-\kappa_n}(\bar{\theta}_{\mathbb{L}}^*-\underline{\theta}_{\mathbb{L}}^*)$ and so $\bar{\theta}_{\mathbb{L}}^\circ-\bar{\theta}_{\mathbb{L}}^*\leq r+\frac{\tilde{a}_n}{\epsilon-\kappa_n}(\bar{\theta}_{\mathbb{L}}^*-\underline{\theta}_{\mathbb{L}}^*)$. Because this holds for each $r>0$: 
	\begin{equation}
		\bar{\theta}_{\mathbb{L}}^\circ-\bar{\theta}_{\mathbb{L}}^*\leq \frac{\tilde{a}_n}{\epsilon-\kappa_n}(\bar{\theta}_{\mathbb{L}}^*-\underline{\theta}_{\mathbb{L}}^*)\label{oneside}
	\end{equation}
	Now, again using (\ref{convr}) we get see that $\tilde{\beta}$ satisfies the constraints of the problem for $\theta^\circ$ with slack $\epsilon-\kappa_n-\tilde{a}_n$ for each $v\in\mathcal{V}_j$ and $j\in[J]$. Moreover, since $E[\Phi_{k}(X)|Z_k]'\beta^*\leq E[Y|Z_k]-b_{1,k}(Z_k)$ we 
	get from (\ref{convr}) that:
	\[\hat{\Pi}_{n,k}(Z_k)'\beta^*\leq \hat{g}_{n,k}(Z_k)-b_{1,k}(Z_k)+\tilde{a}_n\]
	\[-\hat{\Pi}_{n,k}(Z_k)'\beta^*\leq-\hat{g}_{n,k}(Z_k)+b_{2,k}(Z_k)+\tilde{a}_n\]
	Recall $\kappa_n+2\tilde{a}_n\leq \epsilon/2$ and so $\epsilon>\kappa_n+\tilde{a}_n$, then by Lemma 2.1 $\mathbb{L}[\Phi_n\beta^*]-\bar{\theta}_{\mathbb{L}}^\circ\leq\frac{\tilde{a}_n}{\epsilon-\kappa_n-\tilde{a}_n}(\bar{\theta}_{\mathbb{L}}^\circ-\mathbb{L}[\Phi_n'\tilde{\beta}])$. Using $\mathbb{L}[\Phi_n'\tilde{\beta}]\geq\underline{\theta}_{\mathbb{L}}^*$ and  $\bar{\theta}_{\mathbb{L}}^*-\mathbb{L}[\Phi_n']\beta^*\leq r$ we get: 
	\[\bar{\theta}_{\mathbb{L}}^*-\bar{\theta}_{\mathbb{L}}^\circ\leq r+ \frac{\tilde{a}_n}{\epsilon-\kappa_n-\tilde{a}_n}(\bar{\theta}_{\mathbb{L}}^\circ-\underline{\theta}_{\mathbb{L}}^*)\]
	Again, this holds for each $r>0$, and so $\bar{\theta}_{\mathbb{L}}^*-\bar{\theta}_{\mathbb{L}}^\circ\leq \frac{\tilde{a}_n}{\epsilon-\kappa_n-\tilde{a}_n}(\bar{\theta}_{\mathbb{L}}^\circ-\underline{\theta}_{\mathbb{L}}^*)$. Since $\epsilon>\kappa_n+2\tilde{a}_n$, adding $ \frac{\tilde{a}_n}{\epsilon-\kappa_n-\tilde{a}_n}(\bar{\theta}_{\mathbb{L}}^*-\bar{\theta}_{\mathbb{L}}^\circ)$ to both sides and then dividing both sides by $1- \frac{a_n}{\epsilon-\kappa_n-\tilde{a}_n}$ we get:
	$
	\bar{\theta}_{\mathbb{L}}^*-\bar{\theta}_{\mathbb{L}}^\circ\leq \frac{\tilde{a}_n}{\epsilon-\kappa_n-2\tilde{a}_n}(\bar{\theta}_{\mathbb{L}}^*-\underline{\theta}_{\mathbb{L}}^*)
	$. 
	Combining with (\ref{oneside}) and using that $\kappa_n+2\tilde{a}_n\leq \epsilon/2$ we get:
	\begin{equation}
		|\bar{\theta}_{\mathbb{L}}^*-\bar{\theta}_{\mathbb{L}}^\circ| \leq \frac{2\tilde{a}_n}{\epsilon}(\bar{\theta}_{\mathbb{L}}^*-\underline{\theta}_{\mathbb{L}}^*)\leq \frac{2\tilde{a}_n}{\epsilon}(\bar{\theta}_{\mathbb{L}}-\underline{\theta}_{\mathbb{L}})\label{term2}
	\end{equation}
	Where the final inequality holds because the constraints of the problems for $\bar{\theta}_{\mathbb{L}}^*$ and $\underline{\theta}_{\mathbb{L}}^*$ are stronger than those for $\bar{\theta}_{\mathbb{L}}$ and $\underline{\theta}_{\mathbb{L}}$. 
	
	Next, we apply Lemma 2.3 to bound $|\bar{\theta}_{\mathbb{L}}^\circ-\hat{\bar{\theta}}_{\mathbb{L}}|$. By Assumption 2.5.i, $\Phi_{n}$ is Lipschitz continuous with constant
	$\xi_{n}$, and so the function $\Phi_{n}'\beta$ is Lipschitz continuous
	with constant at most $\|\beta\|_{2}\xi_{n}$. Therefore:
	\[
	|\Phi_{n}'\beta|_{\infty}-\max_{x\in\mathcal{X}_n}|\Phi_{n}(x)'\beta|\leq\|\beta\|_{2}\xi_{n}D_{1,n}
	\]
	It follows that:
	\[
	\max_{x\in\mathcal{X}_n}|\Phi_{n}(x)'\beta|\leq\bar{c}\implies|\Phi_{n}'\beta|_{\infty}\leq\bar{c}+\|\beta\|_{2}\xi_{n}D_{1,n}
	\]
	By the definition of $C_{n}$:
	\[
	|\Phi_{n}'\beta|_{\infty}\leq\bar{c}+\|\beta\|_{2}\xi_{n}D_{1,n}\implies\|\beta\|_{2}\leq C_{n}(\bar{c}+\|\beta\|_{2}\xi_{n}D_{1,n})
	\]
	So if $C_{n}\xi_{n}D_{1,n}<1$  we must have:
	\[
	\max_{x\in\mathcal{X}_n}|\Phi_{n}(x)'\beta|\leq\bar{c}\implies\|\beta\|_{2}\leq\frac{C_{n}}{1-C_{n}\xi_{n}D_{1,n}}\bar{c}\leq 2C_n\bar{c}
	\]
	Where we used that $C_n\xi_n D_{1,n}\leq 1/2$. By Assumption 2.2, for any $\beta$ that satisfies the constraints in the problem for $\hat{\bar{\theta}}_{\mathbb{L}}$ we have $\max_{x\in\mathcal{X}_n}|\Phi_{n}(x)'\beta|\leq\bar{c}$, and so for any such $\beta$, $\|\hat{\beta}\|_{2}\leq 2C_{n}\bar{c}$. By Assumption 2.5.i, each row of $\mathbb{T}[\Phi_{n}']$
	is Lipschitz with constant at most $\xi_{n}$ and so $\mathbb{T}[\Phi_{n}']\beta$ is Lipschitz with constant at most $\xi_n\|\beta\|$. In all, if $\beta$ satisfies the constraints for the problem for $\hat{\bar{\theta}}_{\mathbb{L}}$ then $\mathbb{T}[\Phi_{n}']\beta$ is Lipschitz with constant at most  $2\xi_n C_n\bar{c}$. Further, by Assumption 2.5.i $C$ is Lipschitz continuous with constant at most $\xi_n$. Note that both $2\xi_n C_n\bar{c}$ and $\xi$ are less than $2\xi_n (1+ C_n\bar{c})$
	
	Now, by Assumption 2.5.ii, $b_{1,k}$, $b_{2,k}$, $\hat{g}_{n,k}$, and $\hat{\Pi}_{n,k}$ are Lipschitz with constant at most $G_n$.
	So if $\beta$ satisfies the constraints for the problem for $\hat{\bar{\theta}}_{\mathbb{L}}$ then $\hat{\Pi}_{n,k}'\beta$ and $-\hat{\Pi}_{n,k}'\beta$ are both Lipschitz with constant at most $2G_n C_n\bar{c}$,
	and both $\hat{g}_{n,k}-b_{1,j}$ and $-\hat{g}_{n,k}+b_{2,j}$ are Lipschitz with constant at most $2G_n$. Note that $2G_n C_n\bar{c}$ and $2G_n$ are less than $2G_n (1+ C_n\bar{c})$.
	
	So applying Lemma 2.3 we get:
	\[|\bar{\theta}_{\mathbb{L}}^\circ-\hat{\bar{\theta}}_{\mathbb{L}}|\leq \frac{4(1+ C_n\bar{c})(d\xi_n D_{1,n}+KG_n D_{2,n})}{\epsilon-\kappa_n-\tilde{a}_n}(\bar{\theta}_{\mathbb{L}}^\circ-\underline{\theta}_{\mathbb{L}}^\circ)\]
	Now, recall that $
	|\bar{\theta}_{\mathbb{L}}^*-\bar{\theta}_{\mathbb{L}}^\circ|\leq \frac{\tilde{a}_n}{\epsilon-\kappa_n-2\tilde{a}_n}(\bar{\theta}_{\mathbb{L}}-\underline{\theta}_{\mathbb{L}})
	$. We can apply the same reasoning to get $
	|\underline{\theta}_{\mathbb{L}}^*-\underline{\theta}_{\mathbb{L}}^\circ|\leq \frac{\tilde{a}_n}{\epsilon-\kappa_n-2\tilde{a}_n}(\bar{\theta}_{\mathbb{L}}-\underline{\theta}_{\mathbb{L}})
	$. To see this note that we can just replace $\mathbb{L}$ with $-\mathbb{L}$ in the problems for $\bar{\theta}_{\mathbb{L}}$, $\bar{\theta}_{\mathbb{L}}^*$, and $\bar{\theta}_{\mathbb{L}}^\circ$ and then $-\underline{\theta}_{\mathbb{L}}$, $-\underline{\theta}_{\mathbb{L}}^*$, and $-\underline{\theta}_{\mathbb{L}}^\circ$ are the respective suprema of the new problems. Then by the triangle inequality:
	\[
	\bar{\theta}_{\mathbb{L}}^\circ-\underline{\theta}_{\mathbb{L}}^\circ\leq\bar{\theta}_{\mathbb{L}}^*-\underline{\theta}_{\mathbb{L}}^*+\frac{2\tilde{a}_n}{\epsilon-\kappa_n-2\tilde{a}_n}(\bar{\theta}_{\mathbb{L}}-\underline{\theta}_{\mathbb{L}})\leq\frac{\epsilon-\kappa_n}{\epsilon-\kappa_n-2\tilde{a}_n}(\bar{\theta}_{\mathbb{L}}-\underline{\theta}_{\mathbb{L}})
	\]
	Where the second inequality follows because the problems for $\bar{\theta}_{\mathbb{L}}$ and $\underline{\theta}_{\mathbb{L}}$ have weaker constraints than for $\bar{\theta}_{\mathbb{L}}^*$ and $\underline{\theta}_{\mathbb{L}}^*$ so $\bar{\theta}_{\mathbb{L}}^*-\underline{\theta}_{\mathbb{L}}^*\leq\bar{\theta}_{\mathbb{L}}-\underline{\theta}_{\mathbb{L}}$.
	Combining: 
	\begin{align*}
		|\bar{\theta}_{\mathbb{L}}^\circ-\hat{\bar{\theta}}_{\mathbb{L}}|&\leq \frac{8(\epsilon-\kappa_n)(1+ C_n\bar{c})(d \xi_n D_{1,n}+KG_n D_{2,n})}{(\epsilon-\kappa_n-\tilde{a}_n)(\epsilon-\kappa_n-2\tilde{a}_n)}(\bar{\theta}_{\mathbb{L}}-\underline{\theta}_{\mathbb{L}})\\
		&\leq \frac{32}{\epsilon}(1+ C_n\bar{c})(d \xi_n D_{1,n}+KG_n D_{2,n})(\bar{\theta}_{\mathbb{L}}-\underline{\theta}_{\mathbb{L}})
	\end{align*}
	Where the second inequality uses $\kappa_n+2\tilde{a}_n\leq \epsilon/2$. Combining the above with (\ref{term1}), (\ref{term2}), and (\ref{trianglee}) we get:
	\begin{align*}
		|\bar{\theta}_{\mathbb{L}}-\hat{\bar{\theta}}_{\mathbb{L}}|\leq &\kappa_n\big(\frac{1}{\epsilon}(\bar{\theta}_{\mathbb{L}}-\underline{\theta}_{\mathbb{L}})+c_{\mathbb{L}}\big)+  \frac{2\tilde{a}_n}{\epsilon}(\bar{\theta}_{\mathbb{L}}-\underline{\theta}_{\mathbb{L}})\\
		&+ \frac{32}{\epsilon}(1+ C_n\bar{c})(d \xi_n D_{1,n}+KG_n D_{2,n})(\bar{\theta}_{\mathbb{L}}-\underline{\theta}_{\mathbb{L}})
	\end{align*}
	Note that linearity and boundedness of $\mathbb{L}$ and Assumption 2.2 imply that $(\bar{\theta}_{\mathbb{L}}-\underline{\theta}_{\mathbb{L}})\leq 2c_\mathbb{L}\bar{c}$ so we get:
	\begin{align*}
		|\bar{\theta}_{\mathbb{L}}-\hat{\bar{\theta}}_{\mathbb{L}}|\leq &\kappa_n c_{\mathbb{L}}\big(\frac{2\bar{c}}{\epsilon}+1\big)+ c_{\mathbb{L}} \frac{4\tilde{a}_n\bar{c}}{\epsilon}\\
		&+ c_\mathbb{L}\frac{64 \bar{c}}{\epsilon}(1+ C_n\bar{c})(d \xi_n D_{1,n}+KG_n D_{2,n})
	\end{align*}
	Now, $-\underline{\theta}_{\mathbb{L}}$ and $-\hat{\underline{\theta}}_{\mathbb{L}}$ are the respective suprema of the problems for $\bar{\theta}_{\mathbb{L}}$ and $\hat{\bar{\theta}}_{\mathbb{L}}$ with $\mathbb{L}$ replaced by $-\mathbb{L}$, and so we can repeat the same steps that bound $|\bar{\theta}_{\mathbb{L}}-\hat{\bar{\theta}}_{\mathbb{L}}|$ to get:
	\begin{align*}
		|\underline{\theta}_{\mathbb{L}}-\hat{\underline{\theta}}_{\mathbb{L}}|\leq &\kappa_n c_{\mathbb{L}}\big(\frac{2\bar{c}}{\epsilon}+1\big)+ c_{\mathbb{L}} \frac{4\tilde{a}_n\bar{c}}{\epsilon}\\
		&+ c_\mathbb{L}\frac{64 \bar{c}}{\epsilon}(1+ C_n\bar{c})(d \xi_n D_{1,n}+KG_n D_{2,n})
	\end{align*}
	Finally, recall we derived the above under the assumption that $\kappa_n+2\tilde{a}_n\leq \epsilon/2$. By Assumptions 2.3 and 2.4 this holds with probability approaching $1$. Further, by Assumption 2.3 $\tilde{a}_n=O_p(a_n)$ and by construction $C_n\geq1$. So in all we get:
	\[
	|\bar{\theta}_{\mathbb{L}}-\hat{\bar{\theta}}_{\mathbb{L}}|=O_p(\kappa_n+ a_n+C_n\xi_n D_{1,n}+C_nG_n D_{2,n})
	\]
	And likewise for $|\underline{\theta}_{\mathbb{L}}-\underline{\bar{\theta}_{\mathbb{L}}}|$. Since the above is based on an inequality that depends on $\mathbb{L}$ only through $c_{\mathbb{L}}$, the convergence in probability above must is uniform over all $\mathbb{L}$ that have the same constant $c_{\mathbb{L}}$.
\end{proof}

\begin{proof}[Theorem 2.1]
	In this case $-b_1(Z)=b_2(Z)=b$ so both are Lipschitz continuous with constant $0$. The first statement of the theorem then follows immediately from Theorem 2.0. For the final statement of the theorem note that the evaluation functionals $h\mapsto h(x)$ all have operator norm of unity, because $|h(x)|\leq |h|_\infty$.
\end{proof}

For the proofs below, we refer to a functional $\gamma$ defined as follows. Let $
\hat{Q}_{n}=\frac{1}{n}\sum_{i=1}^{n}\Psi_{n}(Z_{i})\Psi_{n}(Z_{i})'
$, then for each $h\in\mathcal{H}$ let $\hat{\gamma}[h]$ be the least squares
estimator defined by $
\hat{\gamma}[h]=\hat{Q}_{n}^{-1}\frac{1}{n}\sum_{i=1}^{n}\Psi_{n}(Z_{i})h(X_{i})
$. 
If $\hat{Q}_{n}$ is singular we take $\hat{\gamma}[h]$ to be zero
(under Assumption 2.5.i this event happens with probability approaching
zero).
\newtheorem*{L2.3}{Lemma 2.3}
\begin{L2.3}
	
	Suppose Assumptions 2.1.ii, 2.2, 2.6 and 2.7 hold. Then:
	\begin{align*}
		\sup_{h\in\mathcal{H}}|\Psi_{n}'\hat{\gamma}[h]-\mathbb{A}[h]|_{\infty} & =O_{p}\bigg(\frac{\bar{\xi}_{n}}{\sqrt{n}}[1+\sqrt{log(\ell_{n})}+\sqrt{L_{n}}R_{n}(s)]+R_{n}(s)\bigg)\\
		& =o_{p}(1)
	\end{align*}
	
\end{L2.3}
\begin{proof}
	The proof follows some steps of \citet{Belloni2015} Lemma 4.2 with
	alterations to achieve uniformity over $\mathcal{H}$. Recall $Q_{n}=E[\Psi_{n}(Z_{i})\Psi_{n}(Z_{i})']$. By Assumption
	2.6.i we can normalize 
	$
	E[\Psi_{n}(Z)\Psi_{n}(Z)']=I
	$ and the assumptions still hold with sequences
	$\bar{\xi}_{n}$ and $\ell_{n}$ (that satisfy Assumption 2.6) changed
	only by a factor independent of $n$. We maintain
	this normalization throughout. Define $\gamma$ by 
	$
	\gamma[h]=E[\Psi_{n}(Z_{i})h(X_{i})]
	$, $\epsilon_{i}[h]=h(X_{i})-E[h(X_{i})|Z_{i}]$, and $r_{i}[h]=E[h(X_{i})|Z_{i}]-\gamma[h]'\Psi_{n}(Z_{i})$.
	Then:
	\[
	h(X_{i})=\Psi_{n}(Z_{i})'\gamma[h]+\epsilon_{i}[h]+r_{i}[h]
	\]
	By Assumption 2.2, $\mathcal{H}$ contains functions bounded
	by $\bar{c}$, so:
	\begin{align*}
		|\epsilon_{i}[h]|  \leq|h(X_{i})|+|E[h(X_{i})|Z_{i}]| \leq2\bar{c}
	\end{align*}
	Note as well that:
	\begin{align}
		|\epsilon_{i}[h_{1}]-\epsilon_{i}[h_{2}]| & \leq|h_{1}(X_{i})-h_{2}(X_{i})|+|E[h_{1}(X_{i})-h_{2}(X_{i})|Z_{i}]|\nonumber \\
		& \leq2|h_{1}-h_{2}|_{\infty}\label{eq:elip}
	\end{align}
	To bound $|r_{i}[h]|$ we first show that if $|h|_{\infty}<\infty$
	then Assumption 2.6.ii implies $\mathbb{A}[h]$ is smooth. For some $m\leq s$ (where $s$ is the smoothness in Assumption 2.6.ii)
	let $\{q_{j}\}_{j=1}^{\dim(Z)}$ be a sequence of positive integers
	with $\sum_{j=1}^{\dim(Z)}q_{j}=m$. Let $D_{q}$ then be the partial
	derivative operator given by $D_{q}[f](z)=\frac{\partial^{m}}{\partial^{q_{1}}\partial^{q_{2}}...\partial^{q_{\dim(Z)}}}f(z)$
	for any sufficiently differentiable function $f:\,\mathcal{Z}\to\mathbb{R}$.
	Now, from Assumption 2.6.ii, $D_{q}[f_{X|Z}(x,\cdot)](z)$ is bounded
	uniformly over $x$ and $z$ by some constant $\bar{\ell}$. Then
	it follows by the dominated convergence theorem that:
	\begin{align*}
		|D_{q}\mathbb{A}[h](z)| & =|D_{q}\big[\int h(x)f_{X|Z}(x,z)dx\big](z)|\\
		& =|\int h(x)D_{q}\big[f_{X|Z}(x,\cdot)\big](z)dx| \leq|h|_{\infty}\bar{\ell}
	\end{align*}
	And for $m=s$, note that:
	\begin{align*}
		& |D_{q}\mathbb{A}[h](z_{1})-D_{q}\mathbb{A}[h](z_{2})|\\
		= & |\int h(x)D_{q}\big[f_{X|Z}(x,\cdot)\big](z_{1})dx-\int h(x)D_{q}\big[f_{X|Z}(x,\cdot)\big](z_{2})dx|\\
		\leq & |h|_{\infty}\sup_{x\in\mathcal{X}}|D_{q}\big[f_{X|Z}(x,\cdot)\big](z_{1})-D_{q}\big[f_{X|Z}(x,\cdot)\big](z_{2})|\\
		\leq & |h|_{\infty}\bar{\ell}\|z_{1}-z_{2}\|_{2}^{s}
	\end{align*}
	
	Where the last inequality again follows because Assumption 2.6.ii
	states $f_{X|Z}(x,\cdot)$ is of H{\"o}lder smoothness class $s$with
	constant $\bar{\ell}$ for all $x$. From the above we see that $\mathbb{A}[h]$
	is of H{\"o}lder smoothness class $s$with constant at most $|h|_{\infty}\bar{\ell}$.
	So we can apply Assumption 2.7.i and get that $
	|r_{i}[h]|\leq|h|_{\infty}\bar{\ell}R_{n}(s)
	$. 
	Since $|h|_{\infty}\leq\bar{c}$ for all $h\in\mathcal{H}$ (from
	Assumption 2.6.iii), we get for any $h\in\mathcal{H}$, 
	$
	|r_{i}[h]|\leq\bar{c}\bar{\ell}R_{n}(s)
	$. 
	Further, note that by linearity of $r_{i}$:
	\begin{equation}
		|r_{i}[h_{1}]-r_{i}[h_{2}]|=|r_{i}[h_{1}-h_{2}]|\leq|h_{1}-h_{2}|_{\infty}\bar{\ell}R_{n}(s)\label{rlip}
	\end{equation}
	Using Assumption 2.7.iv we can apply Rudelson's matrix LLN (\citet{Rudelson1999},
	\citet{Belloni2015} Lemma 6.2) to get:
	\begin{align*}
		E\|\hat{Q}_{n}-I\|_{op} =O(\sqrt{\frac{\bar{\xi}_{n}^{2}log(L_{n})}{n}}) =o(1)
	\end{align*}
	Which implies $\|\hat{Q}_{n}^{-\frac{1}{2}}\|_{op}=O_{p}(1)$, $\|\hat{Q}_{n}^{\frac {1}{2}}\|_{op}=O_{p}(1)$  and $\|\hat{Q}_{n}^{-1}\|_{op}=O_{p}(1)$. And also by Rudelson's LLN , $E\|\hat{Q}_{n}^{\frac {1}{2}}\|_{op}=O(1)$.
	
	It follows that $\hat{Q}_{n}$ is non-singular with probability approaching
	$1$ and so below we treat it as non-singular.
	Recall we define $\alpha_{n}:\,\mathcal{Z}\to\mathbb{R}^{L_{n}}$
	by $
	\alpha_{n}(z)=\frac{\Psi_{n}(z)}{\|\Psi_{n}(z)\|}
	$.  And note that:
	\begin{align}
		\sqrt{n}\alpha_{n}(z)'(\hat{\gamma}[h]-\gamma[h]) & =\alpha_{n}(z)'\hat{Q}_{n}^{-1}\frac{1}{\sqrt{n}}\sum_{i=1}^{n}\Psi_{n}(Z_{i})\epsilon_{i}[h]\label{splitee}\\
		& +\alpha_{n}(z)'\hat{Q}_{n}^{-1}\frac{1}{\sqrt{n}}\sum_{i=1}^{n}\Psi_{n}(Z_{i})r_{i}[h]\nonumber
	\end{align}
	Let us bound the first term on the RHS above. Note that:
	\[
	E\bigg[\alpha_{n}(z)'\hat{Q}_{n}^{-1}\frac{1}{\sqrt{n}}\sum_{i=1}^{n}\Psi_{n}(Z_{i})\epsilon_{i}[h]\bigg|Z_{1},...,Z_{n}\bigg]=0
	\]
	Let $(\eta_{1},...,\eta_{n})$ be a sample of iid Rademachers, independent
	of the data. By the symmetrization inequality:
	\begin{align}
		& E\bigg[\sup_{z\in\mathcal{Z},h\in\mathcal{H}}|\alpha_{n}(z)'\hat{Q}_{n}^{-1}\frac{1}{\sqrt{n}}\sum_{i=1}^{n}\Psi_{n}(Z_{i})\epsilon_{i}[h]|\,\bigg|Z_{1},...,Z_{n}\bigg]\nonumber \\
		\leq & 2E\bigg[E_{\eta}\big[\sup_{z\in\mathcal{Z},h\in\mathcal{H}}|\alpha_{n}(z)'\hat{Q}_{n}^{-1}\frac{1}{\sqrt{n}}\sum_{i=1}^{n}\Psi_{n}(Z_{i})\eta_{i}\epsilon_{i}[h]|\big]\bigg|Z_{1},...,Z_{n}\bigg]\label{eq:symm}
	\end{align}
	Where the inner expectation on the RHS above is over the Rademachers
	with $\epsilon_{i}$ and $Z_{i}$ for $i=1,...,n$ treated as fixed.
	
	Define the set $\mathcal{T}\subseteq\mathbb{R}^{n}$ by:
	\[
	\mathcal{T}=[t=(t_{1},...,t_{n})\in\mathbb{R}^{n}:\,t_{i}=\alpha_{n}(z)'\hat{Q}_{n}^{-1}\Psi_{n}(Z_{i})\epsilon_{i}[h],z\in\mathcal{Z},h\in\mathcal{H}]
	\]
	
	Define a norm $\|\cdot\|_{n,2}$ on $\mathbb{R}^{n}$ by $\|t\|_{n,2}^{2}=\frac{1}{n}\sum_{i=1}^{n}t_{i}^{2}$.
	By \citet{Dudley1967} there exists a universal constant $D$ so that:
	\begin{align*}
		& E_{\eta}\bigg[\sup_{z\in\mathcal{Z},h\in\mathcal{H}}|\alpha_{n}(z)'\hat{Q}_{n}^{-1}\frac{1}{\sqrt{n}}\sum_{i=1}^{n}\Psi_{n}(Z_{i})(\eta_{i}\epsilon_{i}[h])|\bigg]\\
		\leq & D\int_{0}^{\theta}\sqrt{log\mathcal{N}(\mathcal{T},\|\cdot\|_{n,2},\delta)}d\delta
	\end{align*}
	
	Where $\mathcal{N}(\mathcal{T},\|\cdot\|_{n,2},\delta)$ is the smallest
	number of radius-$\delta$ $\|\cdot\|_{n,2}$ balls needed to cover
	$\mathcal{T}$ and $\theta$ is the smallest upper bound on the $\|\cdot\|_{n,2}$-distance
	between any two points in $\mathcal{T}$. Using our bounds on $|\epsilon_{i}[h]|$:
	\[
	\theta=2\sup_{t\in\mathcal{T}}\|t\|_{n,2}\leq4\bar{c}\|\hat{Q}_{n}^{-\frac{1}{2}}\|_{op}
	\]
	Let $\tilde{A}_{i}[h]=\hat{Q}_{n}^{-1}\Psi_{n}(Z_{i})\epsilon_{i}[h]$.
	Using the Lipschitz constant for $\alpha_{n}$ given in Assumption
	2.7.ii, and using \ref{eq:elip} we get (for $h_{1},h_{2}\in\mathcal{H}$):
	\begin{align}
		& \bigg(\frac{1}{n}\sum_{i=1}^{n}|\alpha_{n}(z_{1})'\tilde{A}_{i}[h_{1}]-\alpha_{n}(z_{2})'\tilde{A}_{i}[h_{2}]|^{2}\bigg)^{\frac{1}{2}}\nonumber \\
		\leq & \bigg((\bar{c}\ell_{n})^{2}\|z_{1}-z_{2}\|_{2}^{2}+|h_{1}-h_{2}|_{\infty}^{2}\bigg)^{\frac{1}{2}}4\sqrt{2}\|\hat{Q}_{n}^{-\frac{1}{2}}\|_{op}\label{lipeqfull}
	\end{align}
	
	Define $b_{n}$ by $b_{n}=4\|\hat{Q}_{n}^{-\frac{1}{2}}\|_{op}$. Note that $\|\hat{Q}_{n}^{-\frac{1}{2}}\|_{op}=O_{p}(1)$ and so $b_{n}=O_{p}(1)$. Then from \ref{lipeqfull}:
	\[
	\mathcal{N}(\mathcal{T},\|\cdot\|_{n,2},\delta)\leq\mathcal{N}(\mathcal{Z},\|\cdot\|_{2},\frac{\delta}{\bar{c}\ell_{n}b_{n}})\mathcal{N}(\mathcal{H},|\cdot|_{\infty},\frac{\delta}{b_{n}})
	\]
	And so (using sub-additivity of the square root):
	\begin{align*}
		\int_{0}^{\theta}\sqrt{log\mathcal{N}(\mathcal{T},\|\cdot\|_{n,2},\delta)}d\delta & \leq\int_{0}^{\bar{c}b_{n}}\sqrt{log\mathcal{N}(\mathcal{Z},\|\cdot\|_{2},\frac{\delta}{\bar{c}\ell_{n}b_{n}})}d\delta\\
		& +\int_{0}^{\bar{c}b_{n}}\sqrt{log\mathcal{N}(\mathcal{H},|\cdot|_{\infty},\frac{\delta}{b_{n}})}d\delta
	\end{align*}
	
	Making the substitution $u=\frac{\delta}{\bar{c}b_{n}}$ into the
	first integral and $u=\frac{\delta}{b_{n}}$ into the second:
	\begin{align*}
		\int_{0}^{\theta}\sqrt{log\mathcal{N}(\mathcal{T},\|\cdot\|_{n,2},\delta)}d\delta & \leq\bar{c}b_{n}\int_{0}^{1}\sqrt{log\mathcal{N}(\mathcal{Z},\|\cdot\|_{2},\frac{u}{\ell_{n}})}du\\
		& +b_{n}\int_{0}^{\bar{c}}\sqrt{log\mathcal{N}(\mathcal{H},|\cdot|_{\infty},u)}du
	\end{align*}
	
	Because the integrand is decreasing in $u$:
	\[
	\int_{0}^{\bar{c}}\sqrt{log\mathcal{N}(\mathcal{H},|\cdot|_{\infty},u)}du\leq\max\{\bar{c},1\}\int_{0}^{1}\sqrt{log\mathcal{N}(\mathcal{H},|\cdot|_{\infty},u)}du
	\]
	
	By Assumption 2.7.iii the integral on the RHS above is finite. So
	let $\omega_{1}$ denote the finite constant on the RHS above.
	
	By Assumption 2.6.ii $\mathcal{Z}$ is bounded and it has dimension
	$\dim(Z)<\infty$ therefore for some constant $\omega_{2}$, $
	\mathcal{N}(\mathcal{Z},\|\cdot\|_{2},\delta)\leq\omega_{2}\bigg(\frac{1}{\delta}\bigg)^{\dim(Z)}
	$. 
	So we have:
	\begin{align*}
		\int_{0}^{\theta}\sqrt{log\mathcal{N}(\mathcal{T},\|\cdot\|_{n,2},\delta)}d\delta & \leq\bar{c}b_{n}\int_{0}^{1}\sqrt{log\omega_{2}+\dim(Z)log\bigg(\frac{\ell_{n}}{u}\bigg)}du\\
		& +b_{n}\omega_{1}
	\end{align*}
	Using sub-additivity of the square root we get from the above:
	\begin{align*}
		&\int_{0}^{\theta}\sqrt{log\mathcal{N}(\mathcal{T},\|\cdot\|_{n,2},\delta)}d\delta \\
		\leq & \bar{c}b_{n}\sqrt{log\omega_{2}}+b_{n}\omega_{1}\\
		+& \bar{c}b_{n}\sqrt{\dim(Z)}\bigg(\sqrt{log(\ell_{n})}+\int_{0}^{1}\sqrt{log\bigg(\frac{1}{u}\bigg)}du\bigg)\\
		=& O_{p}\bigg(1+\sqrt{log(\ell_{n})}\bigg)
	\end{align*}
	And so from \ref{eq:symm} and Markov's inequality:
	\[
	\sup_{z\in\mathcal{Z},h\in\mathcal{H}}|\alpha_{n}(z)'\hat{Q}_{n}^{-1}\frac{1}{\sqrt{n}}\sum_{i=1}^{n}\Psi_{n}(Z_{i})\epsilon_{i}[h]|=O_{p}\bigg(1+\sqrt{log(\ell_{n})}\bigg)
	\]
	
	Now we bound the second term on the RHS in \ref{splitee}. Define $\mathcal{S}^{L_{n}-1}=[\beta\in\mathbb{R}^{L_{n}}:\,\|\beta\|_{2}\leq1]$.
	Note that:
	\begin{align*}
		&|\alpha_{n}(z)'\hat{Q}_{n}^{-1}\frac{1}{\sqrt{n}}\sum_{i=1}^{n}\Psi_{n}(Z_{i})r_{i}[h]|\\
		\leq & \|\hat{Q}_{n}^{-1}\|_{op}\sup_{\beta\in\mathcal{S}^{L_{n}-1}}|\beta'\frac{1}{\sqrt{n}}\sum_{i=1}^{n}\Psi_{n}(Z_{i})r_{i}[h]|
	\end{align*}
	And we already have by the matrix LLN that $\|\hat{Q}_{n}^{-1}\|_{op}=O_{p}(1)$.
	Note as well that:
	$
	E\bigg[\beta'\frac{1}{\sqrt{n}}\sum_{i=1}^{n}\Psi_{n}(Z_{i})r_{i}[h]\bigg]=0
	$. 
	Again, let $(\eta_{1},...,\eta_{n})$ be a sample of iid Rademachers, independent
	of the data. By the symmetrization inequality:
	\begin{align}
		& E\bigg[\sup_{h\in\mathcal{H},\beta\in\mathcal{S}^{L_{n}-1}}|\beta'\frac{1}{\sqrt{n}}\sum_{i=1}^{n}\Psi_{n}(Z_{i})r_{i}[h]|\bigg]\nonumber \\
		\leq & 2E\bigg[E_{\eta}\big[\sup_{h\in\mathcal{H},\beta\in\mathcal{S}^{L_{n}-1}}|\beta'\frac{1}{\sqrt{n}}\sum_{i=1}^{n}\Psi_{n}(Z_{i})r_{i}[h]|\big]\bigg]\label{eq:symm-1}
	\end{align}
	Where the inner expectation on the RHS above is over the Rademachers
	with $r_{i}$ and $Z_{i}$ for $i=1,...,n$ treated as fixed. Define
	a new set $\mathcal{T}\subseteq\mathbb{R}^{n}$ by:
	\[
	\mathcal{T}=[t=(t_{1},...,t_{n})\in\mathbb{R}^{n}:\,t_{i}=\beta'\psi_{n,l}(Z_{i})r_{i}[h],h\in\mathcal{H},\beta\in\mathcal{S}^{L_{n}-1}]
	\]
	Again, by \cite{Dudley1967}:
	\begin{align*}
		& E_{\eta}\bigg[\sup_{h\in\mathcal{H},\beta\in\mathcal{S}^{L_{n}-1}}|\beta'\frac{1}{\sqrt{n}}\sum_{i=1}^{n}\Psi_{n}(Z_{i})(\eta_{i}r_{i}[h])|\bigg]\\
		\leq & D\int_{0}^{\theta}\sqrt{log\mathcal{N}(\mathcal{T},\|\cdot\|_{n,2},\delta)}d\delta
	\end{align*}
	Using our bound on $|r_{i}[h]|$:
	$
	\theta=2\sup_{t\in\mathcal{T}}\|t\|_{n,2}\leq2\bar{c}\bar{\ell}R_{n}(s)\|\hat{Q}_{n}^{\frac{1}{2}}\|_{op}
	$. 
	Using \ref{rlip} we get (for $h_{1},h_{2}\in\mathcal{H}$):
	\begin{align*}
		& \bigg(\frac{1}{n}\sum_{i=1}^{n}|\Psi_{n}(Z_{i})r_{i}[h_{1}]-\Psi_{n}(Z_{i})r_{i}[h_{2}]|^{2}\bigg)^{\frac{1}{2}}\\
		\leq &\sqrt{2}\bar{\ell}R_{n}(s)\|\hat{Q}_{n}^{\frac{1}{2}}\|_{op}\bigg(\bar{c}^{2}\|\beta_{1}-\beta_{2}\|_{2}^{2}+|h_{1}-h_{2}|_{\infty}^{2}\bigg)^{\frac{1}{2}}
	\end{align*}
	Define $c_{n}$ by $c_{n}=2\bar{\ell}R_{n}(s)\|\hat{Q}_{n}^{\frac{1}{2}}\|_{op}$. Then from \ref{lipeqfull}:
	\[
	\mathcal{N}(\mathcal{T},\|\cdot\|_{n,2},\delta)\leq\mathcal{N}(\mathcal{S}^{L_{n}-1},\|\cdot\|_{2},\frac{\delta}{\bar{c}c_{n}})\mathcal{N}(\mathcal{H},|\cdot|_{\infty},\frac{\delta}{b_{n}})
	\]
	And so (using sub-additivity of the square root) and making substitutions, much as before:
	\begin{align*}
		\int_{0}^{\theta}\sqrt{log\mathcal{N}(\mathcal{T},\|\cdot\|_{n,2},\delta)}d\delta & \leq\bar{c}c_{n}\int_{0}^{1}\sqrt{log\mathcal{N}(\mathcal{S}^{L_{n}-1},\|\cdot\|_{2},u)}du\\
		& +c_{n}\int_{0}^{\bar{c}}\sqrt{log\mathcal{N}(\mathcal{H},|\cdot|_{\infty},u)}du
	\end{align*}
	We have already shown that that the second integral on the RHS above
	is bounded by a finite constant $\omega_{1}$. The covering number
	of a unit ball in $\mathbb{R}^{L_{n}}$ satisfies for some universal constant $\omega_{3}>0$:
	\[
	\mathcal{N}(\mathcal{S}^{L_{n}-1},\|\cdot\|_{2},\delta)=\omega_{3}\bigg(\frac{1}{\delta}\bigg)^{L_{n}}
	\]
	Substituting this and using sub-additivity of the square root we get:
	\begin{align*}
		\int_{0}^{\theta}\sqrt{log\mathcal{N}(\mathcal{T},\|\cdot\|_{n,2},\delta)}d\delta & \leq\bar{c}c_{n}\sqrt{log\omega_{3}}+c_{n}\omega_{1}\\
		& +\bar{c}c_{n}\sqrt{L_{n}}\int_{0}^{1}\sqrt{log\bigg(\frac{1}{u}\bigg)}du
	\end{align*}
	And so:
	\begin{align*}
		& E\bigg[\sup_{h\in\mathcal{H},\beta\in\mathcal{S}^{L_{n}-1}}|\beta'\frac{1}{\sqrt{n}}\sum_{i=1}^{n}\Psi_{n}(Z_{i})r_{i}[h]|\bigg]\nonumber \\
		\leq&\bar{c}E[c_{n}]\sqrt{L_{n}}\int_{0}^{1}\sqrt{log\bigg(\frac{1}{u}\bigg)}du+\bar{c}E[c_{n}]\sqrt{L_{n}}\int_{0}^{1}\sqrt{log\bigg(\frac{1}{u}\bigg)}du\\
		=& O\big(R_{n}(s)\sqrt{L_{n}}\big)
	\end{align*}
	And so, by Markov's inequality the second term in the RHS of \ref{splitee} is $ O_p \big(R_{n}(s)\sqrt{L_{n}}\big)$. So in all:
	\begin{align*}
		\sup_{h\in\mathcal{H}}|\Psi_{n}'\hat{\gamma}[h]-\mathbb{A}[h]|_{\infty} & \leq\frac{\bar{\xi}_{n}}{\sqrt{n}}\sup_{z\in\mathcal{Z},h\in\mathcal{H}}|\sqrt{n}\alpha_{n}(z)'(\hat{\gamma}[h]-\gamma[h])|\\
		& +\sup_{h\in\mathcal{H}}|\Psi_{n}'\gamma[h]-\mathbb{A}[h]|_{\infty}\\
		& =O_{p}\bigg(\frac{\bar{\xi}_{n}}{\sqrt{n}}[1+\sqrt{log(\ell_{n})}+\sqrt{L_{n}}R_{n}(s)]+R_{n}(s)\bigg)
	\end{align*}
	
\end{proof}

\begin{proof}[Theorem 2.2]
	
	From Lemma 2.3:
	\begin{align*}
		\sup_{h\in\mathcal{H}}|\Psi_{n}'\hat{\gamma}[h]-\mathbb{A}[h]|_{\infty} & =O_{p}\bigg(\frac{\bar{\xi}_{n}}{\sqrt{n}}[1+\sqrt{log(\ell_{n})}+\sqrt{L_{n}}R_{n}(s)]+R_{n}(s)\bigg)
	\end{align*}
	
	Note that if $\Phi_{n}'\beta\in\mathcal{H}$ then $\hat{\Pi}_{n}'\beta=\Psi_{n}'\hat{\gamma}[\Phi_{n}'\beta]$
	and by definition $\Pi_{n}'\beta=\mathbb{A}[\Phi_{n}'\beta]$, and so:
	\[
	\sup_{\beta\in\mathbb{R}^{K_{n}}:\,\Phi_{n}'\beta\in\mathcal{H}}|(\hat{\Pi}_{n}-\Pi_{n})'\beta|_{\infty}\leq\sup_{h\in\mathcal{H}}|\Psi_{n}'\hat{\gamma}[h]-\mathbb{A}[h]|_{\infty}
	\]
	
	Applying the rates in Assumption 2.8.ii:
	\[
	\frac{\bar{\xi}_{n}}{\sqrt{n}}[1+\sqrt{log(\ell_{n})}+\sqrt{L_{n}}R_{n}(s)]+R_{n}=O\bigg(\frac{\sqrt{L_{n}}}{\sqrt{n}}\sqrt{log(L_{n})}+L_{n}^{-s_{0}(s)/\dim(Z)}]\bigg)
	\]
	
	Note that Assumption 2.6.i is identical to Assumption A.2 in \citet{Belloni2015},
	Assumption 2.7.i implies Assumption A.3 in \citet{Belloni2015} (with
	`$\ell_{k}c_{k}$' in their notation equal to $R_{n}(s)$), Assumption
	2.8.i implies Assumption A.4 in \citet{Belloni2015} and Assumption
	2.8.i and 2.7.ii imply Assumption A.5. Assumption A.1 in \citet{Belloni2015},
	that the data are iid, is assumed in our paper throughout. Therefore
	we can apply \citet{Belloni2015} Theorem 4.3 for $\hat{g}_{n}$,
	and under our other assumptions the rate simplifies to:
	\[
	|\hat{g}_{n}-g_{0}|_{\infty}=O\bigg(\frac{\sqrt{L_{n}}}{\sqrt{n}}\sqrt{log(L_{n})}+L_{n}^{-s_{0}(s)/\dim(Z)}]\bigg)
	\]
	Then the triangle inequality gives the result.
\end{proof}

\begin{proof}[Theorem 2.3]
	We note two facts about B-spline basis functions on an interval. Firstly,
	if $K_{n}\geq2$ then we can apply a linear transformation to $\Phi_{n}$
	so that the first two entries are $1$ and $x$. We will assume without
	loss of generality that $\Phi_{n}$ has been transformed in this way. Secondly, because the basis functions are at least third order, the
	vector of functions $\Phi_{n}$ is at least twice continuously differentiable
	at all but a finite set of points in its domain. For any $z$ at which
	the second derivatives are defined, let $\frac{\partial^{2}}{\partial x^{2}}\Phi_{n}(z)$
	denote the vector of second derivatives of each component of $\Phi_{n}$
	at $z$. We can define this function elsewhere by right-continuity,
	that is if the second derivatives are not defined for some $z$ then
	let $\frac{\partial^{2}}{\partial x^{2}}\Phi_{n}(z)=\lim_{z'\downarrow z}\frac{\partial^{2}}{\partial x^{2}}\Phi_{n}(z')$.
	$\frac{\partial^{2}}{\partial x^{2}}\Phi_{n}$ is an invertible linear
	transformation of the vector of $(s_{0}-2)^{th}$-order B-spline basis
	functions with the same knot points as the original spline basis.
	It then follows from the approximation properties of splines that
	there is a sequence $\tilde{\kappa}_{n}=O(K_{n}^{-1})$ so that for
	any Lipschitz continuous function $h$ defined on $[a,b]$ with Lipschitz
	constant $L$, there exists $\beta\in\mathbb{R}^{K_{n}}$ with $
	|\frac{\partial^{2}}{\partial x^{2}}\Phi_{n}'\beta-h|_{\infty}\leq L\tilde{\kappa}_{n}
	$. 
	See for example \citet{DeVore1993}.
	
	First we show that any $h\in\mathcal{H}$ is Lipschitz continuous
	with constant at most $C=\frac{2}{(b-a)}+c(b-a)$.
	Consider some $h\in\mathcal{H}$, since $h$ is twice differentiable
	its first derivative also exists and so for any $x_{1},x_{2}\in[a,b]$:
	\[
	h(x_{2})-h(x_{1})=\int_{0}^{1}(x_{2}-x_{1})\frac{\partial}{\partial x}h(x_{1}+t(x_{2}-x_{1}))dt
	\]
	Because $|\frac{\partial^{2}}{\partial x^{2}}h|_{\infty}\leq c$ the above implies:
	\[
	|h(x_{2})-h(x_{1})|+c|x_{2}-x_{1}|^{2}\geq |(x_{2}-x_{1})\frac{\partial}{\partial x}h(x_{1})|
	\]
	Substituting $x_{1}=\frac{1}{2}(a+b)$ and $x_{2}=b$ into the above we get:
	\[
	|h(x_{2})-h(x_{1})|+c\frac{1}{4}(b-a)^{2}\geq\frac{1}{2}(b-a)|\frac{\partial}{\partial x}h(x_{1})|
	\]
	Since $h(x)\in[0,1]$, $|h(x_{1})-h(x_{2})|\leq1$ and so we get:
	\[
	|\frac{\partial}{\partial x}h(x_{1})|\leq\frac{2}{(b-a)}+c\frac{1}{2}(b-a)
	\]
	And again, since $|\frac{\partial^{2}}{\partial x^{2}}h|_{\infty}\leq c$
	we then have that for any $x\in[a,b]$:
	\[
	|\frac{\partial}{\partial x}h(x)|\leq\frac{2}{(b-a)}+c(b-a)
	\]
	Denote the RHS by $C$. We thus have that any $h\in\mathcal{H}$ is
	Lipschitz continuous with constant at most $C$.
	
	Let the functional $P$ extend a function $h\in\mathcal{H}$ to a function defined on $\mathbb{R}$ as follows: $
	P[h](x)=
	h(x)  \text{ if }x\in[a,b]$, $P[h](x)=
	h(a) \text{ if }x<a $, $P[h](x)=
	h(b)  \text{ if }x>b$.
	For $r\in[0,b-a]$ define the linear operator $M_{r}:\,\mathcal{B}_{X}\to\mathcal{B}_{X}$
	by (for all $x\in[a,b]$) $
	M_{r}[h](x)=\frac{\int_{x-r}^{x+r}P[h](y)dy}{2r}
	$. 
	If $|h|_{\infty}<\infty$ then it is easy to see that $
	|M_{r}[h]|_{\infty}\leq|h|_{\infty}
	$. 
	With a substitution we get $
	M_{r}[h](x)=\frac{\int_{-r}^{r}P[h](y+x)dy}{2r}
	$. 
	For any $h\in\mathcal{H}$, $\frac{\partial^{2}}{\partial x^{2}}h$
	is uniformly bounded by $c$. So we can use the dominated convergence theorem to get that for any $h\in\mathcal{H}$:
	\begin{align*}
		\frac{\partial^{2}}{\partial x^{2}}M_{r}[h](x)  =&\frac{1}{2r}\int_{-r}^{r}\frac{\partial^{2}}{\partial x^{2}}P[h](y+x)dy=\frac{1}{2r}\int_{\max\{a,x-r\}}^{\min\{b,x+r\}}\frac{\partial^{2}}{\partial y^{2}}h(y)dy
	\end{align*}
	
	Where we have used that the second derivative of $P[h]$ is zero outside of $[a,b]$. So for any $h\in\mathcal{H}$, $
	|\frac{\partial^{2}}{\partial x^{2}}M_{r}[h](x)|\leq c
	$.
	
	Assuming without loss of generality that $x_{1}\geq x_{2}$, we see
	for any $h\in\mathcal{H}$:
	\begin{align*}
		&|\frac{\partial^{2}}{\partial x^{2}}M_{r}[h](x_{1})-\frac{\partial^{2}}{\partial x^{2}}M_{r}[h](x_{2}) |\\
		=& \frac{1}{2r}\bigg|\int_{\max\{x_{1}-r,x_{2}+r\}}^{x_{1}+r}\frac{\partial^{2}}{\partial y^{2}}P[h](y)dy-\int_{x_{2}-r}^{\min\{x_{2}+r,x_{1}-r\}}\frac{\partial^{2}}{\partial y^{2}}P[h](y)dy\bigg|\\
		\leq& \frac{c}{r}|x_{1}-x_{2}|
	\end{align*}
	Where the last inequality follows because $\frac{\partial^{2}}{\partial x^{2}}P[h]$
	is uniformly bounded by $c$ for $h\in\mathcal{H}$. So we have established that for any $h\in\mathcal{H}$ the function
	$M_{r}[h]$ has second derivatives that are Lipschitz continuous with
	Lipschitz constant at most $\frac{c}{r}$.
	
	And now note that, because $h\in\mathcal{H}$ is Lipschitz continuous
	with constant $C$ for any $x\in[a,b]$, $
	|M_{r}[h](x)-h(x)|\leq rC
	$. 
	
	Now, recall the properties of B-splines discussed at the beginning of this proof. Because $\frac{\partial^{2}}{\partial x^{2}}M_{r}[h]$ is Lipschitz
	continuous with constant $\frac{c}{r}$, there is some $\beta\in\mathbb{R}^{K_{n}}$
	with $
	|\frac{\partial^{2}}{\partial x^{2}}\Phi_{n}'\beta-\frac{\partial^{2}}{\partial x^{2}}M_{r}[h]|_{\infty}\leq\frac{c}{r}\tilde{\kappa}_{n}
	$. 
	In which case $|\frac{\partial^{2}}{\partial x^{2}}\Phi_{n}'\beta|_{\infty}\leq c(1+\frac{\tilde{\kappa}_n}{r})$. And so letting $\tilde{\beta}=(1+\frac{\tilde{\kappa}_{n}}{r})^{-1}\beta$ we get $|\frac{\partial^{2}}{\partial x^{2}}\Phi_{n}'\tilde{\beta}|_{\infty}\leq c$, and from the triangle inequality:
	\begin{align*}
		|\frac{\partial^{2}}{\partial x^{2}}\Phi_{n}'\tilde{\beta}-\frac{\partial^{2}}{\partial x^{2}}M_{r}[h]|_{\infty}\leq&\frac{c}{r}\tilde{\kappa}_{n}+\bigg(\frac{\tilde{\kappa}_n/r}{1+(\tilde{\kappa}_n/r)}\bigg)|\frac{\partial^{2}}{\partial x^{2}}\Phi_{n}'\beta|_{\infty}
	\end{align*}
	Which is less than $2\frac{c}{r}\tilde{\kappa}_{n}$, and so:
	\begin{align*}
		|\frac{\partial}{\partial x}\Phi_{n}'\tilde{\beta}-\frac{\partial}{\partial x}M_{r}[h]|_{\infty} & \leq\bigg|\frac{\partial}{\partial x}\Phi_{n}(\frac{a+b}{2})'\tilde{\beta}-\frac{\partial}{\partial x}M_{r}[h](\frac{a+b}{2})\bigg|\\
		& +(b-a)\frac{c}{r}\tilde{\kappa}_{n}
	\end{align*}
	Where again $\frac{\partial}{\partial x}\Phi_{n}$ is defined at points
	at which the derivative is undefined by right continuity. The inequality above then implies:
	\begin{align}
		|\Phi_{n}'\tilde{\beta}-M_{r}[h]|_{\infty} & \leq|\Phi_{n}(\frac{a+b}{2})'\tilde{\beta}-M_{r}[h](\frac{a+b}{2})|\label{eqqqu}\\
		& +\frac{b-a}{2}\bigg|\frac{\partial}{\partial x}\Phi_{n}(\frac{a+b}{2})'\tilde{\beta}-\frac{\partial}{\partial x}M_{r}[h](\frac{a+b}{2})\bigg|\nonumber\\
		& +\bigg(\frac{b-a}{2}\bigg)^{2}2\frac{c}{r}\tilde{\kappa}_{n}\nonumber
	\end{align}
	But the first two entries of $\Phi_{n}(x)$ are $1$ and $x$, so
	let $\beta^{*}$ be identical to $\tilde{\beta}$ aside from its first
	two entries $\beta_{1}^{*}$ and $\beta_{2}^{*}$ which are given
	by:
	\[
	\beta_{2}^{*}=\frac{\partial}{\partial x}M_{r}[h](\frac{a+b}{2})-\frac{\partial}{\partial x}\Phi_{n}(\frac{a+b}{2})'\tilde{\beta}
	\]
	And:
	\[
	\beta_{1}^{*}=M_{r}[h](\frac{a+b}{2})-\Phi_{n}(\frac{a+b}{2})'\tilde{\beta}-(\frac{a+b}{2})\beta_{2}^{*}
	\]
	Then $\frac{\partial^{2}}{\partial x^{2}}\Phi_{n}'\beta^{*}=\frac{\partial^{2}}{\partial x^{2}}\Phi_{n}'\tilde{\beta}$, and so $|\frac{\partial^{2}}{\partial x^{2}}\Phi_{n}'\beta^{*}|_{\infty}\leq c$. Moreover, repeating the same steps used to get (\ref{eqqqu}) above we see: 
	\[
	|\Phi_{n}'\beta^{*}-M_{r}[h]|_{\infty}\leq\bigg(\frac{b-a}{2}\bigg)^{2}2\frac{c}{r}\tilde{\kappa}_{n}
	\]
	We already showed $|M_{r}[h](x)-h(x)|\leq rC$ and so
	\[
	|\Phi_{n}'\beta^{*}-h|_{\infty}\leq\bigg(\frac{b-a}{2}\bigg)^{2}2\frac{c}{r}\tilde{\kappa}_{n}+rC
	\]
	And so setting $r=\sqrt{\tilde{\kappa}_{n}}$:
	\[
	|\Phi_{n}'\beta^{*}-h|_{\infty}\leq\bigg[\bigg(\frac{b-a}{2}\bigg)^{2}2c+C\bigg]\sqrt{\tilde{\kappa}_{n}}
	\]
	Since we found such a $\beta^{*}$ for any $h$ the result follows. \end{proof}

\end{document}